\def \beq {\begin{equation}}
\def \eeq {\end{equation}}
\newtheorem{theorem}{Theorem}
\newtheorem{lemma}[theorem]{Lemma}
\newtheorem{Condition}[theorem]{Condition}
\newtheorem{assumption}[theorem]{Assumption}
\newtheorem{cor}[theorem]{Corollary}
\newtheorem{prop}[theorem]{Proposition}
\numberwithin{equation}{section}
\newtheorem{corollary}[theorem]{Corollary}
\newtheorem{proposition}[theorem]{Proposition}
\newenvironment{example}{
\bigskip%
\par%
    \refstepcounter{theorem}%
  \noindent \textbf{Example \arabic{theorem}.}%
  \quad
}{%
\hfill$\clubsuit$%
\bigskip%
\par%
}
\begin{document}
\title{Dynamics of Order Positions and Related Queues in a Limit Order Book}
\author{Xin Guo\thanks{
Department of Industrial Engineering and Operations Research, University of California at Berkeley, Berkeley, CA 94720-1777.
Email: xinguo@berkeley.edu. Tel: 1-510-642-3615.} \ \ \ \
Zhao Ruan\thanks{Department of Industrial Engineering and Operations Research,
University of California at Berkeley, Berkeley, CA 94720-1777. Email: zruan@berkeley.edu. }\ \ \ \
Lingjiong Zhu\thanks{School of Mathematics, University of Minnesota, Minneapolis, MN 55455. Email: zhul@umn.edu.}}
\date{\today}
\maketitle

\begin{abstract}
  Order positions are key variables in algorithmic trading. This paper
  studies the limiting behavior of order positions and related queues
  in a limit order book. In addition to the fluid and diffusion limits
  for the processes, fluctuations of order positions and related
  queues around their fluid limits are analyzed.  As a corollary,
  explicit analytical expressions for various quantities of interests
  in a limit order book are derived.
\end{abstract}

\section{Introduction}

In modern financial markets, automatic and electronic order-driven
trading platforms have largely replaced the traditional floor-based
trading; orders arrive at the exchange and wait in the {\it Limit
  Order Book (LOB)} to be executed. There are two types of buy/sell
orders for market participants to post, namely, market orders and
limit orders. A \emph{limit order} is an order to trade a certain
amount of security (stocks, futures, etc.) at a given specified
price. Limit orders are collected and posted in the LOB, which
contains the quantities and the price at each price level for all
limit buy and sell orders.  A \emph{market order} is an order to
buy/sell a certain amount of the equity at the best available price in
the LOB; it is then matched with the best available price and a trade
occurs immediately and the LOB is updated accordingly.  A limit order
stays in the LOB until it is executed against a market order or until
it is canceled; cancellation is allowed at any time without penalty.

The availability of both market orders and limit orders presents
market participants opportunities to manage and balance risk and
profit.  As a result, one of the most rapidly growing research areas
in financial mathematics has been centered around modeling LOB
dynamics and/or minimizing the inventory/execution risk with
consideration of the microstructure of LOB. A few examples include
\iffalse Alfonsi et al.~\cite{alfonsi2010optimale}, Alfonsi
et al.~\cite{alfonsi2012order}, Avellaneda and Stoikov~\cite{avellaneda2008high},
Bayraktar and Ludkovski~\cite{bayraktar2014liquidation},
Cartea and Jaimungal~\cite{cartea2013modelling},
Cartea et al.~\cite{cartea2014buy},
Cont et al.~\cite{cont2010stochastic},
Cont and Kukanov~\cite{cont2013optimal},
Gu\'eant et al.~\cite{gueant2012optimal},
Guilbaud and Pham~\cite{guilbaud2013optimal},
Guo~\cite{guo2013tutorials},
Kirilenko et al.~\cite{kirilenko2013multiscale},
Laruelle et al.~\cite{laruelle2011optimal},
Maglaras et al.~\cite{maglaras2012optimal},
Predoiu et al.~\cite{predoiu2011optimal},
Shreve et al.~\cite{shreve2014diffusion}, and
Veraarta~\cite{veraart2010optimal}.  \fi
%
\cite{alfonsi2010optimale, alfonsi2012order, avellaneda2008high,
  bayraktar2014liquidation, cartea2013modelling, cartea2014buy,
  cont2010stochastic, cont2013optimal, gueant2012optimal,
  guilbaud2013optimal, guo2013tutorials, kirilenko2013multiscale,
  laruelle2011optimal, maglaras2012optimal, predoiu2011optimal,
  shreve2014diffusion, veraart2010optimal}.

At the core of these various optimization problems is the trade-off
between the inventory risk from unexecuted limit orders and the cost
from market orders.  While it is straightforward to calculate the costs
and fees of market orders, it is much harder to assess the inventory
risk from limit orders.  Critical to the analysis is the dynamics of
an order position in an LOB.  Because of the price-time priority
(i.e., best-priced order first and first-in-first-out) in most
exchanges in accordance with regulatory guidelines, a better order
position means less waiting time and a higher probability of the order
being executed.  In practice, reducing low latency in trading and
obtaining good order positions is one of the driving forces behind the
technological race among high-frequency trading firms.  Recent
empirical studies by Moallemi and Yuan~\cite{moallemi2015value} show
that values of order positions (if appropriately defined) have the
same order of magnitude of a half spread.  Indeed, analyzing order
positions is one of the key components for studying algorithmic
trading strategies.
%However, this topic has not been studied much with the exception of
%some limited analysis on the probability of an order being executed as in Hult and Kiessling \cite{hult2010algorithmic} and
%Cont, Stoikov, and Talreja \cite{cont2010stochastic}. % The probability
%that a particular order in a limit order queue is executed depends on
%many factors including the queue length, its position in the LOB, the
%frequency of price changes, the arrival rate of market orders, the
%cancellation of orders, and the distribution of order sizes (see
%Figure 1).
Knowing both the order position and the related queue lengths not only
provides valuable insights into the trading direction for the
``immediate'' future but also provides additional risk assessment for
the order --- if it were good to be in the front of any queue, then it
would be even better to be in the front of a {\it long} queue.
Therefore, it is important to understand and analyze the dynamics of
order positions together with their related queues.  This is the focus
of our work.
\begin{figure}[htb]
\label{Figure LOB}
\begin{center}
\includegraphics[width=3.5in,height=3in]{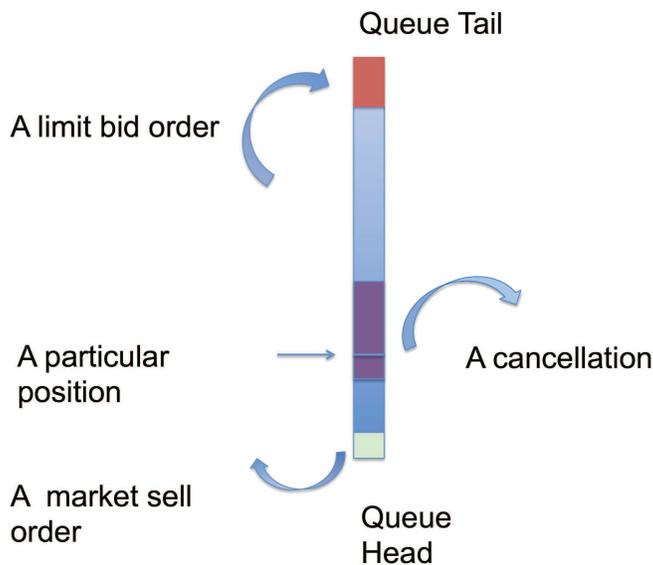}
\end{center}
\caption{Orders happened in the best bid queue.}
\end{figure}

\paragraph{Our contributions.}
The dynamics of an order position in a queue will be affected by both
the market orders and the cancellations, and the dynamics of its
relative position in a queue will be affected by the limit orders as
well (see Figure 1). Without loss of generality, we will focus on an
order position in the best bid queue along with the best bid and ask
queues.  Order positions in other queues will be similar and simpler
because of the absence of market orders.

First, we derive the fluid limit for the order positions and related
best bid and ask queues; in a sense, this is a first order
approximation to the processes.  We show (Theorem~\ref{FluidThm} and
Theorem~\ref{FluidThmGeneralized}) that the rate of the order position
approaching zero is proportional to the mean of order arrival
intensities and to the summation of the average size of market orders
and the ``modified'' average size of cancellation orders in the queue;
this modification depends on different assumptions on order
cancellations.  We also derive the (average) time it takes for the
order position to be executed.  The derivation is via two steps. The
first step is to establish the functional strong law of large numbers
for the related bid/ask queues; this is straightforward.  The second
step is intuitive but requires a delicate analysis involving passing
the convergence relation of stochastic processes in their
corresponding c\`adl\`ag space with the Skorokhod topology to their
integral equations.

Next, we proceed to the second order approximation for order positions
and related queues.  The first step is to establish appropriate forms
of the diffusion limit for the bid and ask queues.  We establish a
multi-variate functional central limit theorem (FCLT) using ideas from
random fields.  Under appropriate technical conditions, we show
(Theorem~\ref{DiffThm}) that the queues are two-dimensional Brownian
motion with mean and covariance structure explicitly given in terms of
the statistics of order sizes and order arrival intensities.  The
second step is to combine the FCLTs and the fluid limit results to
show (Theorem~\ref{Thm fluctuation}) that fluctuations of the order
positions are Gaussian processes with ``mean-reversion''.  The
mean-reverting level is essentially the fluid limit of order position
relative to the queue length modified by the order book net flow,
which is defined as the limit order minus the market order and the
cancellation.  The speed of the mean-reversion is proportional to the
order arrival intensity and the rate of cancellations.

\iffalse
Our analysis builds on techniques and results from classical
probability theory, including the functional central limit theorems by
Jacod and Shiryaev~\cite{jacod1987limit}, Glynn and Ward~\cite{glynn1988ordinary},
and Bulinski and Shashkin~\cite{bulinski2007limit}, the convergence of stochastic processes by
Kurtz and Protter~\cite{kurtz1991weak}, and the sample path
large deviation principle by Dembo and Zajic~\cite{dembo2009large}.
\fi

%We remark here that instead of the fluid limit approach presented in this paper, one may be
%tempted to consider the diffusion limit of order positions. While this is clearly a legitimate
%and mathematical interesting subject, the precise economic meaning of order positions after
%re-centering and rescaling for the second order is less clear compared to its fluid limit
%counterpart. In particular, such a limit after recentering  may not yield meaningful results
%for quantities of interest in LOB such as the probability of an order being executed.
Our results are built on fairly general technical assumptions
(stationarity and ergodicity) on order arrival processes and order
sizes.  For instance, order arrival processes (Section~\ref{Sec application}) can be Poisson processes or Hawkes processes, both of
which have been extensively used in LOB modelings; see for instance,
Abergel and Jedidi~\cite{AJ2015} and Huang, Lehalle, and
Rosenbaum~\cite{HLR2015}.

Practically speaking, studying order positions gives more direct
estimates for the ``value'' of order positions, which is useful in
algorithmic trading.  Indeed, based on the fluid limit, we derive
(Section~\ref{Sec application}) explicit analytical comparisons
between the average time an order is executed and the average time any
related queue is depleted. This is an important piece of information
especially when combined with an estimate on the probability of a
price increase.  The latter is a core quantity for the LOB and has
been studied in Avellaneda and Stoikov~\cite{avellaneda2008high} and
Cont and de Larrard~\cite{cont2013price, cont2012order} for special
cases.  In addition, we derive from the fluctuation analysis explicit
expressions for the first hitting times of the queue depletion, for
the expected order execution time, and for the fluctuations of order
execution time and first hitting times.  Furthermore, by the large
deviations theory, we derive the tail probability that the queues
deviate from their fluid limits.

\paragraph{Related work.}

The main idea behind our analysis is to draw connections between LOBs
and multi-class priority queues, as LOBs with cancellations are
reminiscent of reneging queues; see for instance Ward and
Glynn~\cite{ward2003diffusion, ward2005diffusion}.  In the
mathematical finance literature, there have been a number of papers on
modeling LOB dynamics in a queuing framework and establishing
appropriate diffusion and fluid limits for queue lengths or the order
book prices.  This line of work can be traced back to
Kruk~\cite{kruk2003functional}, who established diffusion and fluid
limits for prices in an auction setting and showed that the best bid
and ask queues converge to reflected two-dimensional Brownian motion
in the first quadrant.  Similar results were later obtained by Cont
and de Larrard~\cite{cont2012order} for the best bid and best ask
queues under heavy traffic conditions, where they also established the
diffusion limit for the price dynamics under the same ``reduced form''
approach with stationary conditions on the queue
lengths~\cite{cont2013price}.  Abergel and
Jedidi~\cite{abergel2013mathematical} modeled the volume of the order
book by a continuous-time Markov chain with independent Poisson order
flow processes and showed that mid price has a diffusion limit and
that the order book is ergodic.  Horst and Paulsen~\cite{horst2015law}
studied the fluid limit for the whole limit order books including both
prices and volumes, under a very general mathematical setting. Their
analysis was further extended in Horst and
Kreher~\cite{horst2015weak}, where the order dynamics could depend on
the state of the LOB.  Under different time and space scalings,
Blanchet and Chen~\cite{blanchet2013continuous} derived a pure jump
limit for the price-per-trade process and a jump diffusion limit for
the price-spread process.

One of our results, Theorem~\ref{DiffThm}, is mostly related to yet
different from the diffusion limit in ~\cite{cont2012order}.  This is a result of a different scaling
approach.  In order for us to analyze the dynamics of the order
positions, we need to differentiate limit orders from market and
cancellation orders, whereas in~\cite{cont2012order} order processes
are aggregated from limit, market, and cancellations orders.  Because
of this aggregation, they could use the main idea from
``heavy-traffic-limit'' in classical queuing theory and assume that
the mean order flow is dominated by the variance. While this
assumption~\cite[Assumption 3.2]{cont2012order} is critical to their
analysis, it does not hold in our setting where each individual order
type is considered.  On the other hand, if we were to impose this
assumption, then our result will be reduced to theirs because the
second term in Eqn.~(\ref{Eqn_DiffThm}) would simply vanish.
%Finally, from a modeling perspective, our focus of
%studying the dynamics of an order position makes it natural to
%consider only the (best) bid/ask queues, without having to impose the
%``random initialization'' assumption as in
%\cite{cont2012order}, thus avoiding the boundary condition of the
%integral type for the analysis.

To the best of our knowledge, the dynamics of order positions and its
relation to the queue lengths, which is the focus of our work, has not
been studied before.  Indeed, classical queuing tends to focus more on
the stability of the entire system, rather than analyzing individual
requests.  Most of the existing modeling approaches in algorithmic
trading have ignored order positions, with very limited efforts on the
probability of it being executed. For instance, such a probability is
either assumed to be a constant as in Cont and de
Larrard~\cite{cont2012order, cont2013price} and Guo, de Larrard and
Ruan~\cite{guo2013optimalp}, or is computed numerically from modeling
the whole LOB as a Markov chain as in Hult and
Kiessling~\cite{hult2010algorithmic}, or is analyzed with a
homogeneous Poisson process for order arrivals and with constant order
sizes as in Cont, Stoikov, and Talreja~\cite{cont2010stochastic}.

%{\it To analyze the appropriate limiting behavior of the dynamics of a particular order, it is
%without loss of generality, unless otherwise specified in this paper, to focus on the queues %on the best bid
%and the best ask. The analysis on the other queues would be similar but simpler because of the
%absence of market orders on other levels of bid and ask.}

\section{Fluid limits of order positions and related queues}

%We will start our analysis with the case  where the mean arrival rate of the order process
%$\lambda$ is constant. We will see that this analysis can be easily extended to allowing
%linear dependence between the mean arrival rate and the trading volume.

%The results from the fluid limit analysis enables us to  compute several quantities of
%interest in LOB,  including the probability distribution of the first time when bid/ask queues %are empty, and the expected time  of a particular order gets executed.

\subsection{Notation}

\newcommand{\obb}{\mathsf{bb}}
\newcommand{\ombb}{\mathsf{mbb}}
\newcommand{\ocbb}{\mathsf{cbb}}
\newcommand{\oba}{\mathsf{ba}}
\newcommand{\omba}{\mathsf{mba}}
\newcommand{\ocba}{\mathsf{cba}}

Without loss of generality, consider the best bid and ask queues. Then
there are six types of orders: best bid orders ($\obb$), market orders at the
best bid ($\ombb$), cancellation at the best bid ($\ocbb$), best ask ($\oba$),
market orders at the best ask ($\omba$), and cancellation at the
best ask ($\ocba$).
Denote the order arrival process by $\mathbf{N}=(N(t), t\ge 0)$ with
the inter-arrival times $\{D_i\}_{i\ge 1}$. Here
 \begin{eqnarray*}
N(t) & = & \max\left\{m: \sum_{i=1}^m D_i \le t\right\}.
\end{eqnarray*}
For simplicity, assume that there are no simultaneous arrivals of
different types of orders.
Consider order arrivals
of any of these six types as a point process, and define a sequence of six-dimensional random vectors
$\{\overrightarrow{V}_i \}_{i \ge 1}$, where for the $i$th order
$$\overrightarrow{V}_i
= (V_i^\obb, V_i^\ombb, V_i^\ocbb, V_i^\oba, V_i^\omba, V_i^\ocba)
:= (V_i^1, V_i^2, \ldots, V_i^6),$$
represents the sizes of the six types of orders; by the assumption,
exactly one entry of $\overrightarrow{V}_i$ is positive.
For instance, $\overrightarrow{V}_5=(0,0,0,4,0,0)$ means
the fifth order is of size $4$ and of type $\oba$, i.e.,
a limit order at the best ask.
%Followings are some basic quantities in our model:
%\begin{itemize}
%\item [$\cdot$] $q^b(t)$ stands for the total volume waiting for execution at the best bid level at time $t$;
%\item [$\cdot$] $q^a(t)$ stands for the total volume waiting for execution at the best ask level at time $t$;
%\item [$\cdot$] $z(t)$ stands for the position of a particular order at time $t$ we placed at the best bid queue at the beginning.
%\end{itemize}
%Naturally, before hitting the level $0$, $(q^b(t), q^a(t))$ could be computed by the following
%\begin{equation}
%(q^b(t), q^a(t))^T=(q^b(0), q^a(0))^T+\left(
%\begin{array}{cccccc}
%1, &-1, &-1, &0, &0, &0 \\
%0, &0, &0, &1, &-1, &-1
%\end{array}
%\right)\sum\limits_{i=1}^{N(t)} \overrightarrow{V}_i
%\end{equation}
%We use bold letters to denote the stochastic process as a whole, and regular letter with $t$ %for its value at time point $t$.
%For example, $\mathbf{N}$ denotes arrival process, while $N(t)$ is the number of orders %arrived by time $t$.
In this paper, we only consider c\`{a}dl\`{a}g processes.

For ease of references in the main text, we will use the following notation.
\begin{itemize}
\item $D[0, T]$ is the space of one-dimensional c\`{a}dl\`{a}g functions on $[0, T]$,
while $D^K[0, T]$ is the space of $K$-dimensional c\`{a}dl\`{a}g functions on $[0, T]$.
Consequently, the convergence in this space is, unless otherwise specified, in the sense of  the weak convergence in $D^K[0, T]$ equipped with $J_1$ topology;
\item
$L_{\infty}[0,T]$ is the space of functions $f:[0,T]\rightarrow\mathbb{R}^{K}$, equipped with the topology of uniform convergence;
\item
$\mathcal{AC}_{0}[0,T]$ is the space of functions
$f:[0,T]\rightarrow\mathbb{R}^{K}$ that are absolutely continuous
and $f(0)=0$;
\item
$\mathcal{AC}_{0}^{+}[0,T]$ is the space of non-decreasing functions $f:[0,T]\rightarrow\mathbb{R}^{K}$ that are absolutely continuous
and $f(0)=0$.\end{itemize}
Similarly, we define $D[0,\infty)$, $D^K[0,\infty)$, $L_{\infty}[0,\infty)$,
$\mathcal{AC}_{0}[0,\infty)$, $\mathcal{AC}_{0}^{+}[0,\infty)$ for $T=\infty.$

%\begin{assumption} \label{zta1}
%$\{D_{i}\}_{i \ge 1}$ is stationary array of positive random variables, and $\{\overrightarrow{V}_{i}\}_{i \ge 1}$
%is a stationary array of random vectors with finite expectation.
%\end{assumption}
%%\begin{assumption} \label{flui}
% %$\{D_{i}\}_{i \ge 1}$  is independent of $\{\overrightarrow{V}_{i}\}_{i \ge 1}$ .
%%\end{assumption}
%\begin{assumption} \label{wlln}
%As $i \to \infty$, there exists $\lambda >0$ and a constant vector $\overrightarrow{\bar{V}} = (\bar{V}^j>0, 1\le j\le 6)$, such that
%\begin{align}
%\frac{D_{1}+D_{2}+...+D_{i}}{i} \rightarrow  \frac{1}{\lambda},\\
%\frac{\overrightarrow{V}_{1}+\overrightarrow{V}_{2}+...+\overrightarrow{V}_{i}}{i} \rightarrow \overrightarrow{\bar{V}}.
%\end{align}
%in probability.
%\end{assumption}

\subsection{Technical assumptions and preliminaries}
In order to study  the fluid limit for the order position and related queues, we will first need to impose some technical assumptions.

\begin{assumption} \label{fluidd}
$\{D_{i}\}_{i \ge 1}$ is a stationary array of positive random variables with
\begin{align*}
\frac{D_{1}+D_{2}+ \cdots +D_{i}}{i} \rightarrow  \frac{1}{\lambda}, \qquad \text{in probability}
\end{align*}
as $i \to \infty$, where $\lambda$ is a positive constant.
\end{assumption}
\begin{assumption} \label{fluidv}
$\{\overrightarrow{V}_{i}\}_{i \ge 1}$
is a stationary array of square-integrable random vectors with
\begin{align*}
\frac{\overrightarrow{V}_{1}+\overrightarrow{V}_{2}+ \cdots +\overrightarrow{V}_{i}}{i} \rightarrow \overrightarrow{\bar{V}}, \qquad \text{in probability}
\end{align*}
as $i \to \infty$, where $\overrightarrow{\bar{V}} = (\bar{V}^1, \bar{V}^2, \ldots, \bar{V}^6)$
is a constant vector.
\end{assumption}

\begin{assumption} \label{flui}
$\{D_i\}_{i \ge 1}$ is independent of $\{\overrightarrow{V}_{i}\}_{i \ge 1}$ .
\end{assumption}

Now,  we define a new process $\overrightarrow{\mathbf{C}}_n$ as follows,
\begin{align} \label{cnfluid}
\overrightarrow{C}_n(t)=\frac{1}{n}  \sum_{i=1}^{N(nt)} \overrightarrow{V}_i.
%= \left(\frac{1}{n} \sum_{i=1}^{N(nt)} V_i^j, 1\le j \le 6\right).
\end{align}
We call such a process $\overrightarrow{\mathbf{C}}_n$ the {\it scaled net order flow process}.
\begin{theorem} \label{convC}
Given Assumptions \ref{fluidd} and \ref{fluidv}, for any $T>0$,
\[
\overrightarrow{\mathbf{C}}_n \Rightarrow \lambda \overrightarrow{\bar{V}}\mathbf{e}, \qquad \text{in $(D^6[0,T], J_1)$ \ \ as $n\rightarrow\infty$},
\]
where $\mathbf{e}$ is the identity function.
\end{theorem}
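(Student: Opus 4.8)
The plan is to realize $\overrightarrow{\mathbf{C}}_n$ as a composition of two separately scaled processes, each governed by one of the two assumptions, and then to pass the limit through the composition map. Writing $\overrightarrow{\Sigma}(m)=\sum_{i=1}^m\overrightarrow{V}_i$, I introduce the scaled counting process $\bar{N}_n(t)=N(nt)/n$ and the scaled partial-sum process $S_n(s)=\frac{1}{n}\overrightarrow{\Sigma}(\lfloor ns\rfloor)$. Since $N(nt)$ is integer-valued, one has the exact identity $\overrightarrow{C}_n(t)=S_n(\bar{N}_n(t))$, so that $\overrightarrow{\mathbf{C}}_n=S_n\circ\bar{N}_n$. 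The target is to show $\bar{N}_n\Rightarrow\lambda\mathbf{e}$ and $S_n\Rightarrow\overrightarrow{\bar{V}}\mathbf{e}$ and then invoke the continuous-mapping theorem for composition to conclude $\overrightarrow{\mathbf{C}}_n\Rightarrow\overrightarrow{\bar{V}}(\lambda\mathbf{e})=\lambda\overrightarrow{\bar{V}}\mathbf{e}$. Because both limits are deterministic, all of these weak convergences are equivalent to convergence in probability in the $J_1$ topology, which is what I will actually establish.

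First I would handle the counting process. Setting $\Phi(i)=D_1+\cdots+D_i$, Assumption~\ref{fluidd} says $\Phi(i)/i\to 1/\lambda$ in probability. Inverting the defining relation $\Phi(N(t))\le t<\Phi(N(t)+1)$ and dividing through by $N(t)$ gives, after sending $t\to\infty$, the pointwise statement $N(nt)/n\to\lambda t$ in probability for each fixed $t$. Since $t\mapsto N(nt)/n$ is nondecreasing and the limit $t\mapsto\lambda t$ is continuous and nondecreasing, pointwise convergence on a dense set of times upgrades to uniform convergence on $[0,T]$ by the standard argument for monotone functions approaching a continuous limit (Pólya's theorem); this yields $\bar{N}_n\Rightarrow\lambda\mathbf{e}$ in $(D[0,T],J_1)$.

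Next I would treat the marks. By Assumption~\ref{fluidv}, each coordinate satisfies $\frac{1}{i}\sum_{j=1}^i V_j^k\to\bar{V}^k$ in probability, so for fixed $s$ one has $S_n(s)=\frac{\lfloor ns\rfloor}{n}\cdot\frac{1}{\lfloor ns\rfloor}\overrightarrow{\Sigma}(\lfloor ns\rfloor)\to s\,\overrightarrow{\bar{V}}$ coordinatewise in probability. Because order sizes are nonnegative, each coordinate of $S_n$ is nondecreasing in $s$, and the limit $s\mapsto s\,\overrightarrow{\bar{V}}$ is continuous, so the same monotone-limit argument promotes the pointwise convergence to uniform convergence on compacts, giving $S_n\Rightarrow\overrightarrow{\bar{V}}\mathbf{e}$ in $(D^6[0,T],J_1)$.

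Finally I would assemble the pieces. As both marginal limits are deterministic, joint convergence in probability $(S_n,\bar{N}_n)\to(\overrightarrow{\bar{V}}\mathbf{e},\lambda\mathbf{e})$ is automatic, requiring no appeal to Assumption~\ref{flui}. The composition map $(x,y)\mapsto x\circ y$ on $D^6[0,T]\times D[0,T]$ is the delicate ingredient, since it is not globally continuous in $J_1$; however, it is continuous at every pair whose limit components are continuous and whose time-change component is nondecreasing, which is exactly our situation. Applying the continuous-mapping theorem at the continuous limit $(\overrightarrow{\bar{V}}\mathbf{e},\lambda\mathbf{e})$ then yields $\overrightarrow{\mathbf{C}}_n=S_n\circ\bar{N}_n\Rightarrow(\overrightarrow{\bar{V}}\mathbf{e})\circ(\lambda\mathbf{e})=\lambda\overrightarrow{\bar{V}}\mathbf{e}$, completing the proof. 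The main obstacle is therefore twofold: upgrading the pointwise laws of large numbers to uniform convergence, and verifying continuity of the composition at the limit; both are resolved by exploiting the monotonicity of $\bar{N}_n$ and $S_n$ together with the continuity of the deterministic limits.
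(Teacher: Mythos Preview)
Your argument is correct and follows the same composition-and-random-time-change strategy as the paper: write $\overrightarrow{\mathbf{C}}_n$ as a scaled partial-sum process evaluated at a scaled counting process, establish an FSLLN for each factor, and then compose. The difference is in the tools. The paper first upgrades the weak LLNs to strong LLNs (via Glynn--Whitt~\cite{glynn1988ordinary} for the $D_i$ and Birkhoff's ergodic theorem, using stationarity and square-integrability, for the $V_i^j$), then cites the equivalence of SLLN and FSLLN to get uniform convergence, and finally invokes Whitt's random-time-change theorem for the composition step. You instead remain in the ``convergence in probability'' world throughout and replace these references by a direct P\'olya-type argument, exploiting the coordinatewise monotonicity of $\bar{N}_n$ and $S_n$ and the continuity of the deterministic limits; this is more elementary and self-contained, though it leans on the nonnegativity of the order sizes (which is indeed assumed in the paper's setup). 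You also correctly note that the independence Assumption~\ref{flui} is not needed here, since deterministic marginal limits already force joint convergence; the paper's proof likewise does not invoke it.
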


\begin{proof} \label{proof_convC}

First, we define the scaled processes $\mathbf{S}^D_n$ and $\overrightarrow{\mathbf{S}}^V_n$ by
\begin{align*}
S^D_n(t) &=\frac{1}{n} \sum_{i=1}^{\lfloor nt \rfloor} D_i, \\
\overrightarrow{S}^V_n(t) &=\frac{1}{n} \sum_{i=1}^{\lfloor nt \rfloor} \overrightarrow{V}_i.
%=\left(\frac{1}{n} \sum_{i=1}^{\lfloor nt \rfloor} V_i^j, 1\le j \le 6\right).
\end{align*}
Then by Assumption \ref{fluidd} and according to
Glynn and Whitt~\cite[Theorem 5]{glynn1988ordinary},
the strong Law of Large Numbers (SLLN) also follows, i.e.,
\begin{align*}
\lim\limits_{i \to \infty} \frac{D_{1}+D_{2}+ \cdots +D_{i}}{i} = \frac{1}{\lambda},  \qquad \text{a.s.}
\end{align*}
Then by the equivalence of SLLN and FSLLN~\cite[Theorem 4]{glynn1988ordinary},
it is clear that for any $T>0$,
\begin{align*}
\mathbf{S}_{n}^{D}=\frac{1}{n} \sum_{i=1}^{\lfloor n\cdot \rfloor} D_i \Rightarrow \frac{\mathbf{e}}{\lambda},
\qquad \text{a.s. in $(D[0, T], J_1)$ as $n\rightarrow\infty$} .
\end{align*}
Moreover,  since $\overrightarrow{V}_1$ is square-integrable, it follows that $\mathbb{E}[V^j_1] < \infty$ for $1\le j\le 6$. Note that $\{V^j_i\}_{i \ge 1}$ is stationary and
applying Birkhoff's Ergodic Theorem~\cite[Theorem 6.28]{breiman1968probability}
leads to
\begin{equation*}
\frac{1}{n} \sum_{i=1}^{n}V^j_i \rightarrow \mathbb{E}[V_1^j \mid \mathcal{I}^j], \qquad \text{a.s. as $n\rightarrow\infty$,}
\end{equation*}
where $\mathcal{I}^j$ is the invariant $\sigma$-algebra of $\{V_i^j\}_{i \ge 1}$.
Given the WLLN for $\{V_i^j\}_{i \ge 1}$, it follows that
\begin{align*}
\mathbb{E}[V_1^j \mid \mathcal{I}^j]=\bar{V}^j,
\end{align*}
and
\begin{equation*}
\frac{1}{n} \sum_{i=1}^{n}V^j_i \rightarrow \bar{V}^j, \qquad \text{a.s. as $n\rightarrow\infty$.}
\end{equation*}
Therefore, again by \cite[Theorem 4]{glynn1988ordinary},
\begin{align*}
\overrightarrow{\mathbf{S}}_{n}^{V,j}=\frac{1}{n} \sum_{i=1}^{\lfloor n\cdot \rfloor} V_i^j \Rightarrow \bar{V}^j\mathbf{e},
\qquad \text{a.s. in $(D[0,T], J_1)$ as $n\rightarrow\infty$}
\end{align*}
Since the limit processes for $\{\mathbf{S}_n^{D}\}_{n \ge 1}$ and $\{\mathbf{S}_n^{V,j}\}_{n \ge 1}$, $1\le j\le 6$, are deterministic, then according to \cite[Theorem 11.4.5]{whitt2002stochastic},
\begin{align*}
(\overrightarrow{\mathbf{S}}_{n}^{V}, \mathbf{S}_n^{D})\Rightarrow\left(\overrightarrow{\bar{V}}\mathbf{e},
\frac{\mathbf{e}}{\lambda}\right), \qquad\text{a.s. in $(D^7[0, T], J_1)$ as $n\rightarrow\infty$}.
\end{align*}
Finally, from \cite[Theorem 9.3.4]{whitt2002stochastic},
\begin{align*}
\overrightarrow{\mathbf{C}}_n \Rightarrow \lambda\overrightarrow{\bar{V}} \mathbf{e}, \enspace \mbox{in } (D^6[0,T], J_1) \mbox{ as } n\rightarrow\infty.\qquad\hfill \qedhere
\end{align*}
\end{proof}

Next we proceed to study the order position in the best bid and related queues of the best bid and the best ask.
We further assume
\begin{assumption}\label{cancelprop}
Cancellations are uniformly distributed on every queue.
\end{assumption}
We will see  that this assumption on cancellation is not critical, except for affecting the exact form of the fluid limit for the order position. (See Theorem \ref{FluidThmGeneralized} without this assumption in Section \ref{Sec discussion}.)

Now define the scaled queue lengths with $\mathbf{Q}_n^b$ for the best bid queue and  $\mathbf{Q}_n^a$ for the best ask queue,  and the scaled order position $\mathbf{Z}_n$ by
\begin{equation} \label{scaledq}
\left\{
\begin{aligned}
 Q_n^b(t) &=Q_n^b(0)+ C^1_n(t) - C^2_n(t) -C^3_n(t),\\
 Q_n^a(t) &=Q_n^a(0)+ C^4_n(t) - C^5_n(t) -C^6_n(t), \\
d Z_n(t) &=- dC^2_n(t) -\frac{Z_n(t-)}{Q_n^b(t-)}dC^3_n(t).
\end{aligned}
\right.
\end{equation}
The above equations are straightforward:  bid/ask queue lengths increase with limit orders
and decrease with market orders and cancellations according to their corresponding order flow processes;
an order position will decrease and move towards zero with arrivals of cancellations and market orders;
new limit orders arrivals will not change this particular order position; however,
the arrival of limit orders
may change the speed of the order position approaching zero following Assumption \ref{cancelprop}, hence the factor of  $\frac{Z_n(t-)}{Q_n^b(t-)}$.

Strictly speaking, Eqn.~\eqref{scaledq} only describes the dynamics of the
triple $(Q_n^b(t), Q_n^a(t), Z_n(t))$
{\it before} any of them hits zero: $\mathbf{Z}_n$ hitting zero means that the
order placed has been executed,
while $\mathbf{Q}_n^a$ hitting zero means that the best ask queues is depleted.
Since our primary interest is in the order position,
with little risk we may truncate the processes to avoid unnecessary technical difficulties on the boundary. That is,  define
\begin{equation}
\label{tau_n}
\tau_n = \min\{ \tau_n^z, \tau_n^a, \tau_n^b\},
\end{equation}
with
\begin{align*}
\tau_n^b = \inf\{t\ge 0: Q_n^b(t)\le 0\},  \qquad \tau_n^a = \inf\{t\ge 0: Q_n^a(t)\le 0\},   \qquad \tau_n^z = \inf\{t\ge 0: Z_n(t) \le 0\}.
\end{align*}
%Clearly, from the definition and the expression of $Z_n(t), Q_n(t)$, we have
%\begin{\lemma}
%$\tau^z_n \le \tau_n^b$.
%\end[lemma}
Now, define the truncated processes
\begin{align}
\label{truncated}
\tilde{Q}_n^b(t) = Q_n^b(t \wedge \tau_n), \qquad \tilde{Q}_n^a(t) = Q_n^a(t \wedge \tau_n), \qquad \tilde{Z}_n(t) = Z_n(t \wedge \tau_n).
\end{align}
Still, it is not immediately clear that these truncated processes would be well defined either
since we do not know {\it a priori}
if the term $-\frac{Z_n(t-)}{Q_n^b(t-)}$ is bounded when $\mathbf{Q}_{n}^{b}$ hits zero.
This, however, turns out not to be an issue.
\begin{lemma}\label{zlessthanqb}
Eqn.~\eqref{truncated} is well defined, with $Z_n(t) \le Q_n^b(t)$ for any time $t\le \min(\tau_n^z, \tau_n^a)$. In particular, $\tau_n^z\le \tau_n^b$.

\end{lemma}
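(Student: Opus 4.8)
The plan is to argue pathwise and to construct the triple $(Q_n^b, Q_n^a, Z_n)$ jump by jump. Since $N(nt)<\infty$ for every $t$, each of the scaled flows $C_n^1,\dots,C_n^6$ is a nondecreasing pure-jump process with only finitely many jumps on any bounded interval, and by the no-simultaneous-arrival assumption exactly one of them jumps at each arrival time. Thus on each interval between consecutive arrivals all three processes are constant, and at an arrival time they are updated by \eqref{scaledq} according to the type of the incoming order. Note that $Q_n^a$ is built from $C_n^4,C_n^5,C_n^6$ with no division and is always well defined, so $\tau_n^a$ enters only through the truncation and the real work is entirely on the bid side. I would run the recursion up to the stopping time $\tau_n$ of \eqref{tau_n} and prove, by induction over successive arrivals, the invariant $0<Z_n(t)\le Q_n^b(t)$ for every $t<\tau_n^z$, noting that the inequality $Z_n\le Q_n^b$ is in fact preserved at every arrival. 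This single invariant yields both well-definedness and the stated ordering.

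For the induction the base case is $0<Z_n(0)\le Q_n^b(0)$, which holds because an order's position cannot exceed the length of the queue in which it sits. For the inductive step I would check the effect of each arrival, observing first that an $\oba$, $\omba$, or $\ocba$ order (types $4$, $5$, $6$) changes neither $Z_n$ nor $Q_n^b$ and so preserves the invariant trivially. The three bid-side cases are: (i) a best-bid limit order ($\obb$) raises $Q_n^b$ by $V_i^1/n$ and leaves $Z_n$ unchanged, so $Z_n\le Q_n^b$ persists; (ii) a market order at the best bid ($\ombb$) lowers both $Q_n^b$ and $Z_n$ by the same amount $V_i^2/n$, so the gap $Q_n^b-Z_n\ge0$ is preserved; (iii) a cancellation at the best bid ($\ocbb$) of scaled size $v=V_i^3/n$ sends $(Q_n^b(t-),Z_n(t-))=(q,z)$ to $Q_n^b(t)=q-v$ and $Z_n(t)=z(1-v/q)=z(q-v)/q$, whence $Z_n(t)/Q_n^b(t)=z/q=Z_n(t-)/Q_n^b(t-)$ whenever $q-v\neq0$. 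Case (iii) is the crux: cancellations at the best bid leave the ratio $Z_n/Q_n^b$ exactly invariant, so $Z_n(t-)\le Q_n^b(t-)$ forces $Z_n(t)\le Q_n^b(t)$.

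It remains to read off well-definedness and the ordering. For every $t<\tau_n^z$ the invariant gives $Q_n^b(t)\ge Z_n(t)>0$, so the denominator $Q_n^b(t-)$ appearing in the $Z_n$-equation of \eqref{scaledq} is strictly positive at every arrival strictly before $\tau_n^z$; hence the recursion never divides by zero and the truncated processes \eqref{truncated} are well defined up to $\tau_n$. Moreover, since $Q_n^b(t)>0$ for all $t<\tau_n^z$, the bid queue length cannot vanish strictly before the order position does, which is precisely $\tau_n^z\le\tau_n^b$. The one delicate point I expect to handle with care is the bookkeeping at an arrival at which $Q_n^b$ is about to hit zero: in case (ii) the preserved gap forces $Q_n^b(t)\ge Z_n(t)$, and in case (iii) the identity $Z_n(t)=z(q-v)/q$ shows that $Z_n$ and $Q_n^b$ reach or cross zero simultaneously when $v\ge q$, so $Z_n$ never survives past the instant $Q_n^b$ is depleted and no division by a vanishing denominator can occur before the process is stopped.
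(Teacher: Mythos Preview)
Your proof is correct and follows essentially the same approach as the paper: a jump-by-jump case analysis showing that each of the three bid-side order types preserves the invariant $Z_n\le Q_n^b$ (limit orders increase only $Q_n^b$, market orders subtract the same amount from both, cancellations preserve the ratio), together with the observation that there are only finitely many arrivals on any bounded interval. Your version is more carefully spelled out, in particular the explicit induction framework and the boundary discussion at the instant $Q_n^b$ is depleted, but the underlying argument is identical to the paper's.
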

\begin{proof}
Note that $\overrightarrow{\mathbf{C}}_n$ is a positive jumping process. Therefore,
when $\delta Z_n(t) = 0$, we have $\delta C^1_n(t) > 0$ and $\delta Q_n^b(t) > 0$;
when $\delta C^2_n(t) > 0$, we have $\delta Q_n^b(t) = \delta Z_n(t)$;
and when $\delta C^3_n(t) > 0$, we have $\frac{\delta Q_n^b(t)}{Q_n^b(t-)} = \frac{\delta Z_n(t)}{Z_n(t-)}$.
Hence, when $0 < Z_n(t-) \le Q_n^b(t-)$, we have $Z_n(t) \le Q_n^b(t)$.
Moreover, the number of order arrivals for any given time horizon is
finite with probability 1.
\end{proof}

%Now, we add $\mathbb{I}_{Z_n(t)>0}$ to the drift term for $\mathbf{Z}_n$ to stop the movement of $Z_n$ once $Q_n^b(t)$ hits zero,
%\begin{align}\label{scaledI2}
%d(Q_n^b(t), Q_n^a(t), Z_n(t))&=\left(
%\begin{array}{cccccc}
%1, &-1, &-1, &0, &0, &0 \\
%0, &0, &0, &1, &-1, &-1  \\
%0, &-1, & -\frac{Z_n^b(t)}{Q_n^b(t)}\mathbb{I}_{Z_n(t)>0}, &0, &0, &0
%\end{array}
%\right)
% \cdot d\mathbf{C}_n^{flu}(t)\\
%(Q_n^b(0), Q_n^a(0), Z_n(0)) &= (q^b(0), q^a(0), z(0)).
%\nonumber
%\end{align}
This lemma, though simple, turns out to play an important role to ensure that  fluid limits of order positions and related queues are well defined after rescaling.
That is, we can extend the definition of $\tilde{\mathbf{Q}}_n^b$, $\tilde{\mathbf{Q}}_n^a$, and $\tilde{\mathbf{Z}}_n$ for any time $t\ge 0$.

For simplicity, for the rest of the paper we will use 
$\mathbf{Q}_n^b$, $\mathbf{Q}_n^a$, and $\mathbf{Z}_n$ instead of $\tilde{\mathbf{Q}}_n^b$, $\tilde{\mathbf{Q}}_n^a$, and $\tilde{\mathbf{Z}}_n$, defined on $t\ge 0$.
The dynamics of the truncated processes could be described in the following matrix form.
\begin{align}\label{scaledII}
d\left(\begin{aligned}
&{Q}_n^b(t)\\
&{Q}_n^a(t)\\
&{Z}_n(t)\end{aligned}\right)
&=\left(
\begin{array}{cccccc}
1 &-1 &-1 &0 &0 &0 \\
0 &0 &0 &1 &-1 &-1  \\
0 &-1 & -\frac{{Z}_n(t-)}{{Q}_n^b(t-)} &0 &0 &0
\end{array}
\right)\mathbb{I}_{{Q}_n^a(t-)>0, Q_n^b(t-)>0, {Z}_n(t-)>0}
 \cdot d\overrightarrow{C}_n(t).
%({Q}_n^b(0), {Q}_n^a(0), {Z}_n(0)) &= (Q_n^b(0), Q_n^a(0), Z_n(0)).
%\nonumber
\end{align}
The modified processes coincide with the original processes before hitting zero,
which implies $\mathbb{I}_{t\le \tau_n} = \mathbb{I}_{{Q}_n^a(t-)>0, Q_n^b(t-)>0, {Z}_n(t-)>0}$.

In order to establish the fluid limit for the joint process  ($\mathbf{Q}_n^b$, $\mathbf{Q}_n^a$ and $\mathbf{Z}_n$),
we see that it is fairly standard to establish
the limit process for $(\mathbf{Q}_n^b, \mathbf{Q}_n^a)$ from classical probability theory where various forms of functional strong law of large numbers exist. However,
checking Eqn.~\eqref{scaledII} for $Z_n(t)$, we see that in order to pass from the fluid limit for $\mathbf{Q}_n^b$ to that for $\mathbf{Z}_n(t)$,  we effectively need to pass the convergence relation
between some  c\`adl\`ag processes $(\mathbf{X}_n, \mathbf{Y}_n)$ to $(\mathbf{X}, \mathbf{Y})$ in the
Skorokhod topology to the convergence relation between $\int X_n dY_n$
to $\int XdY$. That is, consider
a sequence of stochastic processes $\{\mathbf{X}_n\}_{n\ge1}$ defined by a sequence of SDEs
\begin{align} \label{SDEn}
X_n(t)=U_n(t)+\int_0^t F_n(X_n, s-)dY_n(s),
\end{align}
where $\{\mathbf{U}_n\}_{n\ge 1}$,
$\{\mathbf{Y}_n\}_{n\ge 1}$ are two sequences of stochastic processes and $\{F_n\}_{n\ge 1}$
is a sequence of functionals.  Now, suppose that
$\{\mathbf{U}_n, \mathbf{Y}_n, F_n\}_{n\ge1}$ converges to $\{\mathbf{U}, \mathbf{Y}, F\}$ in some way.
Then, would the sequence of the solutions to
Eqn.~\eqref{SDEn} converge to the solution to
\begin{align*}
X(t)=U(t)+\int_0^t F(X, s-)dY(s)?
\end{align*}

It turns out that such a convergence relation is delicate and can easily fail, as shown by the following simple example.
\begin{example}
\label{example1}
Let $\{X_{i}\}_{i\geq 1}$ be a sequence of identically distributed random variables taking values in $\{-1, 1\}$ such that
\begin{align*}
\mathbb{P}(X_1=1)=\mathbb{P}(X_1=-1)=\frac{1}{2},
\quad
\mathbb{P}(X_{i+1}=1 \mid X_{i}=1)=\mathbb{P}(X_{i+1}=-1 \mid X_{i}=-1)=\frac{3}{4} \mbox{ for } i > 1.
\end{align*}
Define $S_n(t)=\frac{1}{\sqrt{n}}\sum_{i=1}^{\lfloor nt \rfloor}X_i$.
Note that $\{X_{i}\}_{i\geq 1}$ is a strictly stationary sequence, with mean zero and  is a Markov Chain
with finite state space $\{-1,1\}$. Since each entry of the transition probability matrix is strictly
between $0$ and $1$, the sequence $\{X_{i}\}_{i\geq 1}$ is $\psi$-mixing, see e.g., \cite{Rosenblatt1971}.
Note that $\psi$-mixing implies $\phi$-mixing, see e.g., \cite{Bradley2005}. By stationarity,
\begin{equation*}
\lim_{n\rightarrow\infty}\frac{1}{n}\mathbb{E}\left[\left(\sum_{i=1}^{n}X_{i}\right)^{2}\right]
=\sigma^{2}=\mathbb{E}[X_{1}^{2}]+2\sum_{i=1}^{\infty}\mathbb{E}[X_{1}X_{i+1}].
\end{equation*}
We can compute by induction that for any $i\geq 1$,
\begin{equation*}
\mathbb{E}[X_{1}X_{i+1}]=\frac{3}{4}\mathbb{E}[X_{1}X_{i}]+\frac{1}{4}\mathbb{E}[X_{1}(-X_{i})]
=\frac{1}{2}\mathbb{E}[X_{1}X_{i}]
=\frac{1}{2^{i}}.
\end{equation*}
Therefore, we have $
\sigma^{2}=1+2\sum_{i=1}^{\infty}\frac{1}{2^{i}}=3.$
For strictly stationary centered $\phi$-mixing sequence with $\mathbb{E}\left[\left(\sum_{i=1}^{n}X_{i}\right)^{2}\right]\rightarrow\infty$
as $n\rightarrow\infty$ and $\mathbb{E}[|X_{1}|^{2+\delta}]<\infty$ for some $\delta>0$,
the invariance principle holds, see e.g., \cite{Ibragimov1975}, i.e., $\mathbf{S}_{n}$ converges
to $\sigma \mathbf{B}$.
Hence $\mathbf{S}_n$ converges to $\sqrt{3}\mathbf{B}$.
Now define a sequence of SDE's $dY_{n}(t)=Y_{n}(t)dS_{n}(t)$ with $Y_n(0)=1.$
Clearly, since $X_{i}\in\{\pm 1\}$ and $|X_{i}|\leq 1$, for sufficiently large $n$,
\begin{equation*}
Y_{n}(t)=\prod_{i=1}^{\lfloor nt \rfloor}\left(1+\frac{X_{i}}{\sqrt{n}}\right)
=e^{\sum_{i=1}^{\lfloor nt \rfloor}\log(1+\frac{X_{i}}{\sqrt{n}})}
=e^{\frac{1}{\sqrt{n}}\sum_{i=1}^{\lfloor nt \rfloor}X_{i}-\frac{1}{2n}\lfloor nt\rfloor+\epsilon_{n}},
\end{equation*}
where $|\epsilon_{n}|\leq\frac{C}{\sqrt{n}}$, where $C>0$ is a constant.
Hence, $\mathbf{Y}_n$ converges to the limiting process described by $\exp\{\sqrt{3}B(t) - \frac{t}{2}\}$, as $n \to \infty$.
However, the solution to
$dY(t)=Y(t)d(\sqrt{3}B(t))$ with  $Y(0)=1$
is given by $Y(t)=\exp\{\sqrt{3}B(t)-\frac{3t}{2}\}$.
\end{example}

Nevertheless, under proper conditions as specified in
Assumptions \ref{fluidd}, \ref{fluidv}, \ref{cancelprop}, one can
establish the desired convergence relation.  Such assumptions prove to
be sufficient using a result of
Kurtz and Protter~\cite[Theorem 5.4]{kurtz1991weak}.  For sake of
completeness, we present this result next, along with the technical
conditions required for the convergence.

\subsection{Detour: Convergence of stochastic processes by
Kurtz and Protter~\cite{kurtz1991weak}}

Define $h_{\delta}(r):[0,\infty) \to [0,\infty)$
by $h_{\delta}(r)=(1-\delta/r)^+$.
Define $J_{\delta}:D^m [0,\infty) \to D^m[0,\infty)$
by
\begin{align*}
J_{\delta}(x)(t)=\sum_{s \le t}h_{\delta}(|x(s)-x(s-)|)(x(s)-x(s-)).
\end{align*}
Let ${Y_{n}}$ be a sequence of stochastic processes adapted to ${\mathcal{F}_{t}}$.
Define $Y_n^\delta=Y_n-J_{\delta}(Y_n)$. Let $Y_n^\delta=M_n^\delta+A_n^\delta$
be a decomposition of $Y_n^\delta$ into an ${\mathcal{F}_{t}}$-local martingale and a process with finite variation.

\begin{Condition} \label{C1}
For each $\alpha > 0 $, there exist stopping times ${\tau_n^\alpha}$ such that $P\{\tau_n^\alpha \le 1\} \le 1/\alpha$
and $\sup_n \mathbb{E}[[M_n^\delta]_{t\leq\tau_n^\alpha}+T(A_n^\delta)_{t\leq\tau_n^\alpha}] < \infty$,
where $[M_n^\delta]_{t\leq\tau_n^\alpha}$ denotes the total quadratic variation of $M_n^\delta$
up to time $\tau_n^\alpha$, and $T(A_n^\delta)_{t\leq\tau_n^\alpha}$ denotes the total variation of $A_n^\delta$ up to time $\tau_n^\alpha$.
\end{Condition}

 Let $T_1[0, \infty)$ denote the collection of
non-decreasing mappings $\lambda$ of $[0, \infty)$ to $[0, \infty)$
(in particular, $\lambda(0)=0$)
such that $\lambda(h+t)-\lambda(t) \le h$ for all $t, h \ge 0$. Let $\mathbb{M}^{km}$ be the space of real-valued $k\times m$ matrices, and $D_{\mathbb{M}^{km}}[0, \infty)$ be the space of c\`{a}dl\`{a}g functions from $[0, \infty)$ to $\mathbb{M}^{km}$.
Assume that there exist mappings $G_n, G : D^k[0, \infty) \times T_1[0, \infty) \to D_{\mathbb{M}^{km}}[0, \infty)$
such that $F_n\circ\lambda =G_n(x\circ \lambda, \lambda)$ and $F(x)\circ \lambda = G(x\circ \lambda, \lambda)$
for $(x, \lambda) \in D^k[0, \infty) \times T_1[0, \infty)$.
\begin{Condition}\label{c2}
(i) For each compact subset $\mathcal{H} \subset D^k[0, \infty)$ and $t>0$,
$\sup_{(x, \lambda) \in \mathcal{H}}\sup_{s \le t}|G_n(x, \lambda, s)-G(x, \lambda, s)| \to 0$;

(ii) For $\{(x_n, \lambda^n)\} \in D^k[0, \infty) \times T_1[0, \infty)$, $\sup_{s\le t}|x_n(s)-x(s)| \to 0$
and $\sup_{s \le t}|\lambda^n(s)-\lambda(s)| \to 0$ for each $t>0$ implies $\sup_{s \le t}|G(x_n, \lambda^n, s)-G(x, \lambda, s)| \to 0$.
\end{Condition}
\begin{theorem} \label{kurtz}
Suppose that $(\mathbf{U}_n, \mathbf{X}_n, \mathbf{Y}_n)$ satisfies
\begin{align*}
X_n(t)=U_n(t)+\int_0^t F_n(X_n, s-)dY_n(s),
\end{align*}
$(\mathbf{U}_n, \mathbf{Y}_n) \Rightarrow (\mathbf{U}, \mathbf{Y})$
in the Skorokhod topology, and that $\{\mathbf{Y}_n\}$ satisfies
Condition \ref{C1} for some $0 < \delta \le \infty$. Assume that $\{F_n\}$ and $F$ have representations
in terms of $\{G_n\}$ and $G$ satisfying Condition \ref{c2}. If there exists a global solution $X$ of
\begin{equation*}
dX(t)=U(t)+\int_0^tF(X,s-)dY(s),
\end{equation*}
and the local uniqueness holds, then
\begin{align*}
(\mathbf{U}_n, \mathbf{X}_n, \mathbf{Y}_n) \Rightarrow (\mathbf{U}, \mathbf{X}, \mathbf{Y}).
\end{align*}
\end{theorem}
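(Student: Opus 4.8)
Since Theorem~\ref{kurtz} is a restatement of \cite[Theorem 5.4]{kurtz1991weak}, the plan is to follow the standard compactness-plus-uniqueness argument for semimartingale-driven SDEs, assembled from three ingredients: the good-sequence property of $\{\mathbf{Y}_n\}$ supplied by Condition~\ref{C1}, the path-functional continuity supplied by Condition~\ref{c2}, and the assumed uniqueness of the limiting equation. First I would record the key consequence of Condition~\ref{C1}: together with $(\mathbf{U}_n, \mathbf{Y}_n) \Rightarrow (\mathbf{U}, \mathbf{Y})$, it makes $\{\mathbf{Y}_n\}$ a \emph{good sequence of semimartingales}, so that stochastic integration is continuous along the sequence. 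The decomposition $Y_n^\delta = M_n^\delta + A_n^\delta$ and the truncated jump part $J_\delta(Y_n)$ isolate the martingale and bounded-variation contributions, and the uniform bounds on $\mathbb{E}[[M_n^\delta] + T(A_n^\delta)]$ up to the stopping times $\tau_n^\alpha$ give exactly the uniform controllability needed for the joint convergence $(\mathbf{H}_n, \mathbf{Y}_n, \int_0^\cdot H_n\, dY_n) \Rightarrow (\mathbf{H}, \mathbf{Y}, \int_0^\cdot H\, dY)$ whenever the adapted c\`adl\`ag integrands satisfy $(\mathbf{H}_n, \mathbf{Y}_n) \Rightarrow (\mathbf{H}, \mathbf{Y})$.

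Second, I would establish relative compactness of $\{\mathbf{X}_n\}$. Using the defining SDE together with the uniform semimartingale estimates from Condition~\ref{C1} and the boundedness of $F_n$ inherited from its representation $G_n$ in part (i) of Condition~\ref{c2}, one controls the modulus of continuity of $X_n$ on compact time intervals, yielding tightness of the triple $(\mathbf{U}_n, \mathbf{X}_n, \mathbf{Y}_n)$. I would then extract a weakly convergent subsequence and, via the Skorokhod representation theorem, pass to a common probability space on which the convergence is almost sure in the $J_1$ topology, so that all subsequent manipulations may be done pathwise.

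Third --- and this is the delicate heart of the argument --- I would identify the subsequential limit as a solution of the limiting equation. The obstacle is precisely the one flagged by Example~\ref{example1}: evaluation of a path functional is \emph{not} continuous for the $J_1$ topology, so one cannot simply assert $F_n(X_n,\cdot) \to F(\tilde X,\cdot)$. This is exactly why the functionals are represented through the time-change maps $G_n, G$ on $D^k[0,\infty) \times T_1[0,\infty)$; Condition~\ref{c2} transfers convergence of $(x_n, \lambda^n)$ into convergence of $G(x_n, \lambda^n, \cdot)$, absorbing the timing discrepancies inherent to Skorokhod convergence. Combining this continuity with the stochastic-integral convergence of the first step, I would pass to the limit in $X_n(t) = U_n(t) + \int_0^t F_n(X_n, s-)\,dY_n(s)$ and conclude that the limit $\tilde X$ satisfies $\tilde X(t) = U(t) + \int_0^t F(\tilde X, s-)\,dY(s)$.

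Finally, since a global solution $X$ exists and local uniqueness holds, I would invoke uniqueness to conclude $\tilde X = X$; as every convergent subsequence then has the same limit, the whole sequence converges and $(\mathbf{U}_n, \mathbf{X}_n, \mathbf{Y}_n) \Rightarrow (\mathbf{U}, \mathbf{X}, \mathbf{Y})$. I expect the third step to be the main obstacle: reconciling the pathwise functional dependence of $F$ with the failure of coordinate evaluation to be $J_1$-continuous, which is the entire reason the $G_n, G$ and $T_1[0,\infty)$ apparatus is introduced in the hypotheses.
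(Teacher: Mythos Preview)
The paper does not prove Theorem~\ref{kurtz} at all: it is explicitly presented as a quotation of \cite[Theorem~5.4]{kurtz1991weak}, stated ``for sake of completeness'' along with the technical Conditions~\ref{C1} and~\ref{c2}, and then invoked as a black box in the proof of Theorem~\ref{FluidThm}. There is therefore no ``paper's own proof'' to compare against.

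Your proposal is a faithful high-level outline of the Kurtz--Protter argument itself --- good-sequence property from Condition~\ref{C1}, tightness of the triple, Skorokhod representation, identification of the limit via the $G_n,G$ time-change representation in Condition~\ref{c2}, and uniqueness to pin down the full-sequence limit --- and is a reasonable sketch of how that theorem is proved in the original reference. But relative to what the paper does, you are doing strictly more: the appropriate response here is simply to cite \cite[Theorem~5.4]{kurtz1991weak} and move on.
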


\subsection{Fluid limit for order positions and related queues}

We are now ready to establish our first  result.
\begin{theorem}\label{FluidThm}
Given Assumptions \ref{fluidd}, \ref{fluidv}, \ref{flui}, and \ref{cancelprop}, suppose
there exist constants $q^b$, $q^a$, and $z$ such that
\begin{align*}
(Q_n^b(0), Q_n^a(0), Z_n(0)) \Rightarrow (q^b, q^a, z).
\end{align*}
Then, for any $T>0$, %the scaled sequence of process $\tilde{\mathbf{Q}}_n^b$, $\tilde{\mathbf{Q}}_n^a$, and %$\tilde{\mathbf{Z}}_n$ defined in
Eqn.~\eqref{scaledII}
%converges to a limit process  $D^3[0,\infty)$ with $J_1$ topology. That is,
\begin{align*}
&({\mathbf{Q}}_n^b, {\mathbf{Q}}_n^a, {\mathbf{Z}}_n )  \Rightarrow ({\mathbf{Q}}^b, {\mathbf{Q}}^a, {\mathbf{Z}}) \qquad\text{in} \quad (D^3[0,T], J_1),
\end{align*}
where $({\mathbf{Q}}^b, {\mathbf{Q}}^a, {\mathbf{Z}})$ is given by
\begin{align}
&{Q}^{b}(t)=q^{b}-\lambda v^b(t\wedge\tau),\label{Qb}
\\
&{Q}^{a}(t)=q^{a}-\lambda v^a(t\wedge\tau),\label{Qa}
\end{align}
and for $t<\tau$,
\begin{equation} \label{Zd}
\frac{d{Z}(t)}{dt}=-\lambda\left(\bar{V}^{2}+\bar{V}^{3}\frac{{Z}(t-)}{{Q}^{b}(t-)}\right),\qquad {Z}(0)=z.
\end{equation}
Here $\tau =\min\{\tau^a, \tau^b, \tau^z\} $ with
\begin{equation}\label{taub}
\tau^{a}=\frac{q^{a}}{\lambda v^a}, \qquad\qquad \tau^{b}=\frac{q^{b}}{\lambda v^b},
\end{equation}
and
\begin{equation} \label{tauz}
\begin{aligned}
\tau^z=\left\{
\begin{aligned}
&\left(\frac{(1+c)z}{a}+b\right)^{c/(c+1)}b^{1/(c+1)}c^{-1}-b/c \qquad & \qquad c \notin \{-1, 0\}, \\
&b(1-e^{-\frac{z}{ab}}) \qquad & \qquad c=-1, \\
&b\log\left(\frac{z}{ab}+1\right) & \qquad c=0.
\end{aligned}
\right.
\end{aligned}
\end{equation}
Moreover, if $v^b>0$, $v^a>0$, and
$q^{a}/v^a>q^{b}/v^b,$
then $\tau_{n}^{z}\rightarrow \tau^{z}$ a.s. as $n\rightarrow\infty$, where
\begin{equation}\label{abc}
a=\lambda\bar{V}^2,\quad b=q^b/(\lambda\bar{V}^3),\quad
c=-\frac{v^b}{\bar{V}^{3}},
\end{equation}
\begin{equation}
\label{vbva}
v^b=-\bar{V}^{1}+\bar{V}^{2}+\bar{V}^{3}, \ \ v^a=-\bar{V}^{4}+\bar{V}^{5}+\bar{V}^{6}.
\end{equation}
\end{theorem}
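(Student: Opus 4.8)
The plan is to split the three coordinates of~\eqref{scaledII} into two groups. The queue lengths $\mathbf{Q}_n^b$ and $\mathbf{Q}_n^a$ are fixed linear combinations of the coordinates of the net order flow, so their limits come directly from Theorem~\ref{convC}; the order position $\mathbf{Z}_n$ carries the state-dependent coefficient $-Z_n(t-)/Q_n^b(t-)$ and is the one genuinely requiring the Kurtz--Protter theorem (Theorem~\ref{kurtz}).

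I would first dispatch the queues. Since $Q_n^b=Q_n^b(0)+C_n^1-C_n^2-C_n^3$ and $Q_n^a=Q_n^a(0)+C_n^4-C_n^5-C_n^6$, and Theorem~\ref{convC} gives $\overrightarrow{\mathbf{C}}_n\Rightarrow\lambda\overrightarrow{\bar{V}}\mathbf{e}$ with a \emph{deterministic, continuous} limit, $J_1$ convergence is here equivalent to uniform convergence on $[0,T]$, under which these fixed linear combinations converge. This yields $\mathbf{Q}_n^b\Rightarrow\mathbf{Q}^b$ and $\mathbf{Q}_n^a\Rightarrow\mathbf{Q}^a$ of the affine forms~\eqref{Qb}--\eqref{Qa}, with slopes $v^b,v^a$ read from~\eqref{vbva}. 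Because the limits are deterministic I will retain the \emph{almost sure}, uniform-on-compacts version of the convergence (available from the a.s.\ FSLLN behind Theorem~\ref{convC}); this is what powers the hitting-time statement at the end.

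Next, I would cast $\mathbf{Z}_n$ into the form of Theorem~\ref{kurtz} by setting $U_n(t)=Z_n(0)-C_n^2(t)$, $Y_n=C_n^3$, and $F_n(x,s-)=-x(s-)/Q_n^b(s-)$, so that on $[0,t]$ with $t<\tau$ (where the indicator in~\eqref{scaledII} equals one) we have $Z_n(t)=U_n(t)+\int_0^t F_n(Z_n,s-)\,dY_n(s)$. The structural input making this work is Lemma~\ref{zlessthanqb}: the a priori bound $Z_n\le Q_n^b$ forces $|F_n|\le1$, so the coefficient is bounded, and on $[0,t]$, $t<\tau^b$, the deterministic limit $Q^b$ is bounded away from zero, making $x\mapsto-x/Q^b$ Lipschitz there. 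Condition~\ref{C1} is then routine: $Y_n=C_n^3$ is a nonnegative finite-variation process whose expected total variation on compacts is uniformly bounded through Theorem~\ref{convC}, so the decomposition with trivial martingale part suffices. For Condition~\ref{c2}, the only $n$-dependence of $F_n$ beyond its path argument is through $Q_n^b$; as $Q_n^b\to Q^b$ uniformly (a.s.) with both bounded below, $-x/Q_n^b-(-x/Q^b)=-x(Q^b-Q_n^b)/(Q_n^b Q^b)\to0$ uniformly over compact path sets, giving~\ref{c2}(i), while~\ref{c2}(ii) is the Lipschitz bound just noted. With $(U_n,Y_n)\Rightarrow(U,Y)$, $U(t)=z-\lambda\bar{V}^2 t$, $Y(t)=\lambda\bar{V}^3 t$, and with global existence and local uniqueness holding because the limiting equation is a \emph{linear} ODE on $[0,\tau)$, Theorem~\ref{kurtz} delivers $\mathbf{Z}_n\Rightarrow\mathbf{Z}$ with $\mathbf{Z}$ solving the integral equation whose derivative is~\eqref{Zd}. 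Gluing this convergence on $[0,\tau)$ with the freezing of all processes at $\tau$ --- a continuous-mapping argument justified by the transversal crossing of the limit at its active boundary --- yields the asserted convergence in $(D^3[0,T],J_1)$.

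Finally I would integrate~\eqref{Zd} and read off $\tau^z$. Writing $Q^b(t)=q^b-\lambda v^b t=\lambda\bar{V}^3(b+ct)$ with $a,b,c$ as in~\eqref{abc}, the equation is the linear ODE $Z'(t)=-a-Z(t)/(b+ct)$; the integrating factor $\mu(t)=\exp\big(\int_0^t(b+cs)^{-1}\,ds\big)$ and the condition $Z(\tau^z)=0$ produce exactly the three regimes of~\eqref{tauz}, the power law for $c\notin\{-1,0\}$ degenerating to the logarithmic case $c=0$ (constant $Q^b$) and the exponential case $c=-1$ (where the exponent $1/c+1$ vanishes). For the last assertion, $v^b,v^a>0$ and $q^a/v^a>q^b/v^b$ give $\tau^z\le\tau^b<\tau^a$ (the first inequality is the limit of $\tau_n^z\le\tau_n^b$ from Lemma~\ref{zlessthanqb}), so $\tau=\tau^z$; since $a=\lambda\bar{V}^2>0$, the limit $\mathbf{Z}$ crosses zero \emph{transversally} at $\tau^z$, with slope $Z'(\tau^z)=-a\neq0$, and the a.s.\ uniform convergence $Z_n\to Z$ near $\tau^z$ upgrades to $\tau_n^z\to\tau^z$ a.s. The main obstacle is the Kurtz--Protter step, namely verifying Conditions~\ref{C1}--\ref{c2} for a coefficient whose denominator is the \emph{random} process $Q_n^b$; I resolve this with the a priori bound $Z_n\le Q_n^b$ of Lemma~\ref{zlessthanqb} together with the strict ordering $\tau^z<\tau^b$ keeping $Q^b$ away from zero. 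The hitting-time convergence is a secondary difficulty, needing both the transversality of the crossing and the a.s.\ (not merely weak) form of the process limit.
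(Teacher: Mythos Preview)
Your proposal is correct in spirit and arrives at the same limiting ODE and the explicit formulas for $\tau^z$, but your packaging of the Kurtz--Protter step differs from the paper's. The paper applies Theorem~\ref{kurtz} to the \emph{full} triple $X_n=(Q_n^b,Q_n^a,Z_n)$ driven by the six-dimensional $Y_n=\overrightarrow{C}_n$, with the deterministic functional $F(x,s-)$ displayed there; because $Q_n^b$ is a coordinate of the state, the ratio $-x^3(s-)/x^1(s-)$ is a fixed (nonrandom) functional of $x$, and the bulk of the work goes into verifying Condition~\ref{C1} via a martingale-plus-finite-variation decomposition $Y_n=M_n+A_n$ built from the conditional means $\mathbb{E}[\overrightarrow{V}_i\mid\mathcal{G}_{i-1}]$, together with explicit bounds on $\mathbb{E}[[M_n]_t]$ and $\mathbb{E}[T(A_n)_t]$. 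You instead isolate $Z_n$ and take $Y_n=C_n^3$, which makes Condition~\ref{C1} trivial (pure finite variation), but the price is that your $F_n(x,s-)=-x(s-)/Q_n^b(s-)$ carries the \emph{random} process $Q_n^b$ and is therefore not a deterministic functional in the sense of Theorem~\ref{kurtz} as stated; your Condition~\ref{c2}(i) argument via a.s.\ uniform convergence of $Q_n^b$ is the right intuition, but to make it literally fit the hypotheses you would need to reinsert $Q_n^b$ as a state coordinate (at which point you are essentially back to the paper's setup, minus $Q_n^a$). What your route buys is a much lighter Condition~\ref{C1} check and a cleaner use of Lemma~\ref{zlessthanqb} to bound $|F_n|\le 1$; what the paper's route buys is that $F$ is deterministic from the outset, so Condition~\ref{c2} can be dispatched in one line. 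Your treatment of the final assertion $\tau_n^z\to\tau^z$ via the transversal crossing $Z'(\tau^z)=-a\neq 0$ together with a.s.\ uniform convergence is more explicit than the paper, which computes $\tau^z$ and checks $\tau^z<\tau^b$ but does not spell out the hitting-time continuity argument.
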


\begin{proof}
Note that Eqns.~\eqref{Qb}, \eqref{Qa}, \eqref{Zd} satisfy the following SDE's
\begin{align}\label{fluidnotpro}
d\left(\begin{aligned}
&{Q}^b(t)\\
&{Q}^a(t)\\
&{Z}(t)\end{aligned}\right)
=&\left(
\begin{array}{cccccc}
1 &-1 &-1 &0 &0 &0 \\
0 &0 &0 &1 &-1 &-1 \\
0 &-1 &-\frac{{Z}(t-)}{{Q}^b(t-)}&0&0&0
\end{array}
\right)\mathbb{I}_{{Q}^a(t-)>0, {Q}^b(t-)>0, {Z}(t-)>0}
\lambda\overrightarrow{\bar{V}} dt; \\
({Q}^b(0), {Q}^a(0), {Z}(0))=&(q^b, q^a, z).\notag
\end{align}
Therefore, we first show the convergence to Eqn.~\eqref{fluidnotpro}.
Now,  set $Y_n=\overrightarrow{\mathbf{C}}_n$, $X_n=({\mathbf{Q}}_n^b, {\mathbf{Q}}^a_n, {\mathbf{Z}}_n)$, and
\begin{align*}
F_n(x, s-)=F(x, s-)=\left(\begin{aligned}
1& \quad -1&  -1& \quad 0& \quad 0& \quad 0&\\
0& \qquad 0&   0& \quad 1&  \quad -1& \quad -1&\\
0& \quad -1&  \quad -\frac{x^3(s-)}{x^1(s-)}& \quad 0& \quad 0& \quad 0&
\end{aligned}
\right)\mathbb{I}_{x(s-) > 0}.
\end{align*}
In order to apply Theorem \ref{kurtz}, we need to decompose $Y_n$. Now take $\delta=\infty$, define the filtrations
$\mathcal{F}^n_t := \sigma (\{N(s)\}_{0 \le s \le nt}, \{\overrightarrow{V}_i\}_{1 \le i \le N(nt)})$ and $\mathcal{G}_i
:= \sigma (\{\overrightarrow{V}_k\}_{1 \le k \le i})$,
\begin{align*}
M_n(t)=\frac{1}{n}\sum_{i=1}^{N(nt)} \overrightarrow{V}_i -
\mathbb{E}[\overrightarrow{V}_i \mid \mathcal{G}_{i-1} ],
\end{align*}
and
\begin{align*}
A_n(t)=Y_n(t)-M_n(t).
\end{align*}
We will show that $M_n$ is a martingale with respect to $\mathcal{F}^n_{t}$ and $\{Y_n\}_{n \ge 1}$
satisfies Condition \ref{C1} in Theorem \ref{kurtz}.

For $s \in [0, t)$, it is easy to see that
% $\mathcal{F}_{s}^n \cap (N(ns) < i) \subseteq \mathcal{F}^n_{\frac{1}{n}\sum_{k=1}^{i}D_k-} \cap (N(ns) < i)$.
Assumption \ref{flui} implies that
$\mathbb{E}[\overrightarrow{V}_i \mid
\mathcal{F}^n_{\frac{1}{n}\sum_{k=1}^{i-1}D_k}] =
\mathbb{E}[\overrightarrow{V}_i \mid \mathcal{G}_{i-1}]$. Thus
\begin{align*}
 &\mathbb{E}\left[\mathbb{E}\left[\overrightarrow{V}_i \mid \mathcal{G}_{i-1}\right]\big|\mathcal{F}^n_{s} \cap (N(ns) < i) \right] \notag\\
 &=\mathbb{E}\left[\mathbb{E}\left[\overrightarrow{V}_i \mid \mathcal{F}^n_{\frac{1}{n}\sum_{k=1}^{i-1}D_k}\right]\Big|\mathcal{F}^n_{s}\cap(N(ns) < i)\right]\notag \\
 &= \mathbb{E}\left[ \overrightarrow{V}_i\big| \mathcal{F}^n_{s} \cap ( N(ns) < i )\right].
\end{align*}
Meanwhile, $ \mathcal{F}^n_{\frac{1}{n}\sum_{k=1}^{i}D_k} \cap (N(ns) \ge i) \subseteq \mathcal{F}^n_{s} \cap (N(ns) \ge i)$. Thus
\begin{align*}
&\mathbb{E}\left[\mathbb{E}\left[\overrightarrow{V}_i \mid \mathcal{G}_{i-1}\right]\Big| \mathcal{F}^n_{s}\cap (N(ns) \ge i)\right]
\\
&=\mathbb{E}\left[\mathbb{E}\left[\overrightarrow{V}_i\mid \mathcal{F}^n_{\frac{1}{n}\sum_{k=1}^{i}D_k}\right]\Big|\mathcal{F}^n_{s}\cap (N(ns) \ge i)\right]\nonumber
\\
&=\mathbb{E}\left[ \overrightarrow{V}_i \big| \mathcal{F}^n_{\frac{1}{n}\sum_{k=1}^{i}D_k}\cap (N(ns) \ge i)\right]
\nonumber
\\
&=\mathbb{E}\left[\overrightarrow{V}_i \mid \mathcal{G}_{i-1}\cap (N(ns) \ge i)\right]. \nonumber
\end{align*}
Moreover, $\mathbb{E}\left[ \overrightarrow{V}_i\big| \mathcal{F}^n_{s} \cap (N(ns) \ge i)\right]=\overrightarrow{V}_i $
since $\overrightarrow{V}_i$ is measurable with respect to $\mathcal{F}^n_{s} \cap (N(ns) \ge i)$. Therefore,
\begin{align*}
\mathbb{E}\left[M_n(t)\big|\mathcal{F}^n_{s}\right]&=
\mathbb{E}\left[\sum_{i=1}^{N(nt)}
\frac{\overrightarrow{V}_i-\mathbb{E}[\overrightarrow{V}_i \mid \mathcal{G}_{i-1}]}{n}\bigg| \mathcal{F}^n_{s}\right]
\\
%&= \frac{1}{n} %\mathbb{E}\left[\sum_{i=1}^{N(nt)}\overrightarrow{V}_i-\mathbb{E}[\overrightarrow{V}_i|\mathcal{G}_{i-1}]\bigg| %\mathcal{F}^n_{s}\right] \\
&=\frac{1}{n}\sum_{i=1}^{N(ns)}  \left( \mathbb{E}\left[
    \overrightarrow{V}_i \mid  \mathcal{F}^n_{s} \cap (N(ns) \ge i)\right]
-\mathbb{E}\left[\mathbb{E}[\overrightarrow{V}_i \mid \mathcal{G}_{i-1}\cap (N(ns) \ge i)]\Big|\mathcal{F}^n_{s}\right]\right) \\
&\quad+\frac{1}{n}\mathbb{E}\left[\sum_{i=N(ns)+1}^{N(nt)}
\overrightarrow{V}_i-\mathbb{E}[\overrightarrow{V}_i \mid
\mathcal{G}_{i-1}]\bigg| \mathcal{F}^n_{s}\cap (N(ns) < i)\right] \\
&=\frac{1}{n}\sum_{i=1}^{N(ns)}  \left(  \overrightarrow{V}_i
  -\mathbb{E}[\overrightarrow{V}_i  \mid \mathcal{G}_{i-1} \cap (N(ns)\ge i)]\right)\\
&\quad+\frac{1}{n}\lambda n(t-s) \Big(  \mathbb{E}[\overrightarrow{V}_i\big|\mathcal{F}^n_{s}\cap (N(ns) < i)]
-\mathbb{E}\left[\mathbb{E}[\overrightarrow{V}_i \mid \mathcal{G}_{i-1}]\Big|\mathcal{F}^n_{s}\cap (N(ns) < i)\right] \Big)\\
&=\frac{1}{n}\sum_{i=1}^{N(ns)}  \left(  \overrightarrow{V}_i
  -\mathbb{E}[\overrightarrow{V}_i \mid \mathcal{G}_{i-1} \cap (N(ns) \ge i)]\right)=M_t(s).
\end{align*}
And $\mathbb{E} |M_n(t)| < \infty$ follows directly from Assumption \ref{fluidv}. Hence it follows that $M_n(t)$ is a martingale.
The quadratic variance of $M_n(t)$ is as follows:
\begin{align*}
\mathbb{E}\big[[M_n]_t\big]&= \frac{nt}{n^2} \sum_{j=1}^6 \mathbb{E}\left[ \lambda\left(V^j_i
-\mathbb{E}\left[V^j_i\big|\mathcal{G}_{i-1}\right]\right)^2\right] \\
&=\frac{t}{n}\sum_{j=1}^6 \lambda\mathbb{E} \left[\left(V^j_i\right)^2-2V^j_i
\mathbb{E}\left[V^j_i\big|\mathcal{G}_{i-1}\right]+\left(\mathbb{E}\left[V^j_i\big|\mathcal{G}_{i-1}\right]\right)^2\right]\\
&= \frac{t}{n}\sum_{j=1}^6 \lambda\left(\mathbb{E}\left(V^j_i\right)^2
-\mathbb{E}\left(\mathbb{E}\left[V^j_i\big|\mathcal{G}_{i-1}\right]\right)^2\right)
\le \frac{t}{n}\sum_{j=1}^6 \lambda\mathbb{E}\left(V^j_i\right)^2,
\end{align*}
since
\begin{align*}
\mathbb{E}\left[V^j_i\mathbb{E}\left[V^j_i\big|\mathcal{G}_{i-1}\right]\right]
&= \mathbb{E}\left[\mathbb{E}\left[V^j_i\mathbb{E}\left[V^j_i\big|\mathcal{G}_{i-1}\right]\right]\Big|\mathcal{G}_{i-1}\right]
=\mathbb{E}\left(\mathbb{E}\left[V^j_i\big|\mathcal{G}_{i-1}\right]\right)^2.
\end{align*}
Thus $\mathbb{E}\left[[M_n]_t\right]$ is bounded uniformly in $n$
since $\overrightarrow{V}_i$ is square-integrable.
Let $[T(A_n)]_t$ denote the total variation of $A_n$ up to time $t$. Then $\mathbb{E}\left[[T(A_n)]_t\right]$ is also uniformly bounded in $n$, as
\begin{align}
\mathbb{E}\left[[T(A_n)]_t\right]
&=t\sum_{j=1}^6 \lambda\mathbb{E}\left|\mathbb{E}\left[V^j_i\big|\mathcal{G}_{i-1}\right] \right|
\enspace \leq \enspace t\sum_{j=1}^6 \lambda\mathbb{E}\left[\mathbb{E}\left[|V^j_i|\big|\mathcal{G}_{i-1}\right]\right]
\label{conditionalJensen}
\\
&=t\sum_{j=1}^6 \lambda\mathbb{E}|V_1^j| \enspace < \enspace \infty,\label{L1L2}
\end{align}
where the inequality in
Eqn.~\eqref{conditionalJensen} uses the Jensen's inequality for conditional expectations
and
Eqn.~\eqref{L1L2} follows from the  square-integrability assumption.
Thus, $Y_n$ satisfies Condition \ref{C1} with $\tau_n^\alpha=\alpha+1$.
Moreover, taking $G_n(x\circ\mathbf{e},\mathbf{e})=F_n(x)=F(x)$,
it is easy to see that Condition \ref{c2} is satisfied according to \cite{kurtz1991weak}.

Now ${Q}^{a}(t)=0$ when $t=\tau^a$ as given in Eqn.~\eqref{taub};
$\tau^{a}>0$ if $\bar{V}^{4}-\bar{V}^{5}-\bar{V}^{6}<0$; otherwise ${Q}^{a}(t)$ never hits zero in which
case define $\tau^{a}=\infty$. The case for $\tau^b$ is similar.

 It remains to find the unique solution  for the limit Eqn.~\eqref{fluidnotpro}.
%hence these conditions are naturally satisfied.
%Clearly, Eqn.~\eqref{Qb} and
%Eqn.~\eqref{Qa} are solutions to ${Q}^b(t)$, ${Q}^a(t)$ before hitting 0. Moreover, $Z(t)$ satisfies %Eqn.~\eqref{Zd}.
%Similarly, ${Q}^{b}(t)=0$ when $
%t=\tau^{b}:=\frac{-q^{b}}{\lambda(\bar{V}^{1}-\bar{V}^{2}-\bar{V}^{3})}.$
%If $\tau^{b}>0$, i.e., $\bar{V}^{1}-\bar{V}^{2}-\bar{V}^{3}<0$ and ${Q}^{b}(t)$ never hits %zero otherwise (in which
%case we can define $\tau^{b}:=\infty$).
The equation for $Z(t)$ when ${Z}(t-)>0$
%can be simplified to
%\begin{align*}
%&\frac{d{Z}(t)}{dt}+\frac{1}{b+ct}{Z}(t-) \mathbb{I}_{t < \tau}=-a %\mathbb{I}_{t<\tau},\\
%&Z(0)=z.
%\end{align*}
is a first-order linear ODE with the solution
\begin{align}
\label{Z}
{Z}(t)=\left\{
\begin{aligned}
&-\frac{a}{1+c}(b+c(t\wedge\tau))+\left(z+\frac{ab}{1+c}\right)\left(\frac{b}{b+c(t\wedge\tau)}\right)^{1/c} \qquad & \qquad c \notin\{-1,0\},\\
&(a \log (b-(t\wedge\tau))+z/b-a\log b) \cdot (b-(t\wedge\tau)) \qquad & \qquad c=-1,
\\
&(z+ab)e^{-t/b}-ab & \qquad c=0.
\end{aligned}
\right.
\end{align}
From the solution, we can solve $\tau^z$ explicitly as given in Eqn.~\eqref{tauz}.
Note that the expression for ${Z}(t)$ may not be
monotonic and there might be multiple roots when $c\neq 0$.
Nevertheless, it is easy to check that the solution given in Eqn.~\eqref{tauz} is the smallest positive root.
For instance, when $c\notin\{-1,0\}$, there are two roots $-b/c$ and $\left(\frac{(1+c)z}{a}+b\right)^{c/(c+1)}b^{1/(c+1)}c^{-1}-b/c$
and when $c=-1$, there are two roots $b$ and $b(1-e^{-\frac{z}{ab}})$.  More computations confirm that
%\begin{itemize}
%\item [i] when $c > 0$, $-b/c <0 $ while %$\left(\frac{(1+c)z}{a}+b\right)^{c/(c+1)}b^{1/(c+1)}c^{-1}-b/c > 0$,
%\item [ii] when $c < 0$ and $c\neq -1$, $-b/c > %\left(\frac{(1+c)z}{a}+b\right)^{c/(c+1)}b^{1/(c+1)}c^{-1}-b/c > 0$,
%\item [iii] when $c=-1$, $b>b(1-e^{-\frac{z}{ab}})$.
%\end{itemize}
 indeed the smallest positive roots are $\tau^z = \left(\frac{(1+c)z}{a}+b\right)^{c/(c+1)}b^{1/(c+1)}c^{-1}-b/c$ for $c\notin\{-1,0\}$
and $\tau^z=b(1-e^{-\frac{z}{ab}})$ for $c=-1$.
Moreover, $\tau^z < \tau^b$ from these calculations.
Therefore $\tau=\min\{\tau^a, \tau^z\}$ is well defined and finite.
\end{proof}

%\begin{figure}[here]
% \centering
 %\subfloat[Case of $c \notin \{-1, 0\}$]{\label{fluid1}\includegraphics[width=0.5\textwidth]{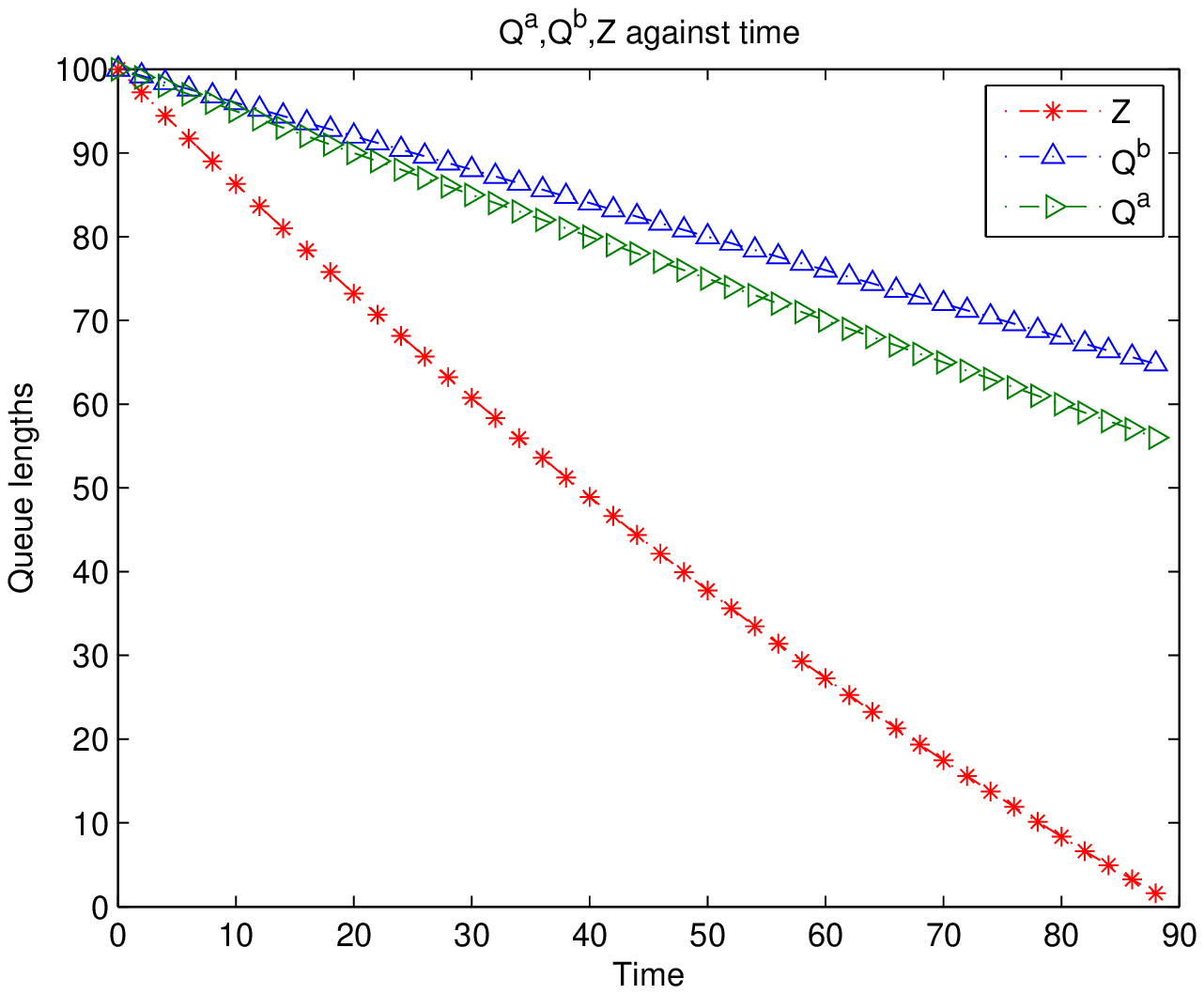}}
 %\subfloat[Case of $c = %-1$]{\label{fluid2}\includegraphics[width=0.5\textwidth]{fluid2.eps}}\\
 %\subfloat[Case of $c = 0$]{\label{fluid3}\includegraphics[width=0.5\textwidth]{fluid3.eps}}
 %\caption{Illustration of the fluid limit $(Q^{b}(t),Q^{a}(t),Z(t))$}
%\end{figure}
The following figure  illustrates the fluid limits of $(Q^b(t), Q^a(t), Z(t))$
with $Q^b(0)=Q^a(0)=Z(0)=100$, $\lambda=1$, $\bar{V}^1=\bar{V}^4=1$, $\bar{V}^2=0.6, \bar{V}^3=0.8, \bar{V}^5=0.7, \bar{V}^6=0.8$. %parameters.
% Figure \ref{figure1} takes $Q^b(0)=Q^a(0)=Z(0)=100$, $\lambda=1$, $\bar{V}^1=\bar{V}^4=1$, $\bar{V}^2=0.6, %\bar{V}^3=0.8, \bar{V}^5=0.7, \bar{V}^6=0.8$.
%\begin{figure}[htp]
%\centering
%\subfloat[Queue lengths and the order position vs. %time]{\label{fluid1}
%\includegraphics[width=0.45\textwidth]{fluid1.eps}}
%\subfloat[Case of $c = %-1$]{\label{fluid2}\includegraphics[width=0.5\textwidth]{fluid2.eps}}\\
%\subfloat[Case of $c = 0$]{\label{fluid3}\includegraphics[width=0.5\textwidth]{fluid3.eps}}
%\subfloat[$Q^b(\tau) - Z(\tau)$ vs. $\bar{V}^2$ and %$\bar{V}^3$]{\label{fluid4}\includegraphics[width=0.45\textwidth]{dif_V2_V3.eps}}\\
%\subfloat[$Q^b(\tau) - Z(\tau)$ vs. %$\bar{V}^3$]{\label{fluid5}\includegraphics[width=0.45\textwidth]{dif_V3.eps}}
%\subfloat[$Q^b(\tau) - Z(\tau)$ vs. $\bar{V}^2$]{\label{fluid6}\includegraphics[width=0.45\textwidth]{dif_V2.eps}}\\
%\subfloat[$\tau^b - \tau^z$ vs. %$\bar{V}^3$]{\label{fluid7}\includegraphics[width=0.45\textwidth]{tau_V3.eps}}
%\subfloat[$\tau^b - \tau^z$ vs. %$\bar{V}^2$]{\label{fluid8}\includegraphics[width=0.45\textwidth]{tau_V2.eps}}
%\caption{Illustration of the fluid limit $(Q^{b}(t),Q^{a}(t),Z(t))$}
%\end{figure}
%Figure \ref{figure2} takes $\bar{V}^3=1.3$ with $\bar{V}^2$ varying from $1.3$ to $3.3$, and
%Figure \ref{figure3} takes $\bar{V}^2=1.3$ with $\bar{V}^3$ varying from $1.3$ to $3.3$.
\begin{figure}[htp]
\begin{center}
\includegraphics[width=3.5in,height=2.5in]{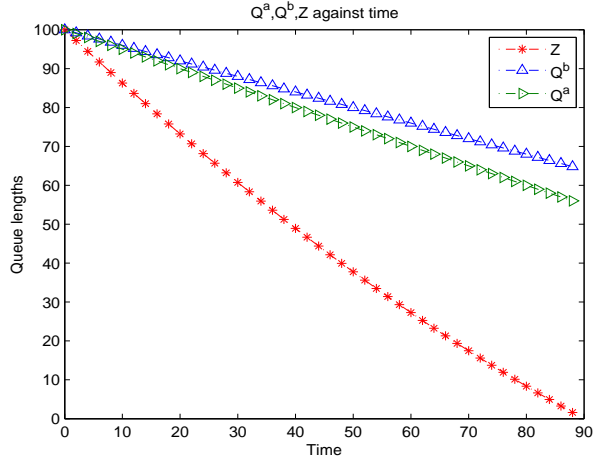}
\end{center}
\caption{Illustration of the fluid limit$(Q^{b}(t),Q^{a}(t),Z(t))$.}
 \label{figure1}
\end{figure}

%\begin{figure}[htp]
%\begin{center}
%\includegraphics[width=3.5in,height=2.5in]{zqratio_V2.eps}
%\end{center}
%\caption{Illustration of the ratio $Z(t)/Q^{b}(t)$ with different $\bar{V}^2$.}\label{figure2}
%\end{figure}

%\begin{figure}[htp]
%\begin{center}
%\includegraphics[width=3.5in,height=2.5in]{zqratio_V3.eps}
%\end{center}
%\caption{Illustration of the ratio $Z(t)/Q^{b}(t)$ with different $\bar{V}^3$.}\label{figure3}
%\end{figure}

\section{Fluctuation analysis}

The fluid limits in the previous section
are essentially  functional strong law of large numbers, and may well be regarded as
the ``first order'' approximation for order positions and related queues.
In this section, we will proceed to obtain a ``second order'' approximation for these processes.
We will first derive appropriate diffusion limits for the queues, and then analyze how these processes ``fluctuate'' around their corresponding fluid limits.
In addition, we will also apply the large deviation principles to compute the probability of the rare events that these processes deviate from their fluid limits.

\subsection{Diffusion limits for the best bid and best ask queues}
We will adopt the same notation for the order arrival processes as in the previous section. However,
we will need stronger assumptions for the diffusion limit analysis.

There is rich literature on multivariate Central Limit Theorems (CLTs) under some mixing conditions, e.g., Tone~\cite{tone2010central}.
However, these are not functional CLTs (FCLTs) with mixing conditions.
In the literature of limit theorems for associated random fields,
FCLTs are derived under some weak dependence conditions with explicit formulas for asymptotic covariance of the limit process.
Here, to establish FCLTs for $\{\overrightarrow{V}_i\}_{i \ge 1}$, we will
follow as in ~\cite{burton1986invariance}.
Readers can find more details in the framework of
Bulinski and Shashkin~\cite[Chapter 5, Theorem 1.5]{bulinski2007limit}.

\begin{assumption} \label{assumptionD}
$\{N(i,i+1]\}_{i\in\mathbb{Z}}$ is a stationary and ergodic sequence, with $\lambda:=\mathbb{E}[N(0,1]]<\infty$, and
\begin{equation}\label{L2Condition}
\sum_{n=1}^{\infty}\Vert\mathbb{E}[N(0,1]-\lambda \mid \mathcal{F}^{-\infty}_{-n}]\Vert_{2}<\infty,
\end{equation}
where $\Vert Y\Vert_{2}=(\mathbb{E}[Y^{2}])^{1/2}$ and $\mathcal{F}^{-\infty}_{-n}:=\sigma(N(i,i+1],i\leq -n)$.
\end{assumption}

%With this stationary, uniform mixing condition on $\{D_i\}_{i \ge 1}$, it follows that,
%\begin{theorem}
%If Assumption \ref{assumptionD} holds, then
%\begin{equation}
%v_d^2=\mbox{Var}(D_1)+2\sum_{i=2}^{\infty}\mbox{Cov}(D_1, D_i) < \infty
%\end{equation}
%and for any $T>0$,
%\begin{align}
%\mathbf{\Phi}_n^D(t) \Rightarrow  v_d \mathbf{W}_1 \qquad \text{a.s. in $(D[0,T], J_1)$ as $n\rightarrow\infty$}
%\end{align}
%where $\mathbf{W}_1$ is a standard Brownian motion.
%\end{theorem}
%\begin{proof}
%With the stationary, uniform mixing condition, thanks to Ward Whitte (\cite[Theorem 4.4.1]{whitt2002stochasticstochastic})
%the sequence $\{D_i\}_{i \ge 1}$ obeys the functional central limit theorem (FCLT), and the following series converges
%\begin{align}
%v_d^2=\mbox{Var}(D_1)+2\sum_{i=2}^{\infty}\mbox{Cov}(D_1, D_{i}).
%\end{align}
%i.e., the desired result holds.
%\end{proof}
%\begin{assumption}
%$\{\overrightarrow{V}_i\}_{i \ge 1}$ is a stationary, uniform mixing array of random vectors satisfying
%\begin{align}
%\mathbb{E}[\overrightarrow{V}_1] = \overrightarrow{\bar{V}}=(\bar{V}^{lb}, \bar{V}^{mb}, \bar{V}^{cb}, \bar{V}^{la}, \bar{V}^{ma}, \bar{V}^{ca})\\
%v_j^2=\mbox{Var}(V_1^j)+2\sum_{i=2}^{\infty} \mbox{Cov}(V_1^j, V_i^j) \qquad \text{for} \qquad j \in \mathbb{O}
%\end{align}
%\end{assumption}

%In addition to Assumption \ref{assumptionD} and Assumption \ref{flui}, we also need to impose %conditions for $\{\overrightarrow{V}_i\}_{i \ge 1}$
%to make the FCLT also hold for it.

\begin{assumption}
\label{assumptionV}
Let $n \in \mathbb{N}$ and $\mathcal{M}(n)$ denote the class of real-valued bounded coordinate-wise non-decreasing Borel functions on $\mathbb{R}^n$.
Let $|I|$ denote the cardinality of $I$ when $I$ is a set, and  $||\cdot||$ denote the $L^{\infty}$-norm. Let $\{\overrightarrow{V}_i\}_{i \ge 1}$
be a stationary sequence of $\mathbb{R}^6$-valued random vectors and for any finite set $I \subset \mathbb{N}$, $J \subset \mathbb{N}$,
and any $f, g\in \mathcal{M}(6|I|)$, one has
\begin{align*}
\text{Cov}(f(\overrightarrow{V}_I), g(\overrightarrow{V}_J)) \ge 0.
\end{align*}
Moreover, for $1\le j \le 6$,
\begin{equation*}
v_j^2 = \mbox{Var}(V_1^j)+2\sum_{i=2}^{\infty} \mbox{Cov}(V_1^j, V_i^j)  < \infty.
\end{equation*}
\end{assumption}

\paragraph{Remark.}
Note that an i.i.d. sequence $\{\overrightarrow{V}_i\}_{i \ge 1}$ clearly satisfies the above assumption if $\overrightarrow{V}_1$ is square-integrable.
It is not difficult to see that Assumption \ref{assumptionD} implies Assumption \ref{fluidd}, and Assumption \ref{assumptionV} implies Assumption \ref{fluidv}. In particular, Theorem \ref{FluidThm} holds under
 Assumptions \ref{assumptionD} and \ref{assumptionV}.

With these assumptions,  we can define
%To avoid confusion about the notation, we use different notation here. Roughly speaking,
%$\Phi$ corresponds to $S$ in the fluid setting, $\overrightarrow{\Psi}$ corresponds to
%$\overrightarrow{C}$ in the fluid setting, and $R$ corresponds to $Q$ in the fluid setting.
%The time scaled processes $\mathbf{\Phi}_{n}^{D}$, $\overrightarrow{\mathbf{\Phi}}_{n}^{V}$ %are defined by
 the centered and scaled net order flow $\overrightarrow{\mathbf{\Psi}}_{n}=(\overrightarrow{\Psi}_{n}(t), t\ge 0)$ by
\begin{align}\label{scaledc}
\overrightarrow{\Psi}_{n}(t)=\frac{1}{\sqrt{n}}\left(\sum_{i=1}^{N(nt)}\overrightarrow{V}_{i}-\lambda\overrightarrow{\bar{V}} nt\right).
%= \frac{1}{\sqrt{n}}\left(\sum_{i=1}^{N(nt)}V^j_{i}-\lambda\bar{V}^j nt, 1\le j \le 6 \right).
\end{align}
Here,
\begin{equation*}
\overrightarrow{\bar{V}}=(\bar{V}^j, 1\le j\le 6)=(\mathbb{E}[V^j_i], 1\le j\le 6),
  \end{equation*}
  is the mean vector of  order sizes.
%  $\overrightarrow{\mathbf{\Psi}}_{n}$ is a six-dimensional process, denoting %the net order flow. %Both the time scaled best bid queue $\mathbf{R}_n^b$
%and time scaled best ask queue $\mathbf{R}_n^a$ are driven by this process. We aim to show the %convergence of $(\mathbf{R}_n^b, \mathbf{R}_n^a)$,

Next, define $\mathbf{R}_n^b$ and $\mathbf{R}_n^a$, the time rescaled queue length for the best bid and best ask respectively, by
\begin{align*}
d R_n^b(t) = d (\Psi^1_n(t)+\lambda \bar{V}^1t) -d(\Psi^2_n(t)+\lambda \bar{V}^2t)-d(\Psi^3_n(t)+\lambda \bar{V}^3t),\\
d R_n^b(t) = d (\Psi^4_n(t)+\lambda \bar{V}^4t) -d(\Psi^5_n(t)+\lambda \bar{V}^5t)-d(\Psi^6_n(t)+\lambda \bar{V}^6t).
\end{align*}
The definition of the above equations is intuitive just as their fluid limit counterparts. The only modification here is that
 the drift terms is added back to the dynamics of the queue lengths because $\overrightarrow{\mathbf{\Psi}}$ has been re-centered.
The  equations can also be written in a more compact matrix form,
\begin{equation} \label{Qn}
d\left(\begin{aligned}
&R_n^b(t)\\ &R_n^a(t)
\end{aligned}\right)
=A\cdot d\left(\overrightarrow{\Psi}_n(t) + \lambda\overrightarrow{\bar{V}}t \right),
% (R_n^b(0), R_n^a(0)) &= (q^b(0), q^a(0)),
\end{equation}
with the linear transformation matrix
\begin{equation}
\label{equation A}
A=\left(
\begin{array}{cccccc}
1 &-1 &-1 &0 &0 &0 \\
0 &0 &0 &1 &-1 &-1
\end{array}
\right).
\end{equation}
However, Eqn.~\eqref{Qn} may not be well defined, unless  $R_n^b (t) >0 $ and $R_n^a (t) > 0$.
%For instance, when one of $R_n^b (t) >0 $ and $R_n^a (t) > 0$ hits zero, meaning the %corresponding queue is depleted, then .
As in the fluid limit analysis, one may truncate the process at the time when one of the queues vanishes. That is, define
\begin{equation} \label{taun}
\iota_n^a=\inf\{t: R_n^a(t)\le 0\}, \qquad \iota_n^b=\inf\{t: R_n^b(t) \le 0\}, \qquad \iota_n = \inf\{\iota_n^a, \iota_n^b \},
\end{equation}
and define the truncated process $({\mathbf{R}}_n^b, {\mathbf{R}}_n^a)$ by
\begin{equation}
\begin{aligned}\label{QnTruncated}
d\left(\begin{aligned}
&{R}_n^b(t)\\ &{R}_n^a(t)
\end{aligned}\right)
&=A \mathbb{I}_{t \le \iota_n}
 \cdot d\left(\overrightarrow{\Psi}_{n}(t) + \lambda\overrightarrow{\bar{V}}t \right) \qquad \text{with}\qquad
 \left(\begin{aligned}
 &{R}_n^b(0)\\ &{R}_n^a(0)
 \end{aligned}\right) &= \left(\begin{aligned}
 &R_n^b(0)\\ & R_n^a(0)
 \end{aligned}\right).
 \end{aligned}
\end{equation}
Now, we will show
\begin{theorem}\label{DiffThm}
Given Assumptions \ref{flui}, \ref{assumptionD}, and \ref{assumptionV}, for any $T>0$,
%Suppose the sequence $\{D_i\}_{i \ge 1}$ satisfies Assumption \ref{assumptionD}, and %$\{\overrightarrow{V}_{i}\}_{i \ge 1}$
%satisfies Assumption \ref{assumptionV}, and also Assumption \ref{flui} holds.
\begin{itemize}
\item We have
\begin{align}
\label{Eqn_DiffThm}
\overrightarrow{\mathbf{\Psi}}_{n}  \Rightarrow \overrightarrow{\mathbf{\Psi}} \stackrel{d.}{=}  \Sigma \overrightarrow{\mathbf{W}}\circ \lambda \mathbf{e}
- \overrightarrow{\bar{V}}v_d  \mathbf{W}_1 \circ \lambda \mathbf{e}  \qquad \text{in} \quad (D^6[0, T], J_1).
\end{align}
Here $\mathbf{W}_1$ is a standard scalar Brownian motion, $v_d$ is given by Eqn.~\eqref{vd}, $\overrightarrow{\mathbf{W}}$ is
a standard six-dimensional Brownian motion independent of $\mathbf{W}_1$,
$\circ$ denotes the composition of functions, and $\Sigma$ is given by $\Sigma \Sigma^T =(a_{jk})$ with
\begin{align} \label{sigma1}
a_{jk}&=\left\{
\begin{aligned}
v_j^2  \qquad & \text{for} \qquad j=k,\\
\rho_{j,k} v_jv_k \qquad &\text{for} \qquad j\neq k,
\end{aligned}
\right.
\end{align}
and
\begin{align}\label{sigma2}
\begin{split}
v_j^2&=\mbox{Var}(V_1^j)+2\sum_{i=2}^{\infty} \mbox{Cov}(V_1^j, V_i^j), \\
\rho_{j,k}&=\frac{1}{v_jv_k}\left( \mbox{Cov}(V_1^j,V_1^k)+\sum_{i=2}^{\infty}
\left( \mbox{Cov}(V_1^j, V_i^k)+\mbox{Cov}(V_1^k, V_i^j) \right) \right).
\end{split}
\end{align}
That is, $\overrightarrow{\mathbf{\Psi}}=(\mathbf{\Psi}^j, 1\le j\le 6)$ is a six-dimensional Brownian motion with zero drift
and variance-covariance matrix $(\lambda \Sigma^T\Sigma+\lambda v_d^2 \overrightarrow{\bar{V}}\cdot \overrightarrow{\bar{V}}^T)$.
\item If $(R_n^b(0), R_n^a(0)) \Rightarrow (q^b, q^a),$
then for any $T>0$,
\begin{align*}
\left(\begin{aligned}
&{\mathbf{R}}_n^b\\
&{\mathbf{R}}_n^a\end{aligned}\right)
 \Rightarrow
\left(\begin{aligned}
&{\mathbf{R}}^b\\
&{\mathbf{R}}^a\end{aligned}\right)
 \qquad \text{in}\quad (D^2[0, T], J_1).
\end{align*}
\end{itemize}
Here, the diffusion limit process $({\mathbf{R}}^b, {\mathbf{R}^a})^T$ up to the first hitting time of the boundary is a
two-dimensional Brownian motion with drift $\overrightarrow{\mu}$ and the variance-covariance matrix as
\begin{align} \label{mu_sigma}
\overrightarrow{\mu} :=(\mu_1, \mu_2)^T = \lambda A \cdot \overrightarrow{\bar{V}}
\qquad \text{and}\qquad \sigma\sigma^T := A\cdot (\lambda \Sigma^T\Sigma+\lambda v_d^2 \overrightarrow{\bar{V}}\cdot \overrightarrow{\bar{V}}^T)\cdot A^T.
\end{align}
\end{theorem}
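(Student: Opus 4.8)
The plan is to prove the first bullet (the FCLT for $\overrightarrow{\mathbf{\Psi}}_n$) by isolating the two independent sources of randomness — the order sizes $\{\overrightarrow{V}_i\}$ and the arrival counting process $N$ — and to deduce the second bullet from the first by a fixed linear map together with a truncation argument. The starting point is the algebraic decomposition
\begin{align*}
\overrightarrow{\Psi}_n(t) = \frac{1}{\sqrt{n}}\sum_{i=1}^{N(nt)}\bigl(\overrightarrow{V}_i-\overrightarrow{\bar{V}}\bigr) + \overrightarrow{\bar{V}}\,\frac{1}{\sqrt{n}}\bigl(N(nt)-\lambda nt\bigr),
\end{align*}
in which I set $\overrightarrow{\Phi}_n(s)=n^{-1/2}\sum_{i=1}^{\lfloor ns\rfloor}(\overrightarrow{V}_i-\overrightarrow{\bar{V}})$ and $T_n(t)=N(nt)/n$, so that the first term is the composition $\overrightarrow{\Phi}_n\circ T_n$. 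For $\overrightarrow{\Phi}_n$ I would invoke the multivariate FCLT for associated (positively correlated) random fields of Bulinski and Shashkin~\cite{bulinski2007limit} in the spirit of~\cite{burton1986invariance}; its hypotheses — positive association of coordinatewise-monotone functionals and summability of the cross-covariances — are exactly what Assumption~\ref{assumptionV} supplies, and they yield $\overrightarrow{\Phi}_n\Rightarrow\Sigma\overrightarrow{\mathbf{W}}$ with $\Sigma\Sigma^T=(a_{jk})$ from~\eqref{sigma1}--\eqref{sigma2}. Since Assumption~\ref{assumptionD} implies Assumption~\ref{fluidd}, the FSLLN gives $T_n\to\lambda\mathbf{e}$ uniformly on compacts, and the random--time--change theorem (\cite[Ch.~13]{whitt2002stochastic}) upgrades the first term to $\Sigma\overrightarrow{\mathbf{W}}\circ\lambda\mathbf{e}$.

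Next I would treat the counting-process term $n^{-1/2}(N(nt)-\lambda nt)$. Under Assumption~\ref{assumptionD}, the stationary ergodic sequence $\{N(i,i+1]\}$ satisfies the $L^2$ summability condition~\eqref{L2Condition}, which is precisely the hypothesis of a one-dimensional FCLT for stationary sequences; together with the inverse relationship between a counting function and its inter-arrival partial sums (again via~\cite{whitt2002stochastic}), this produces a scalar Brownian limit, which in the normalization of~\eqref{vd} and matched to the $\lambda$-time change of the first term takes the form $-v_d\,\mathbf{W}_1\circ\lambda\mathbf{e}$. The independence of $\{D_i\}$ and $\{\overrightarrow{V}_i\}$ (Assumption~\ref{flui}) makes the limiting $\overrightarrow{\mathbf{W}}$ and $\mathbf{W}_1$ independent and forces the two displays to converge \emph{jointly}; adding them via the continuous-mapping theorem gives the claimed limit $\Sigma\overrightarrow{\mathbf{W}}\circ\lambda\mathbf{e}-\overrightarrow{\bar{V}}v_d\,\mathbf{W}_1\circ\lambda\mathbf{e}$. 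A direct covariance computation, using that $\mathbf{W}(\lambda t)$ has variance $\lambda t$ and that the two pieces are independent, then identifies $\overrightarrow{\mathbf{\Psi}}$ as a zero-drift Brownian motion with covariance $\lambda\Sigma^T\Sigma+\lambda v_d^2\,\overrightarrow{\bar{V}}\cdot\overrightarrow{\bar{V}}^T$.

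For the second bullet, integrating~\eqref{Qn} gives $(R_n^b,R_n^a)^T=(R_n^b(0),R_n^a(0))^T+A\bigl(\overrightarrow{\Psi}_n+\lambda\overrightarrow{\bar{V}}\mathbf{e}\bigr)$; since $A$ is a fixed linear map and $(R_n^b(0),R_n^a(0))\Rightarrow(q^b,q^a)$, the continuous-mapping theorem delivers convergence to the two-dimensional Brownian motion with drift $\lambda A\overrightarrow{\bar{V}}$ and covariance $A(\lambda\Sigma^T\Sigma+\lambda v_d^2\overrightarrow{\bar{V}}\cdot\overrightarrow{\bar{V}}^T)A^T$, as in~\eqref{mu_sigma}. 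The remaining point is the truncation at $\iota_n$ from~\eqref{taun}: I would argue that the first-passage/stopping map is almost surely continuous at the limiting Brownian path (which crosses the boundary regularly, with no tangential touching), so that the truncated processes $(\mathbf{R}_n^b,\mathbf{R}_n^a)$ converge to the limit Brownian motion run up to its first hitting time of the boundary.

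The hardest part, I expect, is step two in combination: establishing the multivariate FCLT for $\overrightarrow{\Phi}_n$ under only the weak-dependence/association hypotheses of Assumption~\ref{assumptionV}, and then \emph{legitimately} splicing it together with the random time change $T_n$ and the independent counting-process limit. In particular, the delicate issues are verifying joint (not merely marginal) weak convergence of $(\overrightarrow{\Phi}_n,T_n,n^{-1/2}(N(n\cdot)-\lambda n\cdot))$, and confirming the almost-sure continuity of the composition and truncation maps at the limit so that the continuous-mapping and random-time-change theorems apply.
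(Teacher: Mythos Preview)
Your proposal is correct and follows essentially the same route as the paper: the same algebraic decomposition into a centered-sum piece and a counting-process piece, the Burton--Dabrowski--Dehling/Bulinski--Shashkin FCLT for the $\{\overrightarrow V_i\}$ under Assumption~\ref{assumptionV}, the Billingsley-type stationary-sequence FCLT for $N$ under Assumption~\ref{assumptionD}, joint convergence via Assumption~\ref{flui}, and then the random-time-change/composition machinery of Whitt for the splice. The only cosmetic difference is packaging: the paper invokes \cite[Corollary~13.3.2]{whitt2002stochastic} to obtain the compound limit in one step (whence the form with the minus sign in~\eqref{Eqn_DiffThm}), whereas you unpack it as composition plus addition; and for the second bullet the paper passes through the Cram\'er--Wold device before applying continuous mapping to the truncation, which is equivalent to your direct linear-map argument.
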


\begin{proof}
First, define $\mathbf{N}_n$ by
\begin{align*}
N_n(t) = \frac{N(nt) - n \lambda t}{\sqrt{n}}.
\end{align*}
Now recall the FCLT from \cite[Page 197]{billingsley1968convergence}.
For a stationary, ergodic, and mean-zero sequence $(X_{n})_{n\in\mathbb{Z}}$,
that satisfies $\sum_{n\geq 1}\Vert\mathbb{E}[X_{0} \mid \mathcal{F}^{-\infty}_{-n}]\Vert_{2}<\infty$,
we have $\frac{1}{\sqrt{n}}\sum_{i=1}^{\lfloor n\cdot\rfloor}X_{i}\Rightarrow W_1(\cdot)$
on $(D[0,T],J_{1})$ with $v_d^{2}=\mathbb{E}[X_{0}^{2}]+2\sum_{n=1}^{\infty}\mathbb{E}[X_{0}X_{n}]<\infty$, where $\mathbf{W}_1$ is a standard one-dimensional Brownian motion.
Since the sequence $\{N(i,i+1]\}_{i\in\mathbb{Z}}$ satisfies Assumption \ref{assumptionD},
\begin{equation*}
\frac{N_{\lfloor n \cdot \rfloor}-\lambda\lfloor n\cdot \rfloor}{\sqrt{n}}\Rightarrow v_d W_1(\cdot),
\end{equation*}
in $(D[0,T],J_{1})$ as $n\rightarrow\infty$, where
\begin{equation} \label{vd}
v_d^{2}=\mathbb{E}[(N(0,1]-\lambda)^{2}]
+2\sum_{j=1}^{\infty}\mathbb{E}[(N(0,1]-\lambda)(N(j,j+1]-\lambda)]<\infty.
\end{equation}

Next, for any $\epsilon>0$ and $n$ sufficiently large,
\begin{align*}
&\mathbb{P}\left(\sup_{0\leq s\leq T}\left|\frac{N_{\lfloor ns\rfloor}-\lambda\lfloor ns\rfloor}{\sqrt{n}}
-\frac{N_{ns}-\lambda ns}{\sqrt{n}}\right|>\epsilon\right)
\\
%&=\mathbb{P}\left(\sup_{0\leq s\leq T}\left|(N_{\lfloor ns\rfloor}-N_{ns})
%+\lambda(ns-\lfloor ns\rfloor)\right|>\epsilon\sqrt{n}\right)\nonumber
%\\
%&\leq\mathbb{P}\left(\sup_{0\leq s\leq T}\left|N_{\lfloor %ns\rfloor}-N_{ns}\right|+\lambda>\epsilon\sqrt{n}\right)\nonumber
%\\
&\leq\mathbb{P}\left(\max_{0\leq k\leq \lfloor nT\rfloor,k\in\mathbb{Z}}N[k,k+1]>\epsilon\sqrt{n}-\lambda\right)\nonumber
\\
&\leq(\lfloor nT\rfloor+1)\mathbb{P}(N[0,1]>\epsilon\sqrt{n}-\lambda)\nonumber
\\
&\leq\frac{\lfloor nT\rfloor+1}{(\epsilon\sqrt{n}-\lambda)^{2}}
\int_{N[0,1]>\epsilon\sqrt{n}-\lambda}N[0,1]^{2}d\mathbb{P}\rightarrow 0,\nonumber
\end{align*}
as $n \rightarrow\infty$. Hence,  $\mathbf{N}_n \Rightarrow v_d \mathbf{W}_1$
in $(D[0,T],J_{1})$ as $n \rightarrow\infty$.

Moreover, thanks to ~\cite[Theorem 2]{burton1986invariance},
Assumption \ref{assumptionV} implies
\begin{align*}
\overrightarrow{\mathbf{\Phi}}_{n}^{V} \Rightarrow \Sigma \overrightarrow{\mathbf{W}} \qquad \text{in} \qquad (D^6[0, T], J_1),
\end{align*}
where $\overrightarrow{\mathbf{W}}$ is a standard six-dimensional Brownian motion and $\Sigma$ is a $6\times 6$ matrix
representing the covariance scale of the limit process. Furthermore,
the expression of $\Sigma$ by
Eqn.~\eqref{sigma1} and
Eqn.~\eqref{sigma2} can be explicitly computed following ~\cite[Theorem 2]{burton1986invariance}.

Now, by Assumption \ref{flui}, the joint convergence is
guaranteed by \cite[Theorem 11.4.4]{whitt2002stochastic}, i.e.,
\begin{equation*}
(\mathbf{N}_n, \overrightarrow{\mathbf{\Phi}}_n^V) \Rightarrow (v_d \mathbf{W}_{1}, \Sigma \overrightarrow{\mathbf{W}})  \qquad \text{in} \quad (D^7[0,T], J_1).
\end{equation*}
Moreover, by \cite[Corollary 13.3.2]{whitt2002stochastic}, we see
\begin{align*}
\overrightarrow{\mathbf{\Psi}}_{n}  \Rightarrow \overrightarrow{\mathbf{\Psi}} \stackrel{d.}{=}  \Sigma \overrightarrow{\mathbf{W}}\circ \lambda \mathbf{e}
- \overrightarrow{\bar{V}}v_d  \mathbf{W}_1 \circ \lambda \mathbf{e}  \qquad \text{in} \quad (D^6[0, T], J_1).
\end{align*}
To establish the second part of the theorem, it is clear that  the limiting process would satisfy
\begin{equation}\label{QTruncated}
\begin{aligned}
d\left(\begin{aligned}
&{R}^b(t)\\ &{R}^a(t)
\end{aligned}\right)
&=A \mathbb{I}_{t \le \iota}
 \cdot d\left(\overrightarrow{\Psi}(t) + \lambda\overrightarrow{\bar{V}}t \right), \\
({R}^b(0), & {R}^a(0))=(q^b, q^a),
\end{aligned}
\end{equation}
with
\begin{equation} \label{tau}
\iota^a=\inf\{t: R^a(t)\le 0\}, \qquad \iota^b=\inf\{t: R^b(t) \le 0\}, \qquad \iota = \min\{\iota^a, \iota^b \}.
\end{equation}
We now show that
\begin{equation}\label{convergeQn}
(\mathbf{R}_n^b, \mathbf{R}_n^a) \Rightarrow (\mathbf{R}^b, \mathbf{R}^a) \qquad \text{in} \quad (D^2[0,T], J_1).
\end{equation}
%where $(\mathbf{R}_n^b, \mathbf{R}_n^a)$ is defined by
%Eqn.~\eqref{Qn} and $(\mathbf{R}^b, %\mathbf{R}^a)$ is defined by
%Eqn.~\eqref{Q}.
According to the Cram\'{e}r-Wold device, it is equivalent to showing that for any $(\alpha, \beta) \in \mathbb{R}^2$,
\begin{equation} \label{cramer}
\alpha \mathbf{R}_n^b+\beta \mathbf{R}_n^a \Rightarrow \alpha \mathbf{R}^b+ \beta \mathbf{R}^a \qquad \text{in} \quad (D^2[0,T], J_1).
\end{equation}

Since $\overrightarrow{\mathbf{\Psi}}_{n} \Rightarrow \overrightarrow{\mathbf{\Psi}}$ in $(D^2[0, T], J_1)$, by the Cram{\'e}r-Wold device again,
\begin{equation*}
(\alpha, \beta)\cdot A\cdot \overrightarrow{\mathbf{\Psi}}_{n} \Rightarrow (\alpha, \beta)\cdot A\cdot \overrightarrow{\mathbf{\Psi}} \quad\mbox{in} \ \  (D^2[0, T], J_1).
\end{equation*}

By definition, it is easy to see that
\begin{equation*}
\alpha R_n^b(t)+\beta R_n^a(t) = (\alpha, \beta)\cdot A\cdot \left(\overrightarrow{\Psi}_{n}(t\wedge\iota_n)
+\overrightarrow{\bar{V}}(t\wedge\iota_n)\right) + \alpha q^b + \beta q^a.
\end{equation*}
Since the truncation function is continuous, by the continuous-mapping theorem, it asserts that
Eqn.~\eqref{cramer} holds and the desired convergence follows.

Moreover, because $\overrightarrow{\bar{V}}\mathbf{e}$ is deterministic and $\alpha q^b + \beta q^a$ is a constant,
we have the convergence in
Eqn.~\eqref{cramer}, as well as  the convergence in
Eqn.~\eqref{convergeQn}.
Note that $\iota_n, n \ge 1$ and $\iota$ are first passage times, by
\cite[Theorem 13.6.5]{whitt2002stochastic},
\begin{equation*}
(\iota_n, R_n^b(\iota_n-), R_n^a(\iota_n-)) \Rightarrow (\iota, R^b(\iota-), R^a(\iota-)).  \qedhere
\end{equation*}
\end{proof}

%\subsection{Compute some quantities of interest}
%We showed that under some certain conditions, the time scaled and centred net order flow $\overrightarrow{\mathbf{C}}_n^{dif}$
%converges to some limit process $\overrightarrow{\mathbf{C}}^{dif}$, which is a 6-dimensional Brownian motion.
%Since the limit queues $Q^b$ and $Q^a$ are driven by $\overrightarrow{\mathbf{C}}^{dif}$,
%we are able to compute some probabilities of interesting events here.
%Under Assumption \ref{assumptionD}, \ref{flui}, and \ref{assumptionV},
%the instantaneous variance-covariance matrix of the limit Brownian motion process is a 6-by-6 matrix given by
%
%
%\begin{align}
%\sigma^2 &=(\sigma_{jk}), \\
%\sigma_{jk}&=\left\{
%\begin{aligned}
%\lambda v_j^2+ \lambda^3(\bar{V}^j)^2v_d^2 \qquad & \text{if} \qquad j=k,\\
%\lambda \rho_{j,k} v_jv_k \qquad &\text{if} \qquad j\neq k.
%\end{aligned}
%\right.
%\end{align}

\paragraph{Remark.}
 Theorem \ref{DiffThm} holds without Assumption \ref{flui}, % which is only used to show the joint convergence of $(\mathbf{\Phi}_n^D, %\overrightarrow{\mathbf{\Phi}}_n^V)$.
%may be relaxed to allow dependence between the arrival process $\mathbf{N}$ and
%the order size sequence $\{\overrightarrow{V}_i\}_{i \ge 1}$
as long as $(\mathbf{\Phi}_n^D, \overrightarrow{\mathbf{\Phi}}_n^V)$ is guaranteed to converge jointly.

\subsection{Fluctuation analysis of queues and order positions}

Based on the diffusion and fluid limit analysis for the order position and related queues, one may consider fluctuations of
order positions and related queues around their perspective fluid limits.

%First, we consider fluctuation  of $(\mathbf{Q}_n^b, \mathbf{Q}_n^a, \mathbf{Z}_n)$ around its %fluid limit $(\mathbf{Q}^b, \mathbf{Q}^a, \mathbf{Z})$.

%We have proved that $({\mathbf{Q}}_n^b, {\mathbf{Q}}_n^a, {\mathbf{Z}}_n )$ %converges to
%its fluid limit $({\mathbf{Q}}^b, {\mathbf{Q}}^a, {\mathbf{Z}})$.
%Since
%the truncated processes are identical to the original processes up to the hitting time, we
%focus on $(\mathbf{Q}_n^b, \mathbf{Q}_n^a, \mathbf{Z}_n )$ and $(\mathbf{Q}^b, \mathbf{Q}^a, %\mathbf{Z})$.
%Now let us further study the fluctuations of $(\mathbf{Q}_n^b, \mathbf{Q}_n^a, \mathbf{Z}_n )$ %around its fluid limit $(\mathbf{Q}^b, \mathbf{Q}^a, \mathbf{Z})$
%before time $\tau$, with $\tau$  the first that the any of the three processes hits zero.

\begin{theorem}
\label{Thm fluctuation}
Given Assumptions \ref{flui}, \ref{cancelprop},  \ref{assumptionD}, and \ref{assumptionV}, we have
\begin{equation*}
\sqrt{n}
\left(
\begin{array}{c}
\mathbf{Q}^{b}_{n}-\mathbf{Q}^{b}
\\
\mathbf{Q}^{a}_{n}-\mathbf{Q}^{a}
\\
\mathbf{Z}_{n}-\mathbf{Z}
\end{array}
\right)
\Rightarrow
\left(
\begin{array}{c}
\mathbf{\Psi}^{1}-\mathbf{\Psi}^{2}-\mathbf{\Psi}^{3}
\\
\mathbf{\Psi}^{4}-\mathbf{\Psi}^{5}-\mathbf{\Psi}^{6}
\\
\mathbf{Y}
\end{array}
\right)
,\qquad\text{in $(D^{3}[0,\tau),J_{1})$}
\end{equation*}
as $n\rightarrow\infty$. Here $(\mathbf{Q}^b_n, \mathbf{Q}^a_n, \mathbf{Z}_n)$, $(\mathbf{Q}^b, \mathbf{Q}^a, \mathbf{Z})$ are given in
Eqn.~\eqref{scaledq} and Theorem \ref{FluidThm}, $(\mathbf{\Psi}^j, 1\le j\le 6)$ is given in Eqn.~\eqref{Eqn_DiffThm}, and $\mathbf{Y}$ satisfies
\begin{equation}\label{YEqn}
\begin{aligned}
&dY(t)
=\Big(\frac{Z(t)(\Psi^{1}(t)-\Psi^{2}(t)-\Psi^{3}(t))}{Q^{b}(t)}-Y(t)\Big)\frac{\lambda\bar{V}^{3}}{Q^{b}(t)}dt-d\Psi^{2}(t)-\frac{Z(t)}{Q^{b}(t)}d\Psi^{3}(t),
\end{aligned}
\end{equation}
with $Y(0)=0$.
\end{theorem}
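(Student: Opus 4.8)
The plan is to treat the three fluctuation coordinates separately: the two queue components are linear functionals of the diffusion limit $\overrightarrow{\mathbf\Psi}_n$ and follow essentially from Theorem~\ref{DiffThm}, while the order-position component $\xi_n := \sqrt{n}(\mathbf Z_n - \mathbf Z)$ carries all the difficulty and will be handled by a second application of the Kurtz--Protter machinery (Theorem~\ref{kurtz}). Throughout I work on $[0,\tau)$, and I first record the useful fact that, since $\tau=\min\{\tau^a,\tau^z\}\le\tau^z<\tau^b$, the fluid queue $Q^b$ is bounded away from zero on all of $[0,\tau]$; hence the factors $1/Q^b$ and (for $n$ large) $1/Q_n^b$ appearing below stay uniformly bounded and do not blow up. For the queue coordinates, subtracting \eqref{Qb}--\eqref{Qa} from \eqref{scaledq} and using the identity $\sqrt{n}\,(C^j_n(t)-\lambda\bar V^j t)=\Psi^j_n(t)$ gives the exact relation $\sqrt{n}(Q_n^b-Q^b)=\sqrt{n}(Q_n^b(0)-q^b)+(\Psi_n^1-\Psi_n^2-\Psi_n^3)$, and similarly for the ask queue. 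Assuming the initial fluctuations vanish (consistent with $\mathbf Y(0)=0$ and the vanishing initial values of the queue limits), Theorem~\ref{DiffThm} and the continuous-mapping theorem yield the first two limit coordinates. I also record the joint convergence $\mathbf W_n:=\sqrt{n}(\mathbf Q_n^b-\mathbf Q^b)\Rightarrow \mathbf W:=\mathbf\Psi^1-\mathbf\Psi^2-\mathbf\Psi^3$, which holds together with $\overrightarrow{\mathbf\Psi}_n$ because $\mathbf W_n$ is a fixed linear image of $\overrightarrow{\mathbf\Psi}_n$.

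The crux is the derivation of and passage to the limit in the SDE for $\xi_n$. Subtracting \eqref{Zd} from the $Z_n$-dynamics in \eqref{scaledq}, writing $dC_n^3=\lambda\bar V^3\,dt+n^{-1/2}\,d\Psi_n^3$, and expanding
\[
\frac{Z_n}{Q_n^b}-\frac{Z}{Q^b}=\frac{Z_n-Z}{Q_n^b}-\frac{Z\,(Q_n^b-Q^b)}{Q_n^b\,Q^b},
\]
one obtains, after multiplying by $\sqrt{n}$,
\[
d\xi_n(t)=-\frac{\lambda\bar V^3}{Q_n^b(t-)}\,\xi_n(t-)\,dt-d\Psi_n^2(t)-\frac{Z_n(t-)}{Q_n^b(t-)}\,d\Psi_n^3(t)+\frac{\lambda\bar V^3\,Z(t)\,W_n(t-)}{Q_n^b(t-)\,Q^b(t)}\,dt.
\]
The first term is the feedback in $\xi_n$; the remaining three form a free input that does not depend on $\xi_n$. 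Sending $n\to\infty$ (using $Z_n\to Z$, $Q_n^b\to Q^b$ uniformly and $\mathbf W_n\Rightarrow \mathbf W$) formally reproduces exactly Eqn.~\eqref{YEqn}, so the candidate limit $\mathbf Y$ is the unique solution of that linear SDE; existence and local uniqueness on $[0,\tau)$ follow since its coefficient $-\lambda\bar V^3/Q^b$ is bounded and Lipschitz there and the forcing is a continuous Gaussian semimartingale.

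To make this rigorous I apply Theorem~\ref{kurtz} with $\mathbf X_n=\xi_n$, deterministic driver $\mathbf Y_n=\mathbf e$, feedback functional $F_n(x,s-)=-\lambda\bar V^3\,x(s-)/Q_n^b(s-)$, and free term
\[
U_n(t)=\xi_n(0)-\Psi_n^2(t)-\int_0^t\frac{Z_n(s-)}{Q_n^b(s-)}\,d\Psi_n^3(s)+\int_0^t\frac{\lambda\bar V^3\,Z(s)\,W_n(s-)}{Q_n^b(s-)\,Q^b(s)}\,ds.
\]
Since $\mathbf Y_n=\mathbf e$ is deterministic, Condition~\ref{C1} holds trivially (take $M_n^\delta=0$, $A_n^\delta=\mathbf e$, $\tau_n^\alpha=\alpha+1$), and Condition~\ref{c2} reduces to continuity of the maps $x\mapsto x/Q_n^b$ and $x\mapsto x/Q^b$, which holds on $[0,\tau)$ by the lower bound on $Q^b$. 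It then remains only to verify $(U_n,\mathbf Y_n)\Rightarrow(\mathbf U,\mathbf e)$ in the Skorokhod topology, after which Theorem~\ref{kurtz} delivers $\xi_n\Rightarrow\mathbf Y$, jointly with the two queue fluctuations.

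The main obstacle is the joint convergence $U_n\Rightarrow \mathbf U$, and within it the stochastic-integral term $\int_0^\cdot(Z_n/Q_n^b)\,d\Psi_n^3$: although the integrand converges to the deterministic continuous limit $Z/Q^b$ uniformly on $[0,\tau)$, the integrators $\Psi_n^3$ are not of bounded variation, so the ordinary continuous-mapping theorem does not apply. I split $\int (Z_n/Q_n^b)\,d\Psi_n^3=\int(Z/Q^b)\,d\Psi_n^3+\int(Z_n/Q_n^b-Z/Q^b)\,d\Psi_n^3$; the first integral is a genuinely continuous functional of $\Psi_n^3$ (integrate by parts, using that $Z/Q^b$ is deterministic, continuous and of finite variation on $[0,\tau]$) and converges to $\int(Z/Q^b)\,d\Psi^3$, while the second is asymptotically negligible because $\|Z_n/Q_n^b-Z/Q^b\|_\infty\to 0$ and $\{\Psi_n^3\}$ satisfies the uniform controlled-variation condition (already verified in the proof of Theorem~\ref{FluidThm}), so that the convergence-of-stochastic-integrals result of Kurtz and Protter~\cite{kurtz1991weak} applies. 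The drift integral involving $W_n$ converges jointly by the continuous-mapping theorem, since $\mathbf W_n\Rightarrow\mathbf W$ and the coefficient $\lambda\bar V^3 Z/(Q_n^b Q^b)$ converges uniformly to a bounded continuous limit, and integration of a converging integrand against Lebesgue measure is a continuous operation. Because every piece of $U_n$ is built from the single converging vector $\overrightarrow{\mathbf\Psi}_n$, these convergences hold jointly with one another and with the two queue fluctuations, yielding the full three-dimensional limit in $(D^3[0,\tau),J_1)$.
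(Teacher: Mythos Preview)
Your strategy coincides with the paper's: read off the two queue fluctuations as linear images of $\overrightarrow{\mathbf\Psi}_n$ via Theorem~\ref{DiffThm}, then write $\sqrt n(Z_n-Z)$ as the solution of a linear equation driven by a ``free input'' that converges. The one substantive difference is the decomposition of the ratio. The paper takes
\[
\frac{Z_n}{Q_n^b}-\frac{Z}{Q^b}=\frac{Z_n-Z}{Q^b}+\frac{Z_n(Q^b-Q_n^b)}{Q^b\,Q_n^b},
\]
so that the feedback coefficient $\lambda\bar V^3/Q^b$ is \emph{deterministic}. The resulting equation $d\xi_n+\tfrac{\lambda\bar V^3}{Q^b}\xi_n\,dt=d(\sqrt n X_n)$ is then a linear ODE whose solution is an explicit continuous functional of $\sqrt n X_n$ (integrate by parts against the smooth integrating factor), so once $\sqrt n X_n$ converges, $\xi_n$ does too by the continuous-mapping theorem---no second Kurtz--Protter invocation is needed. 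Your choice of $Q_n^b$ in the feedback is equivalent but forces you to treat an $n$-dependent $F_n$ and then argue Condition~\ref{c2}(i). Both routes require the same real work: the convergence of $\int_0^{\cdot}(Z_n/Q_n^b)\,d\Psi_n^3$, which the paper simply asserts and you, more carefully, justify.

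There is one imprecision to flag. You write that ``$\{\Psi_n^3\}$ satisfies the uniform controlled-variation condition (already verified in the proof of Theorem~\ref{FluidThm})'', but that proof verifies Condition~\ref{C1} for $\overrightarrow{\mathbf C}_n$, not for $\overrightarrow{\mathbf\Psi}_n=\sqrt n(\overrightarrow{\mathbf C}_n-\lambda\overrightarrow{\bar V}\mathbf e)$. The martingale part scales correctly ($\mathbb E[[\sqrt n M_n]_t]=n\,\mathbb E[[M_n]_t]=O(1)$), but the finite-variation part becomes $\sqrt n(A_n-\lambda\overrightarrow{\bar V}\mathbf e)$, and under Assumption~\ref{assumptionV} one does not have $\mathbb E[\overrightarrow V_i\mid\mathcal G_{i-1}]=\overrightarrow{\bar V}$ in general, so its total variation need not stay bounded as $n\to\infty$. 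The easiest fix is to bypass the good-sequence hypothesis entirely: $Z_n/Q_n^b$ is a finite-variation process with total variation uniformly bounded on $[0,T]\subset[0,\tau)$ (jumps of size $O(n^{-1}|V_i|)$, $O(n)$ of them, $Q_n^b$ bounded below), so integrate $\int(Z_n/Q_n^b)\,d\Psi_n^3$ by parts. The boundary term and the Riemann--Stieltjes integral $\int\Psi_n^3\,d(Z_n/Q_n^b)$ are continuous functionals of $(\Psi_n^3,Z_n/Q_n^b)$, and the bracket term $\sum\Delta(Z_n/Q_n^b)\Delta\Psi_n^3$ is $O(n^{-1/2})$. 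This yields the desired convergence of the stochastic integral without any semimartingale control on $\Psi_n^3$.
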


\begin{proof}
Given Assumptions \ref{flui}, \ref{cancelprop},  \ref{assumptionD}, and \ref{assumptionV},
 we have from Theorem \ref{DiffThm},
\begin{equation*}
\overrightarrow{\mathbf{\Psi}}_{n}
=\frac{1}{\sqrt{n}}\left(\sum_{i=1}^{N(n\cdot)}\overrightarrow{V}_{i}-\lambda n\overrightarrow{\bar{V}}\mathbf{e}\right)\Rightarrow \overrightarrow{\mathbf{\Psi}},
\qquad \text{in} \quad (D^6[0, \tau), J_1).
\end{equation*}
%where
%\begin{equation}
%\overrightarrow{\mathbf{\Psi}}=\Sigma\overrightarrow{\mathbf{W}}\circ \lambda \mathbf{e}
%-\overrightarrow{\bar{V}}v_{d}\lambda \mathbf{W}_{1}\circ\lambda \mathbf{e}.
%\end{equation}
Hence, we have the following convergence in $(D[0, \tau), J_1)$,
\begin{equation*}
\begin{aligned}
\sqrt{n}(\mathbf{Q}_{n}^{b}-\mathbf{Q}^{b})\Rightarrow\mathbf{\Psi}^{1}-\mathbf{\Psi}^{2}-\mathbf{\Psi}^{3}, \\
\sqrt{n}(\mathbf{Q}_{n}^{a}-\mathbf{Q}^{a})\Rightarrow\mathbf{\Psi}^{4}-\mathbf{\Psi}^{5}-\mathbf{\Psi}^{6}.
\end{aligned}
\end{equation*}

Since Theorem  \ref{FluidThm} holds under Assumptions \ref{assumptionD} and \ref{assumptionV}, we now use the dynamics of $Z_{n}(t)$ in Eqn.~\eqref{scaledII} and $Z(t)$ in Theorem \ref{FluidThm} and get
%\begin{equation}
%dZ_{n}(t)=-dC_{n}^{2}(t)-\frac{Z_{n}(t-)}{Q_{n}^{b}(t-)}dC_{n}^{3}(t),
%\end{equation}
%and
%\begin{equation}
%dZ(t)=-dC^{2}(t)-\frac{Z(t-)}{Q^{b}(t-)}dC^{3}(t).
%\end{equation}
%Therefore, we have
\begin{align*}
d(Z_{n}(t)-Z(t))&=-d(C_{n}^{2}(t)-C^{2}(t))-\frac{Z_{n}(t-)}{Q_{n}^{b}(t-)}dC_{n}^{3}(t)+\frac{Z(t-)}{Q^{b}(t-)}dC^{3}(t)
\\
&=-d(C_{n}^{2}(t)-C^{2}(t))-\frac{Z_{n}(t-)}{Q_{n}^{b}(t-)}d(C_{n}^{3}(t)-C^{3}(t))+\left[\frac{Z(t-)}{Q^{b}(t-)}-\frac{Z_{n}(t-)}{Q_{n}^{b}(t-)}\right]dC^{3}(t).
\nonumber
\end{align*}
We can rewrite this as
\begin{equation*}
d(Z_{n}(t)-Z(t))+\left(\frac{Z_{n}(t-)-Z(t-)}{Q^{b}(t-)}\right)dC^{3}(t)=dX_{n}(t),
\end{equation*}
\begin{equation*}
X_{n}(t)=-(C_{n}^{2}(t)-C^{2}(t))-\int_{0}^{t}\frac{Z_{n}(s-)}{Q_{n}^{b}(s-)}d(C_{n}^{3}(s)-C^{3}(s))
+\int_{0}^{t}\frac{Z_{n}(s-)(Q_{n}^{b}(s-)-Q^{b}(s-))}{Q^{b}(s-)Q_{n}^{b}(s-)}dC^{3}(s).
\end{equation*}
Now,
\begin{equation*}
\sqrt{n}\mathbf{X}_{n}\Rightarrow-\mathbf{\Psi}^{2}-\int_{0}^{\cdot}\frac{Z(s-)}{Q^{b}(s-)}d\Psi^{3}(s)
+\int_{0}^{\cdot}\frac{Z(s-)(\Psi^{1}(s-)-\Psi^{2}(s-)-\Psi^{3}(s-))}{(Q^{b}(s-))^{2}}\lambda\bar{V}^{3}ds
\end{equation*}
As the limit processes $\overrightarrow{\mathbf{\Psi}}$ and $\mathbf{Q}^b$, $\mathbf{Q}^a$ are continuous, this could be changed into
\begin{equation*}
\sqrt{n}\mathbf{X}_{n}\Rightarrow-\mathbf{\Psi}^{2}-\int_{0}^{\cdot}\frac{Z(s)}{Q^{b}(s)}d\Psi^{3}(s)
+\int_{0}^{\cdot}\frac{Z(s)(\Psi^{1}(s)-\Psi^{2}(s)-\Psi^{3}(s))}{(Q^{b}(s))^{2}}\lambda\bar{V}^{3}ds.
\end{equation*}
Hence,
\begin{equation*}
\sqrt{n}(\mathbf{Z}_{n}-\mathbf{Z})\Rightarrow \mathbf{Y},
\end{equation*}
where $\mathbf{Y}$ satisfies Eqn.~\eqref{YEqn}.
%\begin{align}
%&dY(t)+\frac{Y(t)}{Q^{b}(t)}\lambda\bar{V}^{3}dt
%=-d\Psi^{2}(t)-\frac{Z_{t}}{Q^{b}(t)}d\Psi^{3}(t)
%\\
%&\qquad\qquad\qquad\qquad
%+\frac{Z(t)(\Psi^{1}(t)-\Psi^{2}(t)-\Psi^{3}(t))}{(Q^{b}(t))^{2}}\lambda\bar{V}^{3}dt.
%\nonumber
%\end{align}
\end{proof}

\subsection{Large deviations}

In addition to the fluctuation analysis in the previous section, one can further study
 the probability of the rare events that the scaled process $({Q}_{n}^{b}(t),{Q}_{n}^{a}(t))$
deviates away from its fluid limit. Informally, we are interested in the probability
$\mathbb{P}(({Q}_{n}^{b}(t),{Q}_{n}^{a}(t))\simeq(f^{b}(t),f^{a}(t)),0\leq t\leq T)$
as $n\rightarrow\infty$, where $(f^{b}(t),f^{a}(t))$ is a given pair of functions that can be different from
the fluid limit $({Q}^b(t), {Q}^a(t))$.

Recall that a sequence $(P_{n})_{n\in\mathbb{N}}$ of probability measures on a topological space $\mathbb{X}$
satisfies the large deviation principle with rate function $\mathcal{I}:\mathbb{X}\rightarrow\mathbb{R}$ if $\mathcal{I}$ is non-negative,
lower semi-continuous and for any measurable set $A$, we have
\begin{equation*}
-\inf_{x\in A^{o}}\mathcal{I}(x)\leq\liminf_{n\rightarrow\infty}\frac{1}{n}\log P_{n}(A)
\leq\limsup_{n\rightarrow\infty}\frac{1}{n}\log P_{n}(A)\leq-\inf_{x\in\overline{A}}\mathcal{I}(x).
\end{equation*}
The rate function is said to be good if the level set $\{x \mid I(x)\leq\alpha\}$ is compact for any $\alpha\geq 0$.
Here, $A^{o}$ is the interior of $A$ and $\overline{A}$ is its closure.
Finally,  the contraction principle in large deviation says that
if $P_{n}$ satisfies a large deviation principle on $X$ with rate
function $\mathcal{I}(x)$ and $F:X\rightarrow Y$ is a continuous map,
then the probability measures $Q_{n}:=P_{n}F^{-1}$ satisfies
a large deviation principle on $Y$ with rate function $I(y)=\inf_{x \mid  F(x)=y}\mathcal{I}(x)$.
Interested readers are referred to the standard references by Dembo and Zeitouni \cite{dembo2009large} and Varadhan \cite{varadhan1984large}
for the general theory of large deviations and its applications.

Recall that under Assumptions \ref{fluidd}, \ref{fluidv} and \ref{flui}, we had
a FLLN result for $(Q_{n}^{b}(t),Q_{n}^{a}(t))$ and under Assumptions \ref{assumptionD}, \ref{assumptionV}, and \ref{flui}, we had
a FCLT result for $(Q_{n}^{b}(t),Q_{n}^{a}(t))$.  It is natural to replace Assumptions \ref{fluidd}, \ref{fluidv} by some stronger
assumptions to obtain a large deviations result for $(Q_{n}^{b}(t),Q_{n}^{a}(t))$.
We will see that by assuming that $(\overrightarrow{V}_{i})_{i\in\mathbb{N}}$ and  $(N(i)-N(i-1))_{i\in\mathbb{N}}$
satisfy the following Assumptions \ref{LDPAssump} and \ref{LDPAssumpII} in addition to Assumption \ref{flui},
by a large deviation result of Bryc and Dembo \cite{bryc1996large}, we will have the large deviations
for $(Q_{n}^{b}(t),Q_{n}^{a}(t))$.

\begin{assumption}\label{LDPAssump}
Let $(X_{i})_{i\in\mathbb{N}}$ be a sequence of stationary $\mathbb{R}^{K}$-valued random vectors
with the $\sigma$-algebra $\mathcal{F}_{m}^{\ell}$ defined as $\sigma(X_{i},m\leq i\leq\ell)$.
For every $C<\infty$, there is a nondecreasing sequence $\ell(n)\in\mathbb{N}$ with
$\sum_{n=1}^{\infty}\frac{\ell(n)}{n(n+1)}<\infty$ such that
\begin{align*}
&\sup\left\{\mathbb{P}(A)\mathbb{P}(B)-e^{\ell(n)}\mathbb{P}(A\cap B) \mid
A\in\mathcal{F}^{k_{1}}_{0},B\in\mathcal{F}^{k_{1}+k_{2}+\ell(n)}_{k_{1}+\ell(n)},k_{1},k_{2}\in\mathbb{N}\right\}
\leq e^{-Cn},
\\
&\sup\left\{\mathbb{P}(A\cap B)-e^{\ell(n)}\mathbb{P}(A)\mathbb{P}(B) \mid
A\in\mathcal{F}^{k_{1}}_{0},B\in\mathcal{F}^{k_{1}+k_{2}+\ell(n)}_{k_{1}+\ell(n)},k_{1},k_{2}\in\mathbb{N}\right\}
\leq e^{-Cn}.
\end{align*}
\end{assumption}

Assumption \ref{LDPAssump} holds under the hypermixing condition  in \cite[Section 6.4]{dembo2009large},
under the $\psi$-mixing condition  of Bryc \cite[(1.10),(1.12)]{bryc1992large}, and
under the hyperexponential $\alpha$-mixing rate for stationary processes of Bryc and Dembo 
\cite[Proposition 2]{bryc1996large}.
Therefore, if $(\overrightarrow{V}_{i})_{i\in\mathbb{N}}$ and  $(N(i)-N(i-1))_{i\in\mathbb{N}}$
satisfy Assumption \ref{LDPAssump}, then Assumptions \ref{fluidd} and \ref{fluidv} are satisfied.
It is also clear that Assumption \ref{LDPAssump} holds if $X_{i}$'s are $m$-dependent.

In order to have the large deviations result, we also need to assume that $(\overrightarrow{V}_{i})_{i\in\mathbb{N}}$ and  $(N(i)-N(i-1))_{i\in\mathbb{N}}$ satisfy the following condition:

\begin{assumption}\label{LDPAssumpII}
For all $0\leq\gamma,R<\infty$,
\begin{equation*}
g_{R}(\gamma):=\sup_{k,m\in\mathbb{N},k\in[0,Rm]}\frac{1}{m}
\log\mathbb{E}\left[e^{\gamma\Vert\sum_{i=k+1}^{k+m}X_{i}\Vert}\right]<\infty,
\end{equation*}
and $A:=\sup_{\gamma}\limsup_{R\rightarrow\infty}R^{-1}g_{R}(\gamma)<\infty$.
\end{assumption}

Note Assumption \ref{LDPAssumpII} is trivially satisfied if $X_{i}$'s are bounded.
If $X_{i}$'s are i.i.d. random variables,  Assumption \ref{LDPAssumpII} reduces to the finiteness of  the moment generating function of  $X_{i}$,
which is a standard assumption for Mogulskii's theorem (\cite[Theorem 5.1.2]{dembo2009large}).
Therefore, Assumption \ref{LDPAssumpII} is a natural assumption for large deviations.

Under Assumption \ref{LDPAssump} and Assumption \ref{LDPAssumpII}, Dembo and Zajic \cite{dembo1995large} proved a sample path large deviation principle
for $\mathbb{P}(\frac{1}{n}\sum_{i=1}^{\lfloor\cdot n\rfloor}X_{i}\in\cdot)$
 (For ease of reference, we list it in Appendix~\ref{AppendixB} as Theorem \ref{Dembo}).
 From this, we can show the following.
\begin{lemma}\label{LDPLemma}
Let both $(\overrightarrow{V}_{i})_{i\in\mathbb{N}}$ and  $(N(i)-N(i-1))_{i\in\mathbb{N}}$ satisfy Assumption \ref{LDPAssump} and Assumption \ref{LDPAssumpII} and let Assumption \ref{flui} hold.
Then, for any $T>0$, $\mathbb{P}(C_{n}(t)\in\cdot)$ satisfies a large deviation principle on $L_{\infty}[0,T]$
with the good rate function
\begin{equation}\label{RateFunctionII}
\mathcal{I}(f)=\inf_{\substack{
h\in\mathcal{AC}_{0}^{+}[0,T],g\in\mathcal{AC}_{0}[0,\infty)
\\
g(h(t))=f(t),0\leq t\leq T
}}[I_{V}(g)+I_{N}(h)],
\end{equation}
with the convention that $\inf_{\emptyset}=\infty$ and
\begin{equation*}
I_{V}(g)=\int_{0}^{\infty}\Lambda_{V}(g'(x))dx,
\end{equation*}
if $g\in\mathcal{AC}_{0}^{+}[0,\infty)$ and $I_{V}(g)=\infty$ otherwise, where
\begin{equation}
\label{LambdaV}
\Lambda_{V}(x):=\sup_{\theta\in\mathbb{R}^{6}}\left\{\theta\cdot x-\Gamma_{V}(\theta)\right\},
\quad
\Gamma_{V}(\theta):=\lim_{n\rightarrow\infty}\frac{1}{n}\log\mathbb{E}\left[e^{\sum_{i=1}^{n}\theta\cdot\overrightarrow{V}_{i}}\right],
\end{equation}
and
\begin{equation*}
I_{N}(h)=\int_{0}^{T}\Lambda_{N}(h'(x))dx,
\end{equation*}
if $h\in\mathcal{AC}_{0}^{+}[0,T]$ and $I_{N}(h)=\infty$ otherwise, where
\begin{equation} \label{LambdaN}
\Lambda_{N}(x):=\sup_{\theta\in\mathbb{R}^{6}}\left\{\theta\cdot x-\Gamma_{N}(\theta)\right\},
\quad
\Gamma_{N}(\theta):=\lim_{n\rightarrow\infty}\frac{1}{n}\log\mathbb{E}\left[e^{\theta N_{n}}\right].
\end{equation}
\end{lemma}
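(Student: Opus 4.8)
The plan is to realize the scaled net order flow $\overrightarrow{\mathbf{C}}_n$ of Eqn.~\eqref{cnfluid} as a random time change (composition) of two independent scaled partial-sum processes, establish a sample-path large deviation principle for each factor via Theorem~\ref{Dembo}, combine them into a joint LDP using independence, and finally push the joint LDP through the composition map with the contraction principle. Concretely, set
\begin{equation*}
g_n(s)=\frac{1}{n}\sum_{i=1}^{\lfloor ns\rfloor}\overrightarrow{V}_i,\qquad h_n(t)=\frac{N(nt)}{n}.
\end{equation*}
Since $N(nt)$ is integer-valued, $\lfloor n h_n(t)\rfloor=N(nt)$, so $g_n(h_n(t))=\frac{1}{n}\sum_{i=1}^{N(nt)}\overrightarrow{V}_i=C_n(t)$ exactly; thus $\overrightarrow{\mathbf{C}}_n=g_n\circ h_n$. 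The factor $g_n$ is driven by $(\overrightarrow{V}_i)_{i\in\mathbb{N}}$ and the time change $h_n$ by the increments $(N(i)-N(i-1))_{i\in\mathbb{N}}$, which are independent by Assumption~\ref{flui}.

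First I would apply Theorem~\ref{Dembo} to $(\overrightarrow{V}_i)_{i\in\mathbb{N}}$: Assumptions~\ref{LDPAssump} and~\ref{LDPAssumpII} are precisely the hypotheses of the Dembo--Zajic sample-path LDP, so $\mathbb{P}(g_n\in\cdot)$ satisfies an LDP on $L_{\infty}[0,\infty)$ with good rate function $I_V$ of Mogulskii form, finite only on $\mathcal{AC}_0^{+}[0,\infty)$ (the partial sums are coordinate-wise non-decreasing because each $\overrightarrow{V}_i$ has a single positive entry), with $\Lambda_V,\Gamma_V$ as in Eqn.~\eqref{LambdaV}. Applying the same theorem to $(N(i)-N(i-1))_{i\in\mathbb{N}}$ gives an LDP on $L_{\infty}[0,T]$ for $\tilde h_n(t):=N(\lfloor nt\rfloor)/n$ with good rate function $I_N$, finite only on $\mathcal{AC}_0^{+}[0,T]$, with $\Lambda_N,\Gamma_N$ as in Eqn.~\eqref{LambdaN}. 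Because the two underlying arrays are independent, the pair $(g_n,\tilde h_n)$ satisfies a joint LDP on $L_{\infty}[0,\infty)\times L_{\infty}[0,T]$ with the good rate function $I_V(g)+I_N(h)$; the exponential tightness needed for the product LDP is inherited from the goodness of $I_V$ and $I_N$.

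It remains to transfer this joint LDP to $\overrightarrow{\mathbf{C}}_n$ through the composition map $\Phi(g,h):=g\circ h$, and here two points need care. First, I would pass from $h_n$ to $\tilde h_n$: the difference $g_n\circ h_n-g_n\circ\tilde h_n$ equals $\frac{1}{n}\sum_{i=N(\lfloor nt\rfloor)+1}^{N(nt)}\overrightarrow{V}_i$, the scaled contribution of the arrivals in a sub-unit interval, and the uniform exponential-moment bounds of Assumption~\ref{LDPAssumpII} (for both the increment counts and the order sizes) make this superexponentially small in the uniform norm, so $\overrightarrow{\mathbf{C}}_n=g_n\circ h_n$ and $g_n\circ\tilde h_n$ are exponentially equivalent and share the same LDP. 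Second, and this is the main obstacle, $\Phi$ is not continuous on all of $L_{\infty}\times L_{\infty}$, so the contraction principle cannot be invoked verbatim. The remedy is that $\Phi$ is continuous at every pair $(g,h)$ whose first coordinate $g$ is continuous: if $g_n\to g$ uniformly on compacts, $\tilde h_n\to h$ uniformly, and the $\tilde h_n$ lie in a common range $[0,M]$, then
\begin{equation*}
\sup_t\bigl|g_n(\tilde h_n(t))-g(h(t))\bigr|\le\sup_{s\in[0,M]}|g_n(s)-g(s)|+\sup_t|g(\tilde h_n(t))-g(h(t))|\to 0,
\end{equation*}
using uniform convergence of $g_n$ and uniform continuity of $g$ on $[0,M]$. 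Since $I_V(g)+I_N(h)$ is finite only on $\mathcal{AC}_0^{+}[0,\infty)\times\mathcal{AC}_0^{+}[0,T]$ — where the first coordinate is absolutely continuous and, on each level set, the second coordinate is uniformly bounded — the map $\Phi$ is continuous on the effective domain of the joint rate function, which suffices to apply the contraction principle (in its form for maps continuous on $\{I_V+I_N<\infty\}$, or via an exponential-approximation argument). The contraction then produces
\begin{equation*}
\mathcal{I}(f)=\inf\{I_V(g)+I_N(h):g\circ h=f,\ g\in\mathcal{AC}_0[0,\infty),\ h\in\mathcal{AC}_0^{+}[0,T]\},
\end{equation*}
which is exactly Eqn.~\eqref{RateFunctionII}, and goodness of $\mathcal{I}$ follows from goodness of $I_V+I_N$ together with the continuity of $\Phi$ on the domain.
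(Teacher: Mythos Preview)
Your approach is essentially the same as the paper's: apply the Dembo--Zajic sample-path LDP (Theorem~\ref{Dembo}) separately to $(\overrightarrow{V}_i)$ and to the increments of $N$, use independence (Assumption~\ref{flui}) to obtain a joint LDP with rate $I_V+I_N$, and then push the joint LDP through the composition map $(g,h)\mapsto g\circ h$ via the contraction principle to obtain Eqn.~\eqref{RateFunctionII}.

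The one place where the paper is more explicit than your outline is the handling of the unbounded range of $h_n$. You assert an LDP for $g_n$ directly on $L_\infty[0,\infty)$, whereas Theorem~\ref{Dembo} only delivers an LDP on $L_\infty[0,M]$ for each fixed $M$; the gap is precisely that $h_n(T)=N(nT)/n$ need not be bounded uniformly in $n$. The paper closes this gap by proving the superexponential tail estimate
\begin{equation*}
\limsup_{M\to\infty}\limsup_{n\to\infty}\frac{1}{n}\log\mathbb{P}\bigl(N(nT)\ge nM\bigr)=-\infty,
\end{equation*}
obtained from Assumption~\ref{LDPAssumpII} via a Chernoff bound, and then runs the contraction principle on the truncated event $\{N(nT)/n\le M\}$ before optimizing over $M$. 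Your appeal to ``continuity on the effective domain'' and ``exponential approximation'' points in the right direction, but this superexponential bound is the concrete ingredient that makes the truncation harmless and should be made explicit. Conversely, the discretization issue between $h_n$ and $\tilde h_n$ that you raise is not addressed in the paper; it is minor but your exponential-equivalence argument for it is correct.
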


\begin{proof}
Under Assumption \ref{LDPAssump} and Assumption \ref{LDPAssumpII}, by Theorem \ref{Dembo} in Appendix~\ref{AppendixB},
$\mathbb{P}(\frac{1}{n}\sum_{i=1}^{\lfloor\cdot n\rfloor}\overrightarrow{V}_{i}\in\cdot)$
satisfies a large deviation principle on $L_{\infty}[0,M]$ with the good rate function
\begin{equation*}
I_{V}(f)=\int_{0}^{M}\Lambda_{V}(f'(x))dx,
\end{equation*}
if $f\in\mathcal{AC}_{0}^{+}[0,M]$ and $I_{V}(f)=\infty$ otherwise, where
$\Lambda_{V}(x)$ and $\Gamma_{V}(\theta)$ are given by Eqn.~\eqref{LambdaV}
and $\mathbb{P}(\frac{1}{n}N(n\cdot)\in\cdot)$
satisfies a large deviation principle on $L_{\infty}[0,T]$ with the good
rate function
\begin{equation*}
I_{N}(f)=\int_{0}^{T}\Lambda_{N}(f'(x))dx,
\end{equation*}
if $f\in\mathcal{AC}_{0}^{+}[0,T]$ and $I_{N}(f)=\infty$ otherwise, where
$\Lambda_{N}(x)$ and $\Gamma_{N}(\theta)$ are given by Eqn.~\eqref{LambdaN}.
Since $(\overrightarrow{V}_{i})_{i\in\mathbb{N}}$ and $N_{t}$ are independent,
$\mathbb{P}(\frac{1}{n}\sum_{i=1}^{\lfloor\cdot n\rfloor}\overrightarrow{V}_{i}\in\cdot,
\frac{1}{n}N(n\cdot)\in\cdot)$ satisfies a large deviation principle
on $L_{\infty}[0,M]\times L_{\infty}[0,T]$ with the good rate function $I_{V}(\cdot)+I_{N}(\cdot)$.

We claim that the following superexponential estimate holds:
\begin{equation}\label{SuperExp}
\limsup_{M\rightarrow\infty}\limsup_{n\rightarrow\infty}
\frac{1}{n}\log\mathbb{P}\left(N(n)\geq nM\right)=-\infty.
\end{equation}
Indeed,
for any $\gamma>0$, by Chebychev's inequality,
\begin{equation*}
\mathbb{P}\left(N(n)\geq nM\right)
\leq e^{-\gamma n}\mathbb{E}\left[e^{\gamma N(n)}\right].
\end{equation*}
Therefore,
\begin{equation}\label{gammainfty}
\limsup_{n\rightarrow\infty}\frac{1}{n}\log
\mathbb{P}\left(N(n)\geq nM\right)
\leq-\gamma+\limsup_{n\rightarrow\infty}\frac{1}{n}\log\mathbb{E}\left[e^{\gamma N(n)}\right].
\end{equation}
From Assumption \ref{LDPAssumpII},
$\sup_{\gamma>0}\limsup_{n\rightarrow\infty}\frac{1}{n}\log\mathbb{E}\left[e^{\gamma N(n)}\right]<\infty$.
Hence, by letting $\gamma\rightarrow\infty$ in
Eqn.~\eqref{gammainfty}, we have
Eqn.~\eqref{SuperExp}.

For any closed set $C\in L_{\infty}[0,T]$,
\begin{align}
&\limsup_{n\rightarrow\infty}\frac{1}{n}\log\mathbb{P}
\left(\frac{1}{n}\sum_{i=1}^{N(n\cdot)}\overrightarrow{V}_{i}\in C\right)
\nonumber
\\
&=\limsup_{M\rightarrow\infty}\limsup_{n\rightarrow\infty}\frac{1}{n}\log\mathbb{P}
\left(\frac{1}{n}\sum_{i=1}^{N(n\cdot)}\overrightarrow{V}_{i}\in C,\frac{1}{n}N(nT)\leq M\right)
\label{FirstEq}
\\
&=-\inf_{M\in\mathbb{N}}\inf_{\substack{f\in C
\\
h\in\mathcal{AC}_{0}^{+}[0,T],g\in\mathcal{AC}_{0}[0,M]
\\
g(h(t))=f(t),0\leq t\leq T
\\
h(T)\leq M}}[I_{V}(g)+I_{N}(h)]
\label{SecondEq}
\\
&=-\inf_{f\in C}\inf_{\substack{
h\in\mathcal{AC}_{0}^{+}[0,T],g\in\mathcal{AC}_{0}[0,\infty)
\nonumber
\\
g(h(t))=f(t),0\leq t\leq T
}}[I_{V}(g)+I_{N}(h)],
\end{align}
where
Eqn.~\eqref{FirstEq} follows from
Eqn.~\eqref{SuperExp} and
Eqn.~\eqref{SecondEq}
follows from the contraction principle.
The contraction principle applies here
since for $h(t)=\frac{1}{n}N(nt)$ and $g(t)=\frac{1}{n}\sum_{i=1}^{\lfloor nt\rfloor}\overrightarrow{V}_{i}$
we have $\frac{1}{n}\sum_{i=1}^{N(nt)}\overrightarrow{V}_{i}=g(h(t))$
and moreover, the map $(g,h)\mapsto g\circ h$ is continuous
since for any two functions $F_{n},G_{n}\rightarrow F,G$ in uniform topology and that are absolutely continuous, we have
$\sup_{t}|F_{n}(G_{n}(t))-F(G(t))|
\leq\sup_{t}|F_{n}(G_{n}(t))-F(G_{n}(t))|+\sup_{t}|F(G_{n}(t))-F(G(t))|\rightarrow 0$
as $n\rightarrow\infty$.

For any open set $G\in L_{\infty}[0,T]$,
\begin{align}
&\liminf_{n\rightarrow\infty}\frac{1}{n}\log\mathbb{P}
\left(\frac{1}{n}\sum_{i=1}^{N(n\cdot)}\overrightarrow{V}_{i}^{j}\in G\right)
\nonumber
\\
&\geq
\liminf_{n\rightarrow\infty}\frac{1}{n}\log\mathbb{P}
\left(\frac{1}{n}\sum_{i=1}^{N(n\cdot)}\overrightarrow{V}_{i}\in G,\frac{1}{n}N(nT)\leq M\right)
\nonumber
\\
&=-\inf_{\substack{f\in G
\\
h\in\mathcal{AC}_{0}^{+}[0,T],g\in\mathcal{AC}_{0}[0,M]
\\
g(h(t))=f(t),0\leq t\leq T
\\
h(T)\leq M}}[I_{V}(g)+I_{N}(h)].
\nonumber
\end{align}
Since it holds for any $M\in\mathbb{N}$, the lower bound is proved.
\end{proof}

%We have proved in Lemma \ref{LDPLemma} that
%$\mathbb{P}(\overrightarrow{\mathbf{C}}_{n}(t)\in\cdot)$ satisfies a large deviation principle %on $L^{\infty}[0,\infty)$
%with rate function
%\begin{equation*}
%\mathcal{I}(\phi)=\inf_{\substack{
%h\in\mathcal{AC}_{0}^{+}[0,\infty),g\in\mathcal{AC}_{0}[0,\infty)
%\\
%g(h(t))=\phi(t)
%}}[I_{V}(g)+I_{N}(h)],
%\end{equation*}
%for any $\phi\in\mathcal{AC}_{0}[0,\infty)$, the space of absolutely continuous functions %starting at $0$ and
%$I(\phi)=+\infty$ otherwise.

Moreover, by the contraction principle,
\begin{theorem}\label{LDPThm}
Under the same assumptions as in Lemma \ref{LDPLemma},
$\mathbb{P}(({Q}_{n}^{b}(t),{Q}_{n}^{a}(t))\in\cdot)$ satisfies
a large deviation principle on $L^{\infty}[0,\infty)$ with the rate function
\begin{equation*}
I(f^{b},f^{a})=\inf_{\phi\in\mathcal{G}_{f}}\mathcal{I}(\phi),
\end{equation*}
where $\mathcal{I}(\cdot)$ is defined in Lemma \ref{LDPLemma},
$\mathcal{G}_{f}$ is the set consists of absolutely continuous functions $\phi(t)$ starting at $0$ that satisfy
\begin{equation*}
d(f^b(t), f^a(t))^T=\left(
\begin{array}{cccccc}
1 &-1 &-1 &0 &0 &0 \\
0 &0 &0 &1 &-1 &-1
\end{array}
\right)d\phi(t),
\end{equation*}
with the initial condition $(f^b(0), f^a(0))=(q^b, q^a)$. Otherwise $I(f)=\infty$.
\end{theorem}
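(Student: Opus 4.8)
The plan is to obtain the theorem as essentially a one-step application of the contraction principle to the sample-path large deviation principle for the net order flow established in Lemma~\ref{LDPLemma}. The starting point is the observation that $(\mathbf{Q}_n^b,\mathbf{Q}_n^a)$ is a fixed affine-linear image of $\overrightarrow{\mathbf{C}}_n$: reading off Eqn.~\eqref{scaledq}, one has $(Q_n^b(t),Q_n^a(t))^T = (q^b,q^a)^T + A\,\overrightarrow{C}_n(t)$, where $A$ is the $2\times 6$ matrix of Eqn.~\eqref{equation A} and $(q^b,q^a)$ are the (deterministic) initial queue lengths. Accordingly I would introduce the map $\Phi$ sending an $\mathbb{R}^6$-valued path $\phi$ to the $\mathbb{R}^2$-valued path $\Phi(\phi)(t) = (q^b,q^a)^T + A\phi(t)$, so that $(\mathbf{Q}_n^b,\mathbf{Q}_n^a) = \Phi(\overrightarrow{\mathbf{C}}_n)$. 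Since $A$ acts pointwise as a bounded linear operator and the constant shift is a translation, $\Phi$ is Lipschitz for the topology of uniform convergence, $\sup_t|\Phi(\phi)(t)-\Phi(\psi)(t)| \le \|A\|_{\mathrm{op}}\sup_t|\phi(t)-\psi(t)|$, hence continuous. As Lemma~\ref{LDPLemma} supplies an LDP for $\mathbb{P}(\overrightarrow{C}_n\in\cdot)$ with the good rate function $\mathcal{I}$, the contraction principle recalled above then yields an LDP for the push-forwards $\mathbb{P}((Q_n^b,Q_n^a)\in\cdot)=\mathbb{P}(\overrightarrow{C}_n\in\cdot)\circ\Phi^{-1}$ with rate function $I(f^b,f^a)=\inf\{\mathcal{I}(\phi):\Phi(\phi)=(f^b,f^a)\}$.

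It then remains only to identify the constraint set. The relation $\Phi(\phi)=(f^b,f^a)$ reads $(q^b,q^a)^T + A\phi(t) = (f^b(t),f^a(t))^T$ for every $t$. Because $\mathcal{I}(\phi)<\infty$ forces $\phi$ to be absolutely continuous with $\phi(0)=0$ (the only paths carrying finite cost in Lemma~\ref{LDPLemma}, being compositions $g\circ h$ of such paths), this pointwise identity may be differentiated and is equivalent to $d(f^b(t),f^a(t))^T = A\,d\phi(t)$ together with the initial condition $(f^b(0),f^a(0))=(q^b,q^a)$. This is precisely the defining description of $\mathcal{G}_f$, so the infimum collapses to $\inf_{\phi\in\mathcal{G}_f}\mathcal{I}(\phi)$, with the convention $I\equiv\infty$ when $\mathcal{G}_f=\emptyset$.

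The one point requiring genuine care is the passage from the finite horizon $[0,T]$, on which Lemma~\ref{LDPLemma} is stated, to the half-line $[0,\infty)$ appearing in the theorem. I would bridge this with the Dawson--G\"artner projective limit theorem (see \cite[Theorem~4.6.1]{dembo2009large}): the LDPs on $L_\infty[0,T]$, indexed by $T>0$, are consistent under the restriction maps $L_\infty[0,T']\to L_\infty[0,T]$ for $T<T'$, whose projective limit is the space of paths on $[0,\infty)$ under uniform convergence on compact subintervals. Since $\Phi$ commutes with these restrictions, one may contract on each finite horizon and then take the projective limit (or lift first and contract afterward); either route produces the asserted LDP on $L_\infty[0,\infty)$, with goodness of the limiting rate function inherited from goodness of $\mathcal{I}$ and continuity of $\Phi$. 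The only item I would double-check is that the infinite-horizon space is read with the uniform-on-compacts topology, which is the one compatible with the projective-limit construction and on which the upper and lower bounds transfer cleanly.
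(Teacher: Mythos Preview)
Your proposal is correct and takes essentially the same approach as the paper: both arguments express $(Q_n^b,Q_n^a)$ as the affine map $(q^b,q^a)^T+A\,\overrightarrow{C}_n$, verify continuity of this map in the uniform topology, and apply the contraction principle to the LDP of Lemma~\ref{LDPLemma}. If anything, your treatment is more careful on the passage from the finite-horizon statement of Lemma~\ref{LDPLemma} to the infinite-horizon claim of Theorem~\ref{LDPThm}: the paper simply asserts that the LDP for $\overrightarrow{\mathbf{C}}_n$ holds on $L^\infty[0,\infty)$ without explicitly bridging the gap, whereas you invoke the Dawson--G\"artner projective limit construction to justify this step.
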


\begin{proof}
Since $\mathbb{P}(\overrightarrow{\mathbf{C}}_{n}(t)\in\cdot)$ satisfies a large deviation principle on $L^{\infty}[0,\infty)$ with
the rate function $\mathcal{I}(\phi)$, it follows that
$\mathbb{P}((Q_{n}^{b}(t),Q_{n}^{a}(t))\in\cdot)$ satisfies
a large deviation principle on $L^{\infty}[0,\infty)$ with the rate function
\begin{equation*}
I(f):=I(f^{b},f^{a})=\inf_{\phi\in\mathcal{G}_{f}}\mathcal{I}(\phi),
\end{equation*}
where $\mathcal{G}_{f}$ is the set of absolutely continuous
functions $\phi(t)=(\phi^{j}(t),1\leq j\leq 6)$
starting at $0$ that satisfy
\begin{equation*}
d\left(\begin{aligned}
&f^b(t)\\
&f^a(t)\end{aligned}\right)
=\left(
\begin{array}{cccccc}
1 &-1 &-1 &0 &0 &0 \\
0 &0 &0 &1 &-1 &-1
\end{array}
\right)d\phi(t),
\end{equation*}
with the initial condition $(f^b(0), f^a(0))=(q^b, q^a)$.
It is clear that
\begin{align*}
&f^{b}(t)=q^{b}+\phi^{1}(t)-\phi^{2}(t)-\phi^{3}(t),
\\
&f^{a}(t)=q^{a}+\phi^{4}(t)-\phi^{5}(t)-\phi^{6}(t),
\nonumber
\end{align*}
and the mapping $\phi\mapsto(f^{b},f^{a})$ is continuous, since it is easy to check that if
\begin{equation*}
\phi_{n}(t):=(\phi^{1}_{n}(t),\ldots,\phi^{6}_{n}(t))\rightarrow
\phi(t)=(\phi^{1}(t),\ldots,\phi^{6}(t))
\end{equation*}
in the $L^{\infty}$ norm, then $(f_{n}^{b}(t),f_{n}^{a}(t))\rightarrow(f^{b}(t),f^{a}(t))$ in the $L^{\infty}$ norm.
Since the mapping $\phi\mapsto(f^{b},f^{a})$ is continuous, the large deviation principle follows
from the contraction principle.
\end{proof}
Let us now consider a special case of Theorem \ref{LDPThm}:
\begin{corollary}
Assume that $N(t)$ is a standard Poisson process with intensity $\lambda$ independent of
the i.i.d. random vectors $\overrightarrow{V}_{i}$ in $\mathbb{R}^{6}$ such that $\mathbb{E}[e^{\theta\cdot\overrightarrow{V}_{1}}]<\infty$
for any $\theta\in\mathbb{R}^{6}$. Then, the rate function $I(f)$ in
Eqn.~\eqref{RateFunctionII} in Lemma \ref{LDPLemma} has an alternative expression
\begin{equation}\label{RateFunctionI}
\mathcal{I}(f)=\int_{0}^{\infty}\Lambda(f'(t))dt,
\end{equation}
for any $f\in\mathcal{AC}_{0}[0,\infty)$, the space of absolutely continuous functions starting at $0$ and
$I(\phi)=+\infty$ otherwise, where
\begin{equation*}
\Lambda(x):=\sup_{\theta\in\mathbb{R}^{6}}\left\{\theta\cdot x-\lambda(\mathbb{E}[e^{\theta\cdot\overrightarrow{V}_{1}}]-1)\right\}.
\end{equation*}
\end{corollary}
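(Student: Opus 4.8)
The plan is to start from the variational rate function \eqref{RateFunctionII} of Lemma~\ref{LDPLemma} and show that, in this compound-Poisson regime, the inner optimization over the time change collapses to the single-integral form. First I would specialize the two cumulant generating functions. Since $N$ is Poisson with rate $\lambda$, one has $\Gamma_{N}(\theta)=\lambda(e^{\theta}-1)$; since the $\overrightarrow{V}_i$ are i.i.d.\ with finite moment generating function everywhere, $\Gamma_{V}(\theta)=\log\mathbb{E}[e^{\theta\cdot\overrightarrow{V}_1}]$. Both are finite, convex, and lower semicontinuous, so by Fenchel--Moreau they are precisely the convex conjugates of $\Lambda_N$ and $\Lambda_V$, a duality I will use repeatedly.

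Next I would perform the change of variables $x=h(t)$ in $I_{V}(g)=\int_0^\infty\Lambda_V(g'(x))\,dx$. On the range of the non-decreasing time change $h$ this gives $\int_0^T \Lambda_V(g'(h(t)))\,h'(t)\,dt$, so that the objective becomes
\[
I_{V}(g)+I_{N}(h)=\int_0^T\bigl[\Lambda_V(g'(h(t)))\,h'(t)+\Lambda_N(h'(t))\bigr]\,dt .
\]
Writing $r=h'(t)\ge 0$ and $y=g'(h(t))$, the composition constraint $g(h(t))=f(t)$ forces $f'(t)=ry$ almost everywhere, so after pointwise minimization $\mathcal{I}(f)=\int_0^\infty\Lambda(f'(t))\,dt$ with
\[
\Lambda(\xi)=\inf_{\,r\ge 0,\;ry=\xi}\bigl[r\,\Lambda_V(y)+\Lambda_N(r)\bigr].
\]

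The heart of the proof is then a convex-duality computation identifying $\Lambda$. Taking the Legendre transform, interchanging the supremum over $\xi$ with the infimum defining $\Lambda$, and substituting $\xi=ry$, one obtains
\[
\Lambda^{*}(\theta)=\sup_{r\ge 0}\Bigl\{r\sup_{y}\bigl[\theta\cdot y-\Lambda_V(y)\bigr]-\Lambda_N(r)\Bigr\}=\sup_{r\ge 0}\bigl\{r\,\Gamma_V(\theta)-\Lambda_N(r)\bigr\}=\Gamma_N\bigl(\Gamma_V(\theta)\bigr),
\]
using Fenchel--Moreau for $\Gamma_V$ and then for $\Gamma_N$ at the point $\Gamma_V(\theta)$. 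Substituting the Poisson exponent yields $\Lambda^{*}(\theta)=\lambda\bigl(e^{\Gamma_V(\theta)}-1\bigr)=\lambda\bigl(\mathbb{E}[e^{\theta\cdot\overrightarrow{V}_1}]-1\bigr)$. Since $(\xi,r)\mapsto r\,\Lambda_V(\xi/r)$ is the perspective of the convex $\Lambda_V$ and hence jointly convex, adding $\Lambda_N(r)$ and minimizing over $r$ keeps $\Lambda$ convex; as the everywhere-finite moment assumption makes $\Lambda^{*}$ finite and proper, biconjugation gives $\Lambda=\Lambda^{**}=(\Lambda^{*})^{*}$, which is exactly $\sup_\theta\{\theta\cdot\xi-\lambda(\mathbb{E}[e^{\theta\cdot\overrightarrow{V}_1}]-1)\}$, as claimed.

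The step I expect to be the main obstacle is making the change of variables and the pointwise minimization rigorous when the optimal time change $h$ is not strictly increasing: on flat stretches $h'=0$ while $g$ still runs over a genuine interval, and one must argue that such configurations are never optimal so that the reduction to $f'(t)=ry$ is lossless. The everywhere-finiteness of $\mathbb{E}[e^{\theta\cdot\overrightarrow{V}_1}]$ is what guarantees $\Lambda_V$ has full-line superlinear growth, which both makes the pointwise infimum attained and ensures the minimized $\Lambda$ is closed, so that $\Lambda=\Lambda^{**}$ holds without a lower-semicontinuous-hull correction. As an independent check and an alternative route, $\sum_{i=1}^{N(t)}\overrightarrow{V}_i$ is here a genuine compound-Poisson L\'evy process with exponent $\lambda(\mathbb{E}[e^{\theta\cdot\overrightarrow{V}_1}]-1)$, so Mogulskii's theorem applied to $\frac1n\sum_{i=1}^{N(n\cdot)}\overrightarrow{V}_i$ produces directly the rate function $\int_0^\infty\Lambda(f'(t))\,dt$; by uniqueness of the rate function this must agree with \eqref{RateFunctionII}.
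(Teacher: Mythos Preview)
Your approach is essentially the same as the paper's: specialize $\Gamma_V$ and $\Gamma_N$, perform the change of variables $x=h(t)$ in $I_V$, and reduce $\mathcal{I}(f)$ to a pointwise optimization over the time-change speed. The only difference is in how the final identification of $\Lambda$ is carried out: the paper writes the pointwise problem as $\inf_{y}\sup_{\theta}\{\theta\cdot x - y\log\mathbb{E}[e^{\theta\cdot\overrightarrow{V}_1}]+y\log(y/\lambda)-y+\lambda\}$ and swaps the $\inf$ and $\sup$ directly, then minimizes over $y$ in closed form to obtain $\sup_{\theta}\{\theta\cdot x-\lambda(\mathbb{E}[e^{\theta\cdot\overrightarrow{V}_1}]-1)\}$; you instead compute $\Lambda^{*}(\theta)=\Gamma_N(\Gamma_V(\theta))$ and close the loop via biconjugation, invoking joint convexity of the perspective $r\Lambda_V(\xi/r)$. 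These are two packagings of the same duality, and your version is in fact a bit more careful about justifying the swap. Your remark about flat stretches of $h$ and the alternative Mogulskii argument for the compound Poisson process are not in the paper but are sound supplementary observations.
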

\begin{proof}
First, notice that when $N_{t}$ is a standard Poisson process with intensity $\lambda$,
independent of i.i.d. random vectors $\overrightarrow{V}_{i}$
then, $N(i)-N(i-1)$ is a sequence of i.i.d. Poisson random variables with parameter $\lambda$
and therefore both $(\overrightarrow{V}_{i})_{i\in\mathbb{N}}$ and  $(N(i)-N(i-1))_{i\in\mathbb{N}}$ satisfy Assumption \ref{LDPAssump}
and Assumption \ref{flui} is also satisfied. Under the assumption, $\mathbb{E}[e^{\theta\cdot\overrightarrow{V}_{1}}]<\infty$
for any $\theta\in\mathbb{R}^{6}$ and moreover, $\mathbb{E}[e^{\theta(N(i)-N(i-1))}]=e^{\lambda(e^{\theta}-1)}<\infty$
for any $\theta\in\mathbb{R}$. Therefore, both $(\overrightarrow{V}_{i})_{i\in\mathbb{N}}$ and  $(N(i)-N(i-1))_{i\in\mathbb{N}}$
satisfy Assumption \ref{LDPAssumpII}.

By Lemma \ref{LDPLemma},
\begin{equation*}
I_{V}(g)+I_{N}(h)
=\int_{0}^{T}\Lambda_{V}(g'(t))dt
+\int_{0}^{\infty}\Lambda_{N}(h'(t))dt,
\end{equation*}
where
\begin{equation*}
\Lambda_{V}(x)=\sup_{\theta\in\mathbb{R}^{6}}\left\{\theta\cdot x-\log\mathbb{E}\left[e^{\theta\cdot\overrightarrow{V}_{1}}\right]\right\},
\end{equation*}
and
\begin{equation*}
\Lambda_{N}(x)=x\log\left(\frac{x}{\lambda}\right)-x+\lambda.
\end{equation*}
Since $f(t)=g(h(t))$, we have $f'(t)=g'(h(t))h'(t)$ and
\begin{equation*}
\int_{0}^{\infty}\Lambda_{V}(g'(t))dt
=\int_{0}^{T}\Lambda_{V}(g'(h(t))h'(t)dt
=\int_{0}^{T}\Lambda_{V}\left(\frac{f'(t)}{h'(t)}\right)h'(t)dt.
\end{equation*}
Therefore,
\begin{align*}
&\inf_{\substack{
h\in\mathcal{AC}_{0}^{+}[0,T],g\in\mathcal{AC}_{0}[0,\infty)
\\
g(h(t))=f(t),0\leq t\leq T
}}(I_{V}(g)+I_{N}(h))
\\
&=\inf_{h\in\mathcal{AC}_{0}^{+}[0,T]}
\int_{0}^{T}\left(\Lambda_{V}\left(\frac{f'(t)}{h'(t)}\right)h'(t)
+h'(t)\log\left(\frac{h'(t)}{\lambda}\right)-h'(t)+\lambda\right)dt.
\nonumber
\end{align*}
Now,
\begin{align*}
&\inf_{y}\left\{\Lambda_{V}\left(\frac{x}{y}\right)y
+y\log\left(\frac{y}{\lambda}\right)-y+\lambda\right\}
\\
&=\inf_{y}\sup_{\theta}\left\{\theta\cdot x-y\log\mathbb{E}[e^{\theta\cdot\overrightarrow{V}_{1}}]
+y\log\left(\frac{y}{\lambda}\right)-y+\lambda\right\}
\nonumber
\\
&=\sup_{\theta}\inf_{y}\left\{\theta\cdot x-y\log\mathbb{E}[e^{\theta\cdot\overrightarrow{V}_{1}}]
+y\log\left(\frac{y}{\lambda}\right)-y+\lambda\right\}
\nonumber
\\
&=\sup_{\theta}\left\{\theta\cdot x-\lambda(\mathbb{E}[e^{\theta\cdot\overrightarrow{V}_{1}}]-1)\right\}.
\nonumber
\end{align*}
Therefore,
Eqn.~\eqref{RateFunctionII} reduces to
Eqn.~\eqref{RateFunctionI}.
%Indeed, one can obtain
%\eqref{RateFunctionI} directly as follows. Under our assumptions,
%$\overrightarrow{\mathbf{C}}_{n}(t)$ has independent time increments and
%thus we can apply the Mogulskii's theorem \ref{Mogulskii} in
%Appendix~\ref{AppendixB},
%to obtain a sample path large deviation principle for %$\mathbb{P}(\overrightarrow{\mathbf{C}}_{n}(t)\in\cdot)$.
%For any $\theta\in\mathbb{R}^{6}$, we can compute that
%\begin{equation}
%\mathbb{E}\left[e^{\theta\cdot\sum_{i=1}^{N(nt)}\overrightarrow{V}_{i}}\right]
%=\exp\left\{\lambda nt(\mathbb{E}[e^{\theta\cdot\overrightarrow{V}_{1}}]-1)\right\}.
%\nonumber
%\end{equation}
%Therefore, for any $\theta\in\mathbb{R}^{6}$,
%\begin{equation}
%\Gamma(\theta):=\lim_{n\rightarrow\infty}\frac{1}{n}\log\mathbb{E}[e^{\theta\cdot\overrightarrow{\mathbf{C}}_{n}(1)}]
%=\lambda(\mathbb{E}[e^{\theta\cdot\overrightarrow{V}_{1}}]-1),
%\end{equation}
%and
%\begin{align}
%\Lambda(x)&:=\sup_{\theta\in\mathbb{R}^{6}}\{\theta\cdot x-\Gamma(\theta)\}
%\\
%&=\sup_{\theta\in\mathbb{R}^{6}}\left\{\theta\cdot %x-\lambda(\mathbb{E}[e^{\theta\cdot\overrightarrow{V}_{1}}]-1)\right\}.
%\nonumber
%\end{align}
%By Mogulskii's theorem \ref{Mogulskii}, %$\mathbb{P}(\overrightarrow{\mathbf{C}}_{n}(t)\in\cdot)$ satisfies a large deviation principle %on $L^{\infty}[0,\infty)$
%with rate function
%\begin{equation}
%\mathcal{I}(f)=\int_{0}^{\infty}\Lambda(f'(t))dt,
%\end{equation}
%for any $f\in\mathcal{AC}_{0}[0,\infty)$, the space of absolutely continuous functions %starting at $0$ and
%$I(f)=+\infty$ otherwise.
\end{proof}

\section{Applications to LOB}
\label{Sec application}

\subsection{Examples} Having established the fluid limit and the fluctuations of the queue lengths and order positions, we will  give some examples
of the order arrival process $N(t)$ that satisfy the assumptions in our analysis.

\begin{example}[Poisson process]
Let $N(t)$ be a Poisson process with intensity $\lambda$. %Then $\{N(i,i+1]\}_{i\in\mathbb{Z}}$ %are i.i.d.
%distributed Poisson random variables with parameter $\lambda$. Obviously, $\frac{N(t)}{t}$ %converges to $\lambda$ a.s. and $v_{d}^{2}=\lambda$. Therefore
Clearly assumptions \ref{fluidd} and \ref{assumptionD} are satisfied.
\end{example}

\begin{example}[Hawkes process]
Let $N(t)$ be a Hawkes process \cite{bremaud1996stability}, i.e.,
a simple point process with intensity
\begin{equation}\label{Hdynamics}
\lambda(t):=\lambda\left(\int_{-\infty}^{t}h(t-s)N(ds)\right),
\end{equation}
at time $t$, where we assume that
$\lambda:\mathbb{R}_{\geq 0}\rightarrow\mathbb{R}^{+}$ is an increasing function,
$\alpha$-Lipschitz, where $\alpha\Vert h\Vert_{L^{1}}<1$ and
$h:\mathbb{R}_{\geq 0}\rightarrow\mathbb{R}^{+}$ is a decreasing function and
$\int_{0}^{\infty}h(t)tdt<\infty$. Under these assumptions, there exists
a stationary and ergodic Hawkes process satisfying the dynamics
Eqn.~\eqref{Hdynamics} (see e.g., Br\'{e}maud and Massouli\'{e} \cite{bremaud1996stability}).
By the Ergodic theorem,
\begin{equation*}
\frac{N(t)}{t}\rightarrow\lambda:=\mathbb{E}[N(0,1]],
\end{equation*}
a.s. as $t\rightarrow\infty$.
Therefore, Assumption
\ref{fluidd} is satisfied. It was proved in
Zhu \cite{zhu2013central}, that
$\{N(i,i+1]\}_{i\in\mathbb{Z}}$ satisfies Assumption \ref{assumptionD} and
hence $\frac{N_{ n \cdot }-\lambda n\cdot }{\sqrt{n}}\Rightarrow v_d W_1(\cdot)$,
on $(D[0,T],J_{1})$ as $n\rightarrow\infty$.

In the special case $\lambda(z)=\nu+z$, Eqn.~\eqref{Hdynamics} becomes
\begin{equation*}
\lambda(t)=\nu+\int_{-\infty}^{t}h(t-s)N(ds),
\end{equation*}
which is the original self-exciting point process proposed by Hawkes \cite{hawkes1971spectra}, where $\nu>0$
and $\Vert h\Vert_{L^{1}}<1$. In this case,
\begin{equation*}
\lambda=\frac{\nu}{1-\Vert h\Vert_{L^{1}}},
\qquad
v_d^{2}=\frac{\nu}{(1-\Vert h\Vert_{L^{1}})^{3}}.
\end{equation*}
\end{example}

\begin{example}[Cox process with shot noise intensity]
Let $N(t)$ be a Cox process with shot noise intensity
(see for example \cite{asmussen2010ruin}).
That is, $N(t)$ is a simple point process with intensity at time $t$ given by
\begin{equation*}
\lambda(t)=\nu+\int_{-\infty}^{t}g(t-s)\bar{N}(ds),
\end{equation*}
where $\bar{N}$ is a Poisson process with intensity $\rho$,
$g:\mathbb{R}_{\geq 0}\rightarrow\mathbb{R}^{+}$
is decreasing,
$\Vert g\Vert_{L^{1}}<\infty$, and $\int_{0}^{\infty}tg(t)dt<\infty$.
$N(t)$ is stationary and ergodic and
\begin{equation*}
\frac{N(t)}{t}\rightarrow\lambda:=\nu+\rho\Vert g\Vert_{L^{1}} \mbox{ a.s.},
\end{equation*}
as $t\rightarrow\infty$.
Therefore,  Assumption \ref{fluidd} is satisfied.
Moreover  one can check that  condition
Eqn.~\eqref{L2Condition} in Assumption \ref{assumptionD} is satisfied. Indeed, by stationarity,
\begin{equation*}
\Vert\mathbb{E}[N(0,1]-\lambda|\mathcal{F}^{-\infty}_{-n}]\Vert_{2}
=\Vert\mathbb{E}[N(n-1,n]-\lambda|\mathcal{F}^{-\infty}_{0}]\Vert_{2}.
\end{equation*}
We have
\begin{equation*}
\mathbb{E}[N(n-1,n]-\lambda \mid \mathcal{F}^{-\infty}_{0}]
=\mathbb{E}\left[\int_{n-1}^{n}\lambda(t)dt-\lambda\bigg|\mathcal{F}^{-\infty}_{0}\right],
\end{equation*}
where
\begin{equation*}
\lambda(t)=\nu+\int_{-\infty}^{0}g(t-s)\bar{N}(ds)
+\int_{0}^{t}g(t-s)\bar{N}(ds),
\end{equation*}
therefore,
\begin{equation*}
\mathbb{E}[N(n-1,n]-\lambda \mid \mathcal{F}^{-\infty}_{0}]
=\int_{n-1}^{n}\int_{-\infty}^{0}g(t-s)\bar{N}(ds)dt
+\rho\int_{n-1}^{n}\int_{0}^{t}g(t-s)dsdt-\rho\Vert g\Vert_{L^{1}}.
\end{equation*}
By Minkowski's inequality,
\begin{align*}
\Vert\mathbb{E}[N(n-1,n]-\lambda \mid \mathcal{F}^{-\infty}_{0}]\Vert_{2}
\leq\bigg\Vert\int_{n-1}^{n}\int_{-\infty}^{0}g(t-s)\bar{N}(ds)dt\bigg\Vert_{2}
%\\ &\qquad\qquad
+\bigg\Vert\rho\int_{n-1}^{n}\int_{0}^{t}g(t-s)dsdt-\rho\Vert g\Vert_{L^{1}}\bigg\Vert_{2}.
\nonumber
\end{align*}
Note that
\begin{equation*}
\bigg\Vert\rho\int_{n-1}^{n}\int_{0}^{t}g(t-s)dsdt-\rho\Vert g\Vert_{L^{1}}\bigg\Vert_{2}
=\rho\int_{n-1}^{n}\int_{t}^{\infty}g(s)dsdt,
\end{equation*}
therefore,
\begin{equation*}
\sum_{n=1}^{\infty}\bigg\Vert\rho\int_{n-1}^{n}\int_{0}^{t}g(t-s)dsdt-\rho\Vert g\Vert_{L^{1}}\bigg\Vert_{2}
=\int_{0}^{\infty}\int_{t}^{\infty}g(s)dsdt
=\int_{0}^{\infty}tg(t)dt.
\end{equation*}
Furthermore,
\begin{align*}
\sum_{n=1}^{\infty}\bigg\Vert\int_{n-1}^{n}\int_{-\infty}^{0}g(t-s)\bar{N}(ds)dt\bigg\Vert_{2}
&\leq\sum_{n=1}^{\infty}\bigg\Vert\int_{-\infty}^{0}g(n-1-s)\bar{N}(ds)\bigg\Vert_{2}
\\
&=\sum_{n=1}^{\infty}\sqrt{\int_{-\infty}^{0}g^{2}(n-1-s)\rho ds
+\rho^{2}\left(\int_{-\infty}^{0}g(n-1-s)ds\right)^{2}}
\nonumber
\\
&\leq\sum_{n=1}^{\infty}\sqrt{\int_{-\infty}^{0}g^{2}(n-1-s)\rho ds}
+\sum_{n=1}^{\infty}\rho\int_{-\infty}^{0}g(n-1-s)ds
\nonumber
\\
&\leq\sqrt{\rho}\sum_{n=1}^{\infty}\sqrt{g(n-1)}\sqrt{\int_{-\infty}^{0}g(n-1-s)ds}
+\rho\int_{0}^{\infty}tg(t)dt
\nonumber
\\
&\leq
\frac{\sqrt{\rho}}{4}\left[\sum_{n=1}^{\infty}g(n-1)+\sum_{n=1}^{\infty}\int_{-\infty}^{0}g(n-1-s)ds\right]
+\rho\int_{0}^{\infty}tg(t)dt
\nonumber
\\
&\leq
\frac{\sqrt{\rho}}{4}\left[g(0)+\Vert g\Vert_{L^{1}}+\int_{0}^{\infty}tg(t)dt\right]
+\rho\int_{0}^{\infty}tg(t)dt<\infty.
\nonumber
\end{align*}
Hence Assumption \ref{assumptionD} is satisfied. $\frac{N_{ n \cdot }-\lambda n\cdot }{\sqrt{n}}\Rightarrow v_d W_1(\cdot)$
in $(D[0,T],J_{1})$ as $n\rightarrow\infty$, where
\begin{equation*}
v_d^{2}=\nu+\rho\Vert g\Vert_{L^{1}}+\rho\Vert g^{2}\Vert_{L^{1}}.
\end{equation*}
\end{example}
\subsection{Probability of price increase and hitting times}
Given the diffusion limit to the queue lengths for the best bid and ask, we can also compute
 the distribution of the first hitting time $\iota$ (defined in
 Eqn.~\eqref{tau}) and the probability of price increase/decrease.
Our results generalize those in \cite{cont2013price} which correspond to the special case of zero drift.

Given Theorem \ref{DiffThm}, let us first parameterize $\sigma$ by
\begin{align*}
\sigma = \left(\begin{array}{cc}
\sigma_1\sqrt{1-\rho^2} &\sigma_1\rho\\
0 &\sigma_2
\end{array}
\right),
\end{align*}
and assume that $-1<\rho<1$.
Next, denote
$I_\nu$  the modified Bessel function of the first kind of order
$\nu$ and $\nu_n := n\pi/\alpha$, and define
\begin{align*}
&\alpha :=
\begin{cases}
\pi+\tan^{-1}\left(-\frac{\sqrt{1-\rho^2}}{\rho}\right) &\qquad\rho>0, \\
\frac{\pi}{2} &\qquad\rho=0, \\
\tan^{-1}\left(-\frac{\sqrt{1-\rho^2}}{\rho}\right) &\qquad\rho<0,
\end{cases}
\\
&
r_0 :=\sqrt{\frac{(q^b/\sigma_1)^2+(q^a/\sigma_2)^2-2\rho (q^b/\sigma_1)( q^a/\sigma_2)}{1-\rho^2}},
\\
&\theta_0 :=
\begin{cases}
\pi+\tan^{-1}\left(\frac{q^a/\sigma_2\sqrt{1-\rho^2}}{q^b/\sigma_1-\rho q^a/\sigma_2}\right) &\qquad q^b/\sigma_1<\rho q^a/\sigma_2,\\
\frac{\pi}{2} &\qquad q^b/\sigma_1=\rho q^a/\sigma_2,\\
\tan^{-1}\left(\frac{q^a/\sigma_2\sqrt{1-\rho^2}}{q^b/\sigma_1-\rho q^a/\sigma_2}\right) &\qquad q^b/\sigma_1>\rho q^a/\sigma_2.
\end{cases}
\end{align*}
Then according to Zhou~\cite{zhou2001analysis}, we have
\begin{corollary}
Given Theorem \ref{DiffThm} and the initial state $(q^b, q^a)$, the distribution of the first hitting time $\iota$
\begin{equation*}
\mathbb{P}_{\overrightarrow{\mu}}(\iota>t) = \frac{2}{\alpha t} e^{l_1 q^b+l_2 q^a+l_3 t}\sum\limits_{n=1}^{\infty} \sin\left(\frac{n\pi \theta_0}{\alpha}\right)
e^{-\frac{r_0^2}{2t}} \int_{0}^\alpha \sin\left(\frac{n\pi\theta}{\alpha}\right)g_n(\theta)d\theta,
\end{equation*}
where
\begin{align*}
g_n(\theta) &:= \int_0^\infty re^{-\frac{r^2}{2t}}e^{l_4r\sin(\theta-\alpha)-l_5r\cos(\theta-\alpha)}I_{\frac{n\pi}{\alpha}}\left(\frac{rr_0}{t}\right)dr,\\
l_1 &:= \frac{-\mu_1\sigma_2+\rho\mu_2\sigma_1}{(1-\rho^2)\sigma_1^2\sigma_2},
\enspace
l_2 := \frac{\rho\mu_1\sigma_2-\mu_2\sigma_1}{(1-\rho^2)\sigma_2^2\sigma_1},
\enspace
l_3 := \frac{l_1^2\sigma_1^2}{2}+\rho l_1 l_2 \sigma_1\sigma_2+\frac{l_2^2\sigma_2^2}{2}+l_1\mu_1+l_2\mu_2, \\
l_4 &:= l_1\sigma_1+\rho l_2\sigma_2,
\enspace
l_5 := l_2\sigma_2\sqrt{1-\rho^2}.
\end{align*}
\end{corollary}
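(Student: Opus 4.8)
The plan is to reduce the first-passage problem for the correlated limit diffusion $(\mathbf{R}^b,\mathbf{R}^a)$ to the first-exit problem of a \emph{standard} planar Brownian motion from a wedge, and then to invoke the explicit wedge formula of Zhou~\cite{zhou2001analysis}. By Theorem~\ref{DiffThm}, up to the first hitting time $\iota$ of the boundary of the positive quadrant, $(\mathbf{R}^b,\mathbf{R}^a)$ is a two-dimensional Brownian motion started at $(q^b,q^a)$ with drift $\overrightarrow{\mu}=(\mu_1,\mu_2)$ and covariance $\Sigma=\sigma\sigma^T$, where $\sigma$ has the stated lower-triangular parameterization in $\sigma_1,\sigma_2,\rho$. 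Thus $\iota$ is exactly the exit time of this diffusion from $\{R^b>0,\ R^a>0\}$, and the goal is to express $\mathbb{P}_{\overrightarrow{\mu}}(\iota>t)$ in closed form.

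First I would apply the linear change of variables that simultaneously normalizes the variances and decorrelates the coordinates, namely $\tilde{X}=R^b/\sigma_1$ and $\tilde{Y}=(R^a/\sigma_2-\rho R^b/\sigma_1)/\sqrt{1-\rho^2}$, under which $(\tilde{X},\tilde{Y})$ has the identity diffusion matrix. The boundary rays $R^b=0$ and $R^a=0$ map to two rays whose opening angle $\alpha$ satisfies $\cos\alpha=-\rho$ (read off from their inner product); this is precisely the case analysis defining $\alpha$, giving $\alpha=\pi/2$ for $\rho=0$, $\alpha\in(\pi/2,\pi)$ for $\rho>0$, and $\alpha\in(0,\pi/2)$ for $\rho<0$. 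Writing the image of the starting point $(q^b,q^a)$ in polar coordinates then reproduces the stated $r_0$ and $\theta_0$; in particular a one-line computation gives $r_0^2=((q^b/\sigma_1)^2+(q^a/\sigma_2)^2-2\rho(q^b/\sigma_1)(q^a/\sigma_2))/(1-\rho^2)$.

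Next I would strip the drift by a Girsanov change of measure, writing $\mathbb{P}_{\overrightarrow{\mu}}(\iota>t)$ as an expectation under the driftless law weighted by $\exp\{\kappa\cdot(R_t-R_0)-\tfrac12\kappa^T\Sigma\kappa\,t\}$ with $\kappa=\Sigma^{-1}\overrightarrow{\mu}$. The deterministic initial-point and time contributions combine into the prefactor $e^{l_1 q^b+l_2 q^a+l_3 t}$, where $l_1,l_2$ are the components of $-\kappa$ (matching the stated $l_1,l_2$) and $l_3=-\tfrac12\kappa^T\Sigma\kappa$ equals the stated quadratic form $\tfrac12 l_1^2\sigma_1^2+\rho l_1 l_2\sigma_1\sigma_2+\tfrac12 l_2^2\sigma_2^2+l_1\mu_1+l_2\mu_2$. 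The remaining endpoint term $\kappa\cdot R_t$ stays inside the expectation and, after the coordinate change, re-expresses in the decorrelated polar variables through the constants $l_4=l_1\sigma_1+\rho l_2\sigma_2$ and $l_5=l_2\sigma_2\sqrt{1-\rho^2}$, producing exactly the exponential factor $e^{l_4 r\sin(\theta-\alpha)-l_5 r\cos(\theta-\alpha)}$ that appears inside $g_n$.

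It then remains to evaluate the survival probability of a driftless standard Brownian motion in the wedge of angle $\alpha$ started at $(r_0,\theta_0)$. Passing to polar coordinates and separating variables, the Dirichlet heat kernel on the wedge has the eigenfunction expansion with angular modes $\sin(n\pi\theta/\alpha)$ and radial modes governed by the modified Bessel functions of order $\nu_n=n\pi/\alpha$, which together with the Gaussian radial weight $e^{-r_0^2/(2t)}$ and the drift factor inside $g_n$ assemble into Zhou's formula~\cite{zhou2001analysis}. The main obstacle is not any single estimate but the bookkeeping through the composition of the decorrelating linear map and the Girsanov shift: one must verify that every constant $\alpha,r_0,\theta_0,l_1,\dots,l_5$ lands in precisely the form required to match Zhou's expression. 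The wedge heat-kernel expansion itself — the genuinely analytic ingredient, obtained via separation of variables and a Hankel/Kontorovich--Lebedev transform in the radial direction — is classical and is quoted directly from~\cite{zhou2001analysis}.
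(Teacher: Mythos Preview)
Your proposal is correct and follows the same route as the paper, which simply invokes Zhou~\cite{zhou2001analysis} without spelling out the reduction. You provide considerably more detail on the decorrelation and the Girsanov tilt, and your identification of $(l_1,l_2)=-\Sigma^{-1}\overrightarrow{\mu}$ together with $l_3=-\tfrac12\kappa^T\Sigma\kappa$ is accurate. One minor slip: the paper's $\sigma$ is upper triangular, not lower triangular, but this does not affect any of your computations.
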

Note that when $\overrightarrow{\mu}>0$, it is possible to have $\mathbb{P}_{\overrightarrow{\mu}}(\iota = \infty) > 0$, meaning the measure $\mathbb{P}_{\overrightarrow{\mu}}$  might be a sub-probability measure, depending on the value of $\overrightarrow{\mu}$. In this case, $\mathbb{P}_{\overrightarrow{\mu}}(\iota>t)$ actually includes $\mathbb{P}_{\overrightarrow{\mu}}(\iota = \infty)$.

Moreover, based on the results in Iyengar~\cite{iyengar1985hitting}
and Metzler~\cite{metzler2010first},
\begin{corollary}
 Given Theorem \ref{DiffThm} and the initial state $(q^b, q^a)$,  the probability of price decrease is given by
\begin{equation*}
\mathbb{P}_{\overrightarrow{\mu}}(\iota^b < \iota^a) = \int_0^\infty \int_0^\infty
\exp(\kappa^{b}(r\cos\alpha-z^{b})+\kappa^{a}(r\sin\alpha-z^{a})-|\overrightarrow{\kappa}|^2t/2){g}(t, r) dr dt,
\end{equation*}
where
\begin{equation*}
{g}(t, r)= \frac{\pi}{\alpha^2tr}e^{-(r^2+r_0^2)/2t}\sum\limits_{n=1}^{\infty} n\sin\left(\frac{n\pi(\alpha-\theta_0)}{\alpha}\right)I_{n\pi/\alpha}\left(\frac{r r_0}{t}\right),
\end{equation*}
and $\overrightarrow{\kappa}=(\kappa^{b},\kappa^{a})^{T}=\sigma^{-1}(\mu_{1},\mu_{2})^{T}$
and $(z^{b},z^{a})=\sigma^{-1}(q^{b},q^{a})^{T}$. That is,
\begin{equation*}
\left(
\begin{array}{c}
\kappa^{b}
\\
\kappa^{a}
\end{array}
\right)
=
\left(
\begin{array}{c}
\sigma_{2}\mu_{1}
\\
-\sigma_{1}\rho\mu_{1}+\sigma_{1}\sqrt{1-\rho^{2}}\mu_{2}
\end{array}
\right),
\qquad
\left(
\begin{array}{c}
z^{b}
\\
z^{a}
\end{array}
\right)
=
\left(
\begin{array}{c}
\sigma_{2}q^{b}
\\
-\sigma_{1}\rho q^{b}+\sigma_{1}\sqrt{1-\rho^{2}}q^{a}
\end{array}
\right).
\end{equation*}
\end{corollary}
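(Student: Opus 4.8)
The plan is to recognize $\{\iota^b<\iota^a\}$ as the event that the best bid queue $\mathbf{R}^b$ is depleted before the best ask queue $\mathbf{R}^a$ (i.e.\ a price decrease occurs first), and to reduce the first-passage problem for the correlated, drifting two-dimensional Brownian motion $(\mathbf{R}^b,\mathbf{R}^a)$ furnished by Theorem~\ref{DiffThm} to the classical problem of a \emph{standard} planar Brownian motion exiting a wedge, for which Iyengar~\cite{iyengar1985hitting} and Metzler~\cite{metzler2010first} supply explicit densities. The reduction proceeds in three moves: a linear whitening of the covariance, a Girsanov removal of the drift, and substitution of the known driftless wedge density.

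First I would apply the linear change of variables $\widetilde{\mathbf{W}}:=\sigma^{-1}(\mathbf{R}^b,\mathbf{R}^a)^{T}$. Since $\sigma\sigma^{T}$ is the covariance matrix of Eqn.~\eqref{mu_sigma}, the process $\widetilde{\mathbf{W}}$ is a standard planar Brownian motion carrying the constant drift $\overrightarrow{\kappa}=\sigma^{-1}(\mu_1,\mu_2)^{T}$ and started from $(z^b,z^a)=\sigma^{-1}(q^b,q^a)^{T}$. Under this map the first quadrant $\{R^b>0,R^a>0\}$ is carried to a wedge; a direct computation with the explicit $\sigma$ (using the given parameterization with correlation $\rho$) shows the opening angle is exactly $\alpha$ and that the image of the start has polar coordinates $(r_0,\theta_0)$, which is precisely how $\alpha$, $r_0$, $\theta_0$ were defined. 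The boundary ray on which $\mathbf{R}^b$ vanishes is the ray at angle $\alpha$, so an exit point there has the form $(r\cos\alpha,r\sin\alpha)$ for a radius $r>0$.

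Next I would strip the drift via Girsanov's theorem. On $\mathcal{F}_t$ the drifting law $\mathbb{P}_{\overrightarrow{\kappa}}$ is absolutely continuous with respect to the driftless law $\mathbb{P}_0$ with density $\exp(\overrightarrow{\kappa}\cdot(\widetilde{W}_t-\widetilde{W}_0)-|\overrightarrow{\kappa}|^2t/2)$. Evaluating this density at the exit time $\iota^b=t$ and exit location $(r\cos\alpha,r\sin\alpha)$, and using $\widetilde{W}_0=(z^b,z^a)$, produces exactly the weight $\exp(\kappa^b(r\cos\alpha-z^b)+\kappa^a(r\sin\alpha-z^a)-|\overrightarrow{\kappa}|^2t/2)$ appearing in the integrand; the change of measure is applied at the finite stopping time $\iota^b$, which is the relevant horizon on the event in question. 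Integrating this weight against the driftless joint law of the exit time $t$ and radial exit position $r$ through the $\theta=\alpha$ side of the wedge then yields $\mathbb{P}_{\overrightarrow{\kappa}}(\iota^b<\iota^a)$.

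Finally I would invoke the Iyengar--Metzler formula: for driftless planar Brownian motion started at $(r_0,\theta_0)$ inside a wedge of angle $\alpha$, the joint density of the time and radius at which it first reaches the ray $\theta=\alpha$ is exactly $g(t,r)$ as stated, with the eigenfunction series $\sum_{n\ge 1} n\sin(n\pi(\alpha-\theta_0)/\alpha)\,I_{n\pi/\alpha}(rr_0/t)$ arising from separation of variables for the heat equation on the wedge under Dirichlet boundary conditions; the appearance of $\alpha-\theta_0$ rather than $\theta_0$ reflects that we select the far ray $\theta=\alpha$. Assembling the Girsanov weight and $g(t,r)$ and integrating over $r\in(0,\infty)$ and $t\in(0,\infty)$ gives the claimed expression. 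The main obstacle is the geometric bookkeeping of the reduction: verifying that $\sigma^{-1}$ carries the quadrant to a wedge of the stated angle $\alpha$ with polar starting data $(r_0,\theta_0)$, and checking that the Girsanov weight evaluated along the boundary ray meshes cleanly with the separated-variable density so that the exponents and Bessel orders line up. The two probabilistic ingredients---Girsanov's theorem and the Iyengar density for the wedge---are standard and may be cited, so the real content is in the change of variables and the identification of the correct exit boundary.
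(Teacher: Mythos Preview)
Your proposal is correct and follows essentially the same route the paper intends: the paper does not supply a proof for this corollary at all, but simply states it as a direct consequence of the results in Iyengar~\cite{iyengar1985hitting} and Metzler~\cite{metzler2010first}. Your three-step reduction (whitening by $\sigma^{-1}$, Girsanov removal of the drift, then invoking the Iyengar--Metzler wedge density) is precisely the standard derivation behind those cited results, so you have effectively unpacked what the paper leaves implicit.
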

Similarly, when $\overrightarrow{\mu} > 0$, with positive probability, we might have $\iota^b=\infty$ and $\iota^a=\infty$. Therefore $\mathbb{P}_{\overrightarrow{\mu}}(\iota^b < \iota^a)$ we compute here implicitly refers
to $\mathbb{P}_{\overrightarrow{\mu}}(\iota^b < \iota^a, \iota^b<\infty)$ in that case.
%\begin{proof}
%When $\overrightarrow{\mu} = \overrightarrow{0}$, according to \cite{metzler2010first}, we have
%\begin{align}
%\mathbb{P}(\iota \in dt, R(\iota) \in dr, \Theta(\iota)=\alpha) = \frac{\pi}{\alpha^2tr}e^{-(r^2+r_0^2)/2t}\sum\limits_{n=1}^{\infty} n\sin\big(\frac{n\pi(\pi-\theta_0)}{\alpha}\big)I_{n\pi/\alpha}\big(\frac{r r_0}{t}\big) drdt
%\end{align}
%Then for any $\overrightarrow{\mu}$,
%\begin{align}
%\mathbb{P}_{\overrightarrow{\mu}}(\iota \in dt, R(\iota) \in dr, \Theta(\iota)=\alpha) &= \exp(\mu_1(r\cos\alpha-a_1)+\mu_2(r\sin\alpha-a_2)-|\overrightarrow{\mu}|^2t/2)\mathbb{P}(\iota \in dt, R(\iota) \in dr, \Theta(\iota)=\alpha)
%\end{align}
%Then the probability of price decrease could be computed by
%\begin{align}
%\mathbb{P}_{\overrightarrow{\mu}}(\Theta(\iota)=\alpha) = \int_0^\infty \int_0^\infty \exp(\mu_1(r\cos\alpha-a_1)+\mu_2(r\sin\alpha-a_2)-|\overrightarrow{\mu}|^2t/2)\mathbb{P}(\iota \in dt, R(\iota) \in dr, \Theta(\iota)=\alpha) dr dt
%\end{align}
%\end{proof}

Note that both expressions for $\iota$ and the probability of price decrease are semi-analytic. However, in the special case of
$\overrightarrow{\mu} = \overrightarrow{0}$,
i.e., when $\bar{V}_1= \bar{V}_2+\bar{V}_3$ and $\bar{V}_4= \bar{V}_5+\bar{V}_6$, they become analytic.

\begin{corollary}
Given Theorem \ref{DiffThm} and the initial state $(q^b, q^a)$, if $\overrightarrow{\mu} = \overrightarrow{0}$, then
\begin{equation*}
\mathbb{P}(\iota> t) = \frac{2r_0}{\sqrt{2\pi t}}e^{-r_0^2/4t} \sum_{n:\text{ odd}}
\frac{1}{n} \sin\frac{n\pi \theta_0}{\alpha} \left( I_{(\nu_n-1)/2}(r_0^2/4t)+I_{(\nu_n+1)/2}(r_0^2/4t) \right).
\end{equation*}
\end{corollary}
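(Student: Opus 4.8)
The plan is to obtain this as the zero-drift specialization of the preceding corollary of Zhou~\cite{zhou2001analysis}, rather than re-deriving the wedge hitting-time density from scratch. When $\overrightarrow{\mu}=\overrightarrow{0}$, i.e.\ $\mu_1=\mu_2=0$, every one of the auxiliary constants $l_1,\dots,l_5$ in that corollary vanishes. Consequently the global prefactor $e^{l_1 q^b+l_2 q^a+l_3 t}$ equals $1$, and inside $g_n(\theta)$ the factor $e^{l_4 r\sin(\theta-\alpha)-l_5 r\cos(\theta-\alpha)}$ also collapses to $1$, so that $g_n(\theta)$ loses all dependence on $\theta$ and reduces to the purely radial integral $g_n=\int_0^\infty r\,e^{-r^2/2t}I_{\nu_n}(rr_0/t)\,dr$.

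Since $g_n$ is now independent of $\theta$, it factors out of the angular integral, leaving only the elementary evaluation $\int_0^\alpha \sin(n\pi\theta/\alpha)\,d\theta=\frac{\alpha}{n\pi}\bigl(1-(-1)^n\bigr)$. This is precisely where the restriction to odd $n$ enters: the integral vanishes for even $n$ and equals $2\alpha/(n\pi)$ for odd $n$. Substituting this back, the whole expression collapses to $\frac{4}{\pi t}\,e^{-r_0^2/2t}\sum_{n\ \mathrm{odd}}\frac{1}{n}\sin(n\pi\theta_0/\alpha)\,g_n$.

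The one genuinely nontrivial step, and the main obstacle, is the evaluation of the radial Bessel integral $g_n$. I would invoke the classical Weber-type identity $\int_0^\infty r\,e^{-pr^2}I_\nu(cr)\,dr=\frac{\sqrt{\pi}\,c}{8p^{3/2}}e^{c^2/8p}\bigl[I_{(\nu-1)/2}(c^2/8p)+I_{(\nu+1)/2}(c^2/8p)\bigr]$, which can be verified directly by expanding $I_\nu$ in its power series, integrating termwise against the Gaussian, and recognizing the resulting confluent hypergeometric series. With $p=1/2t$ and $c=r_0/t$ one computes $c^2/8p=r_0^2/4t$ and $\frac{\sqrt{\pi}\,c}{8p^{3/2}}=\frac{r_0\sqrt{2\pi t}}{4}$, giving $g_n=\frac{r_0\sqrt{2\pi t}}{4}e^{r_0^2/4t}\bigl[I_{(\nu_n-1)/2}(r_0^2/4t)+I_{(\nu_n+1)/2}(r_0^2/4t)\bigr]$.

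Finally I would collect the constants: the two exponentials combine as $e^{-r_0^2/2t}e^{r_0^2/4t}=e^{-r_0^2/4t}$, while the numerical prefactor $\frac{4}{\pi t}\cdot\frac{r_0\sqrt{2\pi t}}{4}$ simplifies to $\frac{2r_0}{\sqrt{2\pi t}}$, yielding exactly the stated expression. Everything beyond the radial integral is routine bookkeeping, so the proof is essentially a careful specialization together with one table integral.
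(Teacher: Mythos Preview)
Your proposal is correct. The paper does not actually supply a proof for this corollary; it simply states the result as the analytic specialization of the preceding Zhou formula when $\overrightarrow{\mu}=\overrightarrow{0}$, so your derivation fills in exactly the details the paper omits. Your specialization of the $l_i$ to zero, the evaluation of the angular integral producing the odd-$n$ restriction, the use of the tabulated Weber--Schafheitlin integral $\int_0^\infty r\,e^{-pr^2}I_\nu(cr)\,dr$, and the final bookkeeping of constants and exponentials all check out line by line.
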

\begin{corollary}
Given Theorem \ref{DiffThm} and the initial state $(q^b, q^a)$,
if $\overrightarrow{\mu} = \overrightarrow{0}$, then
the probability that the price decreases is
%$\mathbb{P}(\iota^b<\iota^a) =
$\frac{\theta_0}{\alpha}.$
\end{corollary}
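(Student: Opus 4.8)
The plan is to recognize this as the classical computation of the exit edge for a driftless planar Brownian motion from a wedge, obtaining $\theta_0/\alpha$ as a harmonic measure. When $\overrightarrow{\mu}=\overrightarrow{0}$, Theorem~\ref{DiffThm} tells us that the diffusion limit $(\mathbf{R}^b,\mathbf{R}^a)$ is a driftless two-dimensional Brownian motion with covariance $\sigma\sigma^T$ started at $(q^b,q^a)$, and the price decreases exactly on the event $\{\iota^b<\iota^a\}$ that the bid queue is depleted before the ask queue. Applying the linear map $\sigma^{-1}$, the process $\widetilde{\mathbf{B}}:=\sigma^{-1}(\mathbf{R}^b,\mathbf{R}^a)^T$ is a standard (identity-covariance, zero-drift) planar Brownian motion. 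First I would verify, using the explicit triangular $\sigma$ in the parametrization above, that $\sigma^{-1}$ carries the open first quadrant $\{R^b>0,R^a>0\}$ bijectively onto the open wedge $\mathcal{W}=\{(r,\theta):r>0,\ 0<\theta<\alpha\}$, that the bounding rays $\{R^a=0\}$ and $\{R^b=0\}$ map to the edges $\theta=0$ and $\theta=\alpha$ respectively, and that the image of $(q^b,q^a)$ has angular coordinate exactly $\theta_0$ (its radial coordinate $r_0$ being irrelevant to the answer). As a sanity check, in the case $\rho=0$ one has $\sigma^{-1}=\mathrm{diag}(1/\sigma_1,1/\sigma_2)$, the wedge is the first quadrant with $\alpha=\pi/2$, and the angle of $(q^b/\sigma_1,q^a/\sigma_2)$ from the positive axis is $\tan^{-1}\!\big((q^a/\sigma_2)/(q^b/\sigma_1)\big)$, which agrees with the stated $\theta_0$.

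Second, on this wedge I would use that a standard planar Brownian motion has generator $\tfrac12\Delta$, so the probability of exiting through a given edge equals the harmonic measure of that edge. The bounded harmonic function on $\mathcal{W}$ with boundary value $1$ on the bid-depletion edge $\theta=\alpha$ and $0$ on the ask-depletion edge $\theta=0$ is the normalized angle $u(r,\theta)=\theta/\alpha$, since the argument function is harmonic and $u$ is linear in $\theta$. Because $\alpha\in(0,\pi)$ for every $\rho\in(-1,1)$ (as one checks directly from the three branches defining $\alpha$), the wedge is contained in a half-plane, so the exit time $\iota=\iota^a\wedge\iota^b$ is almost surely finite and $\widetilde{\mathbf{B}}$ almost surely leaves through one of the two edges rather than at the vertex. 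Applying the optional stopping theorem to the bounded martingale $u(\widetilde{\mathbf{B}}_{t\wedge\iota})$ then yields $\mathbb{P}(\iota^b<\iota^a)=u(r_0,\theta_0)=\theta_0/\alpha$, which is the claim; the monotone behavior (if $q^b$ is small relative to $q^a$, then $\theta_0$ is near $\alpha$ and price decrease is nearly certain) confirms the orientation.

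The main obstacle I expect is the bookkeeping in the first step: checking that the specific $\sigma$ produces a wedge of opening angle exactly $\alpha$ and initial angular coordinate exactly $\theta_0$, including matching the orientation so that the bid-depletion edge sits at $\theta=\alpha$ (otherwise one would obtain $1-\theta_0/\alpha$). This is a direct but sign-sensitive computation involving $\sigma^{-1}$ and the two $\tan^{-1}$ branches in the definitions of $\alpha$ and $\theta_0$. A secondary, purely technical point is justifying optional stopping at $\iota$ despite $\nabla u$ blowing up near the vertex; this is harmless because $u$ itself is bounded and the vertex is almost surely not the exit point.

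As an alternative consistency check I could instead specialize the preceding corollary by setting $\overrightarrow{\kappa}=\sigma^{-1}\overrightarrow{\mu}=\overrightarrow{0}$: the exponential weight collapses to $1$, and the statement reduces to showing $\int_0^\infty\!\int_0^\infty g(t,r)\,dr\,dt=\theta_0/\alpha$. Evaluating the inner integral of $r\,e^{-r^2/2t}I_{n\pi/\alpha}(rr_0/t)$ and summing the resulting Fourier sine series recovers $\theta_0/\alpha$, but this route is considerably more laborious than the harmonic-measure argument, so I would relegate it to a remark.
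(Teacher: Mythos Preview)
Your argument is correct and is the classical probabilistic derivation of this result: after the linear change of variables $\sigma^{-1}$, the driftless correlated Brownian motion becomes standard planar Brownian motion in a wedge of opening $\alpha$, and $\theta/\alpha$ is the bounded harmonic function with the right boundary data, so optional stopping gives the answer immediately. The paper takes a different, purely analytic route: it starts from a one-dimensional integral formula for the density of the exit location on the bid-depletion edge (a known expression for driftless Brownian motion in a wedge, from the Iyengar/Metzler references cited just above), substitutes $x=(r/r_0)^{\pi/\alpha}$, and recognizes the resulting integral as $\frac{1}{\pi}\int_0^\infty \frac{\sin(\pi\theta_0/\alpha)}{\sin^2(\pi\theta_0/\alpha)+(x+\cos(\pi\theta_0/\alpha))^2}\,dx$, which is an arctangent integral equal to $\theta_0/\alpha$.

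Your approach is more conceptual and avoids the special-function density entirely; it also makes the answer's form (a ratio of angles) transparent. The paper's computation, on the other hand, requires no geometric setup beyond invoking the hitting-density formula, and sidesteps the orientation bookkeeping you flag as the main obstacle. Interestingly, the alternative you relegate to a remark---setting $\overrightarrow{\kappa}=0$ in the preceding corollary and integrating $g(t,r)$---is closer in spirit to what the paper actually does, though the paper uses a formula with the $t$-integration already performed rather than the double integral you describe.
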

\begin{proof}
%Thanks to Metzler~\cite{metzler2010first},
\begin{align*}
\mathbb{P}(\iota^b<\iota^a) &= \int_{0}^{\infty}
\frac{(r/r_0)^{(\pi/\alpha)-1}\sin(\pi \theta_0/\alpha)}{\sin^2(\pi\theta_0/\alpha)+((r/r_0)^{\pi/\alpha}+\cos(\pi\theta_0/\alpha))^2}\frac{dr}{\alpha r_0}\\
&=\int_{0}^{\infty}  \frac{\sin(\pi \theta_0/\alpha)}{\sin^2(\pi\theta_0/\alpha)+((r/r_0)^{\pi/\alpha}+\cos(\pi\theta_0/\alpha))^2}\frac{d (r/r_0)^{\pi/\alpha}}{\pi}\\
&=\int_{0}^{\infty} \frac{\sin(\pi \theta_0/\alpha)}{\sin^2(\pi\theta_0/\alpha)+(x+\cos(\pi\theta_0/\alpha))^2} \frac{dx}{\pi}
%&=\int_{0}^{\infty} \frac{\sin(\pi \theta_0/\alpha)}{x^2+2\cos(\pi\theta_0/\alpha)x+1} %\frac{dx}{\pi}
=\frac{\theta_0}{\alpha}.
\qedhere
\end{align*}
\end{proof}

\subsection{Fluctuations of execution and hitting times}
In addition, we can  study the fluctuations of the
execution time $\tau_{n}^{z}$.
\begin{prop}
Given Theorem \ref{Thm fluctuation}, for any $x$ (say, $x<0$),
\begin{equation*}
\lim_{n\rightarrow\infty}\mathbb{P}(\sqrt{n}(\tau_{n}^{z}-\tau^{z})\geq x)
=\mathbb{P}(Y(\tau^{z})>ax)=1-\Phi\left(\frac{ax}{\sigma_{Y}(\tau^{z})}\right),
\end{equation*}
where $\Phi(x):=\int_{-\infty}^{x}\frac{e^{-y^{2}/2}}{\sqrt{2\pi}}dy$ is the cumulative probability
distribution function of a standard Gaussian random variable, with $\sigma_Y$  the variance of the Gaussian process $Y(t)$.
\end{prop}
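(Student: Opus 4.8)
The plan is to treat this as a first-passage (delta-method) statement: $\tau_n^z$ is the first time the process $\mathbf{Z}_n$ reaches zero, so its fluctuation around $\tau^z$ is driven by the fluctuation of $\mathbf{Z}_n$ near $\tau^z$. The crucial structural observation is that from the dynamics in Eqn.~\eqref{scaledq} we have $dZ_n(t) = -dC_n^2(t) - \frac{Z_n(t-)}{Q_n^b(t-)}\,dC_n^3(t) \le 0$ on $[0,\tau_n]$, since $\mathbf{C}_n^2,\mathbf{C}_n^3$ are non-decreasing and $Z_n,Q_n^b>0$ before $\tau_n$. Thus $\mathbf{Z}_n$ is non-increasing, and for any level $s$ we get the exact identity $\{\tau_n^z>s\}=\{Z_n(s)>0\}$. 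This lets me convert the first-passage event into a pointwise event and sidesteps any general continuity theorem for first-passage maps.

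First I would write the target event, for $x<0$, as $\{\sqrt{n}(\tau_n^z-\tau^z)\ge x\}=\{\tau_n^z\ge \tau^z + x/\sqrt{n}\}$ and, using monotonicity, identify it (up to the left-limit distinction that is asymptotically negligible) with $\{Z_n(\tau^z + x/\sqrt{n})>0\}$. Next I substitute $Z_n = Z + \tfrac{1}{\sqrt{n}}Y_n$ with $Y_n := \sqrt{n}(Z_n-Z)\Rightarrow Y$ from Theorem~\ref{Thm fluctuation}. Since $Z(\tau^z)=0$ and the fluid ODE \eqref{Zd} gives $Z'(\tau^z)=-\lambda\bar{V}^2=-a$ (the $\bar{V}^3$-term vanishes at $Z=0$; recall $a$ from \eqref{abc}), a first-order Taylor expansion yields $Z(\tau^z+x/\sqrt{n}) = -ax/\sqrt{n} + o(1/\sqrt{n})$. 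Multiplying through by $\sqrt{n}$, the event becomes $\{Y_n(\tau^z+x/\sqrt{n}) > ax + o(1)\}$.

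It then remains to pass to the limit in $Y_n(\tau^z+x/\sqrt{n})$. Because $x<0$, for large $n$ the evaluation point $\tau^z+x/\sqrt{n}$ lies strictly inside $[0,\tau)$, which is exactly the domain on which Theorem~\ref{Thm fluctuation} gives $\mathbf{Y}_n\Rightarrow\mathbf{Y}$ with continuous limit sample paths; combining weak convergence with the convergence of the deterministic times $\tau^z+x/\sqrt{n}\to\tau^z$ (asymptotic equicontinuity, via a composition/random-time-change argument) gives $Y_n(\tau^z+x/\sqrt{n})\Rightarrow Y(\tau^z)$. Hence $\lim_n \mathbb{P}(\sqrt{n}(\tau_n^z-\tau^z)\ge x)=\mathbb{P}(Y(\tau^z)\ge ax)=\mathbb{P}(Y(\tau^z)>ax)$, the last equality because $Y(\tau^z)$ has a continuous law. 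Finally, since $\mathbf{Y}$ solves the linear SDE \eqref{YEqn} with $Y(0)=0$ driven by the Gaussian processes $\mathbf{\Psi}^j$, the value $Y(\tau^z)$ is a centered Gaussian with variance $\sigma_Y^2(\tau^z)$, so $\mathbb{P}(Y(\tau^z)>ax)=1-\Phi\bigl(ax/\sigma_Y(\tau^z)\bigr)$.

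I expect the main obstacle to be this last limit, namely justifying $Y_n(\tau^z+x/\sqrt{n})\Rightarrow Y(\tau^z)$: evaluating a sequence of $D[0,\tau)$-converging processes at a deterministic time tending to $\tau^z$ requires the continuity of the limit sample paths together with asymptotic equicontinuity, and one must also verify the Taylor remainder is $o(1/\sqrt{n})$ uniformly. This is manageable here because $\tau^z<\tau^b$ (established in Theorem~\ref{FluidThm}, cf.~\eqref{tauz}) keeps $Q^b$ bounded away from zero on a neighborhood of $\tau^z$, so $Z$ is $C^1$ there with $Z'(\tau^z)=-a\neq 0$; i.e.\ the fluid trajectory crosses zero transversally, which is precisely the condition that makes the first-passage fluctuation Gaussian.
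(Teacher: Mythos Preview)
Your proposal is correct and follows essentially the same route as the paper: convert the first-passage event $\{\sqrt{n}(\tau_n^z-\tau^z)\ge x\}$ to the pointwise event $\{Z_n(\tau^z+x/\sqrt{n})>0\}$, center by $Z$, expand $\sqrt{n}\,Z(\tau^z+x/\sqrt{n})\to xZ'(\tau^z)$, and invoke $\sqrt{n}(Z_n-Z)\Rightarrow Y$. Your derivation of $Z'(\tau^z)=-a$ directly from the ODE \eqref{Zd} (the $\bar{V}^3$-term vanishes because $Z(\tau^z)=0$) is in fact cleaner than the paper's, which instead differentiates the explicit solution \eqref{Z} and plugs in the formula for $\tau^z$ from \eqref{tauz} to reach the same conclusion.
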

\begin{proof}
For any $x<0$,
\begin{align*}
&\mathbb{P}(\sqrt{n}(\tau_{n}^{z}-\tau^{z})\geq x)
\\
&=\mathbb{P}\left(Z_{n}\left(\tau^{z}+\frac{x}{\sqrt{n}}\right)>0\right)
\nonumber
\\
&=\mathbb{P}\left(\sqrt{n}\left(Z_{n}\left(\tau^{z}+\frac{x}{\sqrt{n}}\right)
-Z\left(\tau^{z}+\frac{x}{\sqrt{n}}\right)\right)>-\sqrt{n}Z\left(\tau^{z}+\frac{x}{\sqrt{n}}\right)\right).
\nonumber
\end{align*}
Note that
\begin{equation*}
\lim_{n\rightarrow\infty}\sqrt{n}Z\left(\tau^{z}+\frac{x}{\sqrt{n}}\right)
=xZ'(\tau^{z}),
\end{equation*}
and for any $t>0$, $c\neq -1$, Eqn. \eqref{Z} and
Eqn.~\eqref{tauz} lead to
%\begin{equation}
%Z'(t)=-\frac{ac}{1+c}-\left(z+\frac{ab}{1+c}\right)b^{\frac{1}{c}}(b+ct)^{-\frac{1+c}{c}},
%\end{equation}
%and $\tau^{z}=[\frac{(1+c)z}{a}+b]^{\frac{c}{c+1}}b^{\frac{1}{c+1}}c^{-1}-\frac{b}{c}$. %Therefore,
\begin{equation*}
Z'(\tau^{z})=-\frac{ac}{1+c}-\left(z+\frac{ab}{1+c}\right)b^{\frac{1}{c}}
\left(\left(\frac{(1+c)z}{a}+b\right)^{\frac{c}{c+1}}b^{\frac{1}{c+1}}\right)^{-\frac{1+c}{c}}=-a.
\end{equation*}
Similarly, when $c=-1$, we have
$Z(t)=(a\log(b-t)+\frac{z}{b}-a\log b)(b-t)$. Thus
$Z'(t)=-a-\left(a\log(b-t)+\frac{z}{b}-a\log b\right)$,
and
$Z'(\tau^{z})=-a-\left(a\log(be^{-\frac{z}{ab}})+\frac{z}{b}-a\log b\right)=-a$.
Finally, recall that $\sqrt{n}(Z_{n}(t)-Z(t))\rightarrow Y(t)$ on $(D[0,\tau^{z}),J_{1})$ as $n\rightarrow\infty$, hence the first equation.

The second equation follows from $Y(t)$ being  a
Gaussian process with zero mean and variance $\sigma_{Y}^{2}$, the latter of which can be computed explicitly, albeit in a messy form as in
Appendix~\ref{AppendixC}.
\end{proof}

%\begin{corollary}
%Given Theorem \ref{Thm fluctuation},  for any $x$ (say $x>0$),
%\begin{equation*}
%\lim_{n\rightarrow\infty}\mathbb{P}(\sqrt{n}(\tau_{n}^{z}-\tau^{z})\geq x)
%=1-\Phi\left(\frac{ax}{\sigma_{Y}(\tau^{z})}\right),
%\end{equation*}
%where $\Phi(x):=\int_{-\infty}^{x}\frac{e^{-y^{2}/2}}{\sqrt{2\pi}}dy$ is the cumulative probability
%distribution function of a standard Gaussian random variable.
%\end{corollary}

\begin{proposition}
Given Theorem \ref{DiffThm}, with $v^b, v^a>0$, for any $x$ (say $x<0$),

(i) \begin{equation}\label{taubfluct}
\lim_{n\rightarrow\infty}\mathbb{P}(\sqrt{n}(\tau_{n}^{b}-\tau^{b})\geq x)
=1-\Phi\left(\sqrt{\frac{q^{b}\lambda v^b}{\psi_{11}+\psi_{22}+\psi_{33}-2\psi_{12}-2\psi_{13}+2\psi_{23}}}x\right),
\end{equation}
where $\Phi(x):=\frac{1}{\sqrt{2\pi}}\int_{-\infty}^{x}e^{-y^{2}/2}dy$ is the cumulative probability distribution function of a standard Gaussian random
variable, and
\begin{equation}
\label{psiij}
\psi_{ij}:=\sum_{k=1}^{6}\Sigma_{ik}\Sigma_{jk}\lambda
+\bar{V}^{i}\bar{V}^{j}v_{d}^{2}\lambda^{3},\qquad 1\leq i,j\leq 6.
\end{equation}

(ii)
\begin{equation}\label{tauafluct}
\lim_{n\rightarrow\infty}\mathbb{P}(\sqrt{n}(\tau_{n}^{a}-\tau^{a})\geq x)
=1-\Phi\left(\sqrt{\frac{q^{a}\lambda v^a}{\psi_{44}+\psi_{55}+\psi_{66}-2\psi_{45}-2\psi_{46}+2\psi_{56}}}x\right).
\end{equation}
\end{proposition}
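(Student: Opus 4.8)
The plan is to mirror the argument used in the preceding proposition for the execution time $\tau_n^z$, now applied to the queue-depletion times $\tau_n^b$ and $\tau_n^a$, the only difference being that the relevant fluid curve is the linear function $Q^b(t)=q^b-\lambda v^b t$ (resp. $Q^a$) rather than the ODE solution $Z(t)$. I will only treat part (i); part (ii) is verbatim after replacing the index set $\{1,2,3\}$ by $\{4,5,6\}$ and $(q^b,v^b)$ by $(q^a,v^a)$. Throughout I assume $v^b>0$, which guarantees $\tau^b=q^b/(\lambda v^b)<\infty$ and that the fluid path crosses zero transversally with slope $-\lambda v^b<0$.

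First I would rewrite the event in terms of the process. For $x<0$ the point $\tau^b+x/\sqrt n$ lies strictly to the left of $\tau^b$, and the transversal crossing means that, up to an event of asymptotically vanishing probability, $\{\sqrt n(\tau_n^b-\tau^b)\geq x\}=\{\tau_n^b\geq \tau^b+x/\sqrt n\}=\{Q_n^b(\tau^b+x/\sqrt n)>0\}$. Centering $Q_n^b$ by its fluid limit then gives
\[
\mathbb{P}\big(\sqrt n(\tau_n^b-\tau^b)\geq x\big)=\mathbb{P}\Big(\sqrt n\big(Q_n^b-Q^b\big)(\tau^b+\tfrac{x}{\sqrt n})>-\sqrt n\,Q^b(\tau^b+\tfrac{x}{\sqrt n})\Big).
\]

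Next I would pass to the limit on each side. On the deterministic side, linearity of $Q^b$ together with $Q^b(\tau^b)=0$ gives $-\sqrt n\,Q^b(\tau^b+x/\sqrt n)=\lambda v^b x$ exactly. On the stochastic side, since $\sqrt n(Q_n^b-Q^b)=\sqrt n\,(Q_n^b(0)-q^b)+\Psi_n^1-\Psi_n^2-\Psi_n^3$, the first bullet of Theorem \ref{DiffThm} (equivalently Theorem \ref{Thm fluctuation}) yields $\sqrt n(Q_n^b-Q^b)\Rightarrow \Psi^1-\Psi^2-\Psi^3$ in $(D[0,T],J_1)$ for any $T>\tau^b$. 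As the limit is a continuous Gaussian process and $\tau^b+x/\sqrt n\to\tau^b$, evaluation at the shifted argument is continuous in the limit, so a converging-together argument gives $\sqrt n(Q_n^b-Q^b)(\tau^b+x/\sqrt n)\Rightarrow(\Psi^1-\Psi^2-\Psi^3)(\tau^b)$. Hence
\[
\lim_{n\to\infty}\mathbb{P}\big(\sqrt n(\tau_n^b-\tau^b)\geq x\big)=\mathbb{P}\big((\Psi^1-\Psi^2-\Psi^3)(\tau^b)>\lambda v^b x\big).
\]
By Theorem \ref{DiffThm} the right-hand random variable is centered Gaussian with variance $\tau^b$ times the quadratic form of the covariance matrix $(\psi_{ij})$ of Eqn.~\eqref{psiij} evaluated along $(1,-1,-1)$, i.e. $\tau^b(\psi_{11}+\psi_{22}+\psi_{33}-2\psi_{12}-2\psi_{13}+2\psi_{23})$. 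Expressing the tail through the standard normal CDF $\Phi$ and substituting $\tau^b=q^b/(\lambda v^b)$ from Eqn.~\eqref{taub} produces the stated coefficient of $x$, and the computation for $\tau_n^a$ is identical with $(1,-1,-1)$ acting on coordinates $(4,5,6)$.

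The main obstacle is the first step: justifying the replacement of the first-passage event $\{\tau_n^b\geq\tau^b+x/\sqrt n\}$ by the level event $\{Q_n^b(\tau^b+x/\sqrt n)>0\}$ and the evaluation of the weakly convergent process at the deterministic perturbation $\tau^b+x/\sqrt n$. This rests on the transversal-crossing property supplied by $v^b>0$, which makes the probability that $Q_n^b$ returns above zero after $\tau_n^b$ asymptotically negligible (the same first-passage continuity input, \cite[Theorem 13.6.5]{whitt2002stochastic}, already used in Theorem \ref{DiffThm}), combined with continuity of the limit at $\tau^b$. Everything downstream is the routine Gaussian evaluation above, exactly parallel to the preceding proposition on $\tau_n^z$.
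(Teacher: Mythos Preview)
Your approach is essentially identical to the paper's: both rewrite the hitting-time event as a level event $\{Q_n^b(\tau^b+x/\sqrt n)>0\}$, center by the fluid limit, invoke the functional CLT $\sqrt n(Q_n^b-Q^b)\Rightarrow\Psi^1-\Psi^2-\Psi^3$ from Theorem~\ref{Thm fluctuation}, and finish with the Gaussian tail at $t=\tau^b$. The only visible difference is that the paper records the threshold as $-(Q^b)'(\tau^b)\,x$ (and then writes $(Q^b)'(\tau^b)=-q^b$), whereas you compute it directly as $\lambda v^b x$; the argument is otherwise the same.
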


\begin{proof}
Similar to the proof of the fluctuation of the execution time $\tau_{n}^{z}$, we can show that, for any $x<0$,
\begin{equation*}
\lim_{n\rightarrow\infty}\mathbb{P}(\sqrt{n}(\tau_{n}^{z}-\tau^{z})\geq x)
=\mathbb{P}((\mathbf{\Psi}^{1}-\mathbf{\Psi}^{2}-\mathbf{\Psi}^{3})(\tau^{b})>-(Q^{b})'(\tau^{b})x),
\end{equation*}
From the expression of $Q^b, \tau^b$ in Eqns. (\ref{Qb}), (\ref{taub}), and  (\ref{Eqn_DiffThm}),
%\begin{equation}
%Q^{b}(t)=q^{b}+\lambda(\bar{V}^{1}-\bar{V}^{2}-\bar{V}^{3})t,
%\qquad
%\tau^{b}=\frac{-q^{b}}{\lambda(\bar{V}^{1}-\bar{V}^{2}-\bar{V}^{3})},
%\end{equation}
%and
%\begin{equation}
%\overrightarrow{\mathbf{\Psi}}=\Sigma\overrightarrow{\mathbf{W}}\circ\lambda\mathbf{e}
%-\overrightarrow{\bar{V}}v_{d}\lambda\mathbf{W}_{1}\circ\lambda\mathbf{e}.
%\end{equation}
 it is clear that $(Q^{b})'(\tau^{b})=-q^{b}$ and the mean of $(\mathbf{\Psi}^{1}-\mathbf{\Psi}^{2}-\mathbf{\Psi}^{3})(t)$ is zero
and the variance is
\begin{equation*}
(\psi_{11}+\psi_{22}+\psi_{33}-2\psi_{12}-2\psi_{13}+2\psi_{23})t.
\end{equation*}
%where %$\psi_{ij}=\sum_{k=1}^{6}\Sigma_{ik}\Sigma_{jk}\lambda+\bar{V}^{i}\bar{V}^{j}v_{d}^{2}\lambda^{3}$.
Therefore,
\begin{equation*}
\lim_{n\rightarrow\infty}\mathbb{P}(\sqrt{n}(\tau_{n}^{b}-\tau^{b})\geq x)
=1-\Phi\left(\sqrt{\frac{q^{b}\lambda v^b}{\psi_{11}+\psi_{22}+\psi_{33}-2\psi_{12}-2\psi_{13}+2\psi_{23}}}x\right).
\end{equation*}
Similarly, we can show that
Eqn.~\eqref{tauafluct} holds.
\end{proof}

%\subsection{Large deviation for the tails of the hitting time}

Finally, we have the large deviations for the tails of the hitting time.
Indeed, given Assumptions \ref{fluidd}, \ref{fluidv} and \ref{flui}, we have the fluid limit in Theorem \ref{FluidThm},
%\begin{align*}
%&{Q}^{b}(t)=q^{b}-\lambda v^b t\wedge\tau,
%\\
%&{Q}^{a}(t)=q^{a}-\lambda v^a t\wedge\tau,
%\end{align*}
we see that $\tau_{n}\rightarrow\tau:=\tau^{b}\wedge\tau^{a}$.
More generally, by replacing Assumptions \ref{fluidd} and \ref{fluidv} by the stronger
Assumption \ref{LDPAssump}, we can use the large deviations result to study the tail probabilities of the hitting time $\tau_{n}$
as $n$ goes to $\infty$. Note that for any $t>\tau$,
\begin{equation*}
\mathbb{P}(\tau_{n}\geq t)=\mathbb{P}\left(Q_{n}^{b}(s)>0,Q_{n}^{a}(s)>0,0\leq s< t\right)
=\mathbb{P}\left({Q}_{n}^{b}(s)>0,{Q}_{n}^{a}(s)>0,0\leq s< t\right).
\end{equation*}
And for any $t<\tau$,
\begin{align*}
\mathbb{P}(\tau_{n}\leq t)
&=\mathbb{P}\left(Q_{n}^{b}(s)\le 0\text{ or }Q_{n}^{a}(s)\le 0, \text{for some }0\leq s\leq t\right),
\\
&=\mathbb{P}\left({Q}_{n}^{b}(s)\le 0\text{ or }{Q}_{n}^{a}(s)\le 0, \text{for some }0\leq s\leq t\right).
\end{align*}

Now recall  the large deviation principle for $\mathbb{P}({Q}_{n}^{b}(\cdot)\in\cdot,{Q}_{n}^{a}(\cdot)\in\cdot)$,
i.e., Theorem \ref{LDPThm}, and recall that
%for any $t>\tau$ we have
%\begin{equation*}
%\lim_{n\rightarrow\infty}\frac{1}{n}\log\mathbb{P}(\tau_{n}\geq t)
%=-\inf_{\substack{f^{b}(s)\geq 0,
%\\
%f^{a}(s)\geq 0,
%\\
%\text{for any }0\leq s\leq t}}I(f^{b},f^{a})
%=-\inf_{\substack{f^{b}(s)\geq 0,
%\\
%f^{a}(s)\geq 0,
%\\
%\text{for any }0\leq s\leq t}}\inf_{\phi\in\mathcal{G}_{f}}\mathcal{I}(\phi).
%\end{equation*}
%Similarly, for any $t<\tau$,
%\begin{equation*}
%\lim_{n\rightarrow\infty}\frac{1}{n}\log\mathbb{P}(\tau_{n}\leq t)
%=-\inf_{\substack{f^{b}(s)\leq 0\text{ for some $0\leq s\leq t$}
%\\
%\text{or }f^{a}(s)\leq 0\text{ for some $0\leq s\leq t$}}}I(f^{b},f^{a})
%=-\inf_{\substack{f^{b}(s)\leq 0\text{ for some $0\leq s\leq t$}
%\\
%\text{or }f^{a}(s)\leq 0\text{ for some $0\leq s\leq t$}}}\inf_{\phi\in\mathcal{G}_{f}}\mathcal{I}(\phi).
%\end{equation*}
%Recall that $\mathcal{G}_{f}$ consists of the functions
%$\phi=(\phi^{j}(t),1\leq j\leq 6)\in\mathcal{AC}_{0}[0,\infty)$ and
$
f^{b}(t)=q^{b}+\phi^{1}(t)-\phi^{2}(t)-\phi^{3}(t),
f^{a}(t)=q^{a}+\phi^{4}(t)-\phi^{5}(t)-\phi^{6}(t).
$
Therefore, we have the following,
\begin{corollary}
Given Theorem \ref{LDPThm}, for any $t>\tau$,
\begin{equation*}
\lim_{n\rightarrow\infty}\frac{1}{n}\log\mathbb{P}(\tau_{n}\geq t)
=-\inf_{\substack{q^{b}+\phi^{1}(s)-\phi^{2}(s)-\phi^{3}(s)\geq 0,
\\
q^{a}+\phi^{4}(s)-\phi^{5}(s)-\phi^{6}(s)\geq 0,
\\
\text{for any }0\leq s\leq t
\\
\phi\in\mathcal{AC}_{0}[0,\infty)}}\mathcal{I}(\phi).
\end{equation*}
Similarly, for any $t<\tau$,
\begin{equation*}
\lim_{n\rightarrow\infty}\frac{1}{n}\log\mathbb{P}(\tau_{n}\leq t)
=-\inf_{\substack{q^{b}+\phi^{1}(s)-\phi^{2}(s)-\phi^{3}(s)\leq 0\text{ for some $0\leq s\leq t$}
\\
\text{or }q^{a}+\phi^{4}(s)-\phi^{5}(s)-\phi^{6}(s)\leq 0\text{ for some $0\leq s\leq t$}
\\
\phi\in\mathcal{AC}_{0}[0,\infty)}}\mathcal{I}(\phi).
\end{equation*}
\end{corollary}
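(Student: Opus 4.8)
The plan is to obtain both tail estimates as direct consequences of the sample-path large deviation principle for $(Q_n^b, Q_n^a)$ established in Theorem~\ref{LDPThm}. The first step is to rewrite the hitting-time events as events depending only on the trajectory $(Q_n^b(\cdot), Q_n^a(\cdot))$ in $L^\infty[0,\infty)$, using the two identities recorded just above the statement, namely
\[
\{\tau_n \ge t\} = \{Q_n^b(s) > 0,\ Q_n^a(s) > 0,\ 0 \le s < t\},
\]
\[
\{\tau_n \le t\} = \{Q_n^b(s) \le 0 \text{ or } Q_n^a(s) \le 0 \text{ for some } 0 \le s \le t\}.
\]
Once these are viewed as subsets of the path space, the large deviation upper and lower bounds of Theorem~\ref{LDPThm} apply, and unfolding $I(f^b,f^a) = \inf_{\phi \in \mathcal{G}_f}\mathcal{I}(\phi)$ together with the linear representation $f^b = q^b + \phi^1 - \phi^2 - \phi^3$, $f^a = q^a + \phi^4 - \phi^5 - \phi^6$ rewrites each infimum in exactly the form displayed in the corollary. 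The role of the hypothesis $t > \tau$ (resp.\ $t < \tau$) is to guarantee that the fluid limit $(\mathbf{Q}^b,\mathbf{Q}^a)$ itself violates the relevant constraint, so that these are genuine rare events whose minimizing trajectory carries strictly positive rate.

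First I would treat $\{\tau_n \ge t\}$ for $t > \tau$. The map $(f^b,f^a)\mapsto(\inf_{s\le t}f^b(s),\inf_{s\le t}f^a(s))$ is $1$-Lipschitz in the uniform topology, so the closed set $\overline{A}_t = \{(f^b,f^a): f^b(s)\ge 0,\ f^a(s)\ge 0 \text{ for all } 0\le s\le t\}$ contains $\{\tau_n\ge t\}$, and the LDP upper bound yields
\[
\limsup_{n\rightarrow\infty}\frac{1}{n}\log \mathbb{P}(\tau_n \ge t) \le -\inf_{(f^b,f^a)\in\overline{A}_t} I(f^b,f^a).
\]
For the matching lower bound I would use the open set $A_t^{\circ} = \{(f^b,f^a): f^b(s) > 0,\ f^a(s) > 0 \text{ for all } 0\le s\le t\}$, which is contained in $\{\tau_n \ge t\}$, so the LDP lower bound gives the reverse inequality with $\overline{A}_t$ replaced by $A_t^{\circ}$. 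Substituting the representation of $(f^b,f^a)$ in terms of $\phi\in\mathcal{AC}_0[0,\infty)$ and the definition of $\mathcal{I}$ then converts $\inf_{\overline{A}_t} I$ into the constrained infimum over $\phi$ displayed in the statement.

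The case $t < \tau$ for $\{\tau_n \le t\}$ is handled symmetrically. Here the event is a union of two one-sided hitting conditions; the upper bound is applied to the corresponding closed set obtained by replacing the strict inequalities with non-strict ones, the lower bound to its open counterpart, and the rate function is again rewritten through the $\phi$-representation of $(f^b,f^a)$. In both regimes the contraction principle is already packaged inside Theorem~\ref{LDPThm} and Lemma~\ref{LDPLemma}, so no new variational identity is needed beyond these translations.

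The main obstacle will be to match the upper-bound infimum, taken over the closure, with the lower-bound infimum, taken over the interior, i.e.\ to show that the relevant constraint sets are $I$-continuity sets so that the limit exists and equals the stated closed-set expression. This is exactly where $t\neq\tau$ is essential: because the fluid limit strictly violates the constraint when $t>\tau$ (resp.\ $t<\tau$), any near-optimal competitor $\phi$ can be perturbed into the open region at arbitrarily small additional rate cost, using the absolute continuity of the competitors in $\mathcal{AC}_0[0,\infty)$ together with the lower semicontinuity and convexity of $\Lambda_V$ and $\Lambda_N$ to control the approximation, and checking that the perturbation is compatible with the composition structure $f = g\circ h$ underlying $\mathcal{I}$. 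Carrying out this interior-approximation argument rigorously is the delicate point; the remaining steps reduce to bookkeeping with the bounds of Theorem~\ref{LDPThm}.
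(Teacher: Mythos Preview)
Your approach is essentially identical to the paper's. In fact, the paper does not provide a separate proof of this corollary at all: it simply rewrites the events $\{\tau_n\ge t\}$ and $\{\tau_n\le t\}$ as path events for $(Q_n^b,Q_n^a)$ in the paragraph preceding the statement, recalls the LDP of Theorem~\ref{LDPThm} together with the linear representation $f^b=q^b+\phi^1-\phi^2-\phi^3$, $f^a=q^a+\phi^4-\phi^5-\phi^6$, and then states the corollary as an immediate consequence.

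Your proposal goes beyond the paper in one respect: you correctly flag the need to match the upper bound (over the closure) with the lower bound (over the interior), i.e., to verify that the constraint sets are $I$-continuity sets so that the limit exists rather than merely the $\limsup$ and $\liminf$ bounds. The paper does not address this point; it simply writes $\lim$ in the statement. Your interior-approximation sketch is the natural way to close this gap, and the observation that $t\neq\tau$ is what makes the perturbation possible is exactly right. So you are being more careful than the paper here, not less.
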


\section{Extensions and discussions}
\label{Sec discussion}

\subsection{General assumptions for cancellation} In the previous
section, we have derived the fluid limit and fluctuations for the order positions
under the simple assumption that cancellation is uniform on the queue. This
assumption can be easily relaxed and the analysis can be modified
fairly easily.

For instance, one may assume (more realistically) that
the closer the order to the queue head, the less likely it is
cancelled. More generally, one may replace the term
$\frac{{Z}_n(t-)}{{Q}_n^b(t-)}$ in Eqn.~\eqref{scaledq} with
$\Upsilon\left(\frac{{Z}_n(t-)}{{Q}_n^b(t-)}\right)$ where $\Upsilon$
is a Lipschitz-continuous increasing function from $[0, 1]$ to $[0,
1]$ with $\Upsilon(0)=0$ and $\Upsilon(1)=1$. Now,
%\begin{assumption} \label{general_cancel}
%With order position $z(t)$ and best bid queue length $q^b(t)$,
%the proportion of cancellation before the particular order is determined by %$\Upsilon\left(\frac{z(t)}{q^b(t)}\right)$,
%where $\Upsilon$ is a Lipschitz continuous increasing function from $[0, 1]$ to $[0, 1]$ with %$\Upsilon(0)=0$ and $\Upsilon(1)=1$.
%\end{assumption}
the dynamics of the scaled processes are described as
\begin{align}\label{generalizedscaled}
d\left(\begin{aligned}
&{Q}_n^b(t)\\
&{Q}_n^a(t)\\
&{Z}_n(t)\end{aligned}\right)
&=\left(
\begin{array}{cccccc}
1 &-1 &-1 &0 &0 &0 \\
0 &0 &0 &1 &-1 &-1  \\
0 &-1 & -\Upsilon\left(\frac{{Z}_n(t-)}{{Q}_n^b(t-)}\right) &0 &0 &0
\end{array}
\right)\mathbb{I}_{{Q}_n^a(t-)>0, {Q}_n^b(t-)>0, {Z}_n(t-)>0}
 \cdot d\overrightarrow{C}_n(t).
%({Q}_n^b(0), {Q}_n^a(0), {Z}_n(0)) &= (Q_n^b(0), Q_n^a(0), Z_n(0)).
%\nonumber
\end{align}
Then the limit processes would follow
\begin{align}\label{generalizedlimit}
d\left(\begin{aligned}
&{Q}^b(t)\\
&{Q}^a(t)\\
&{Z}(t)\end{aligned}\right)
&=\left(
\begin{array}{cccccc}
1 &-1 &-1 &0 &0 &0 \\
0 &0 &0 &1 &-1 &-1  \\
0 &-1 & -\Upsilon\left(\frac{{Z}(t-)}{{Q}^b(t-)}\right) &0 &0 &0
\end{array}
\right)\mathbb{I}_{{Q}^a(t-)>0, {Q}^b(t-)>0, {Z}(t-)>0}
 \cdot d\overrightarrow{C}(t)
%({Q}_n^b(0), {Q}_n^a(0), {Z}_n(0)) &= (Q_n^b(0), Q_n^a(0), Z_n(0)).
%\nonumber
\end{align}
%The $\Upsilon$ function describes how the relative position $Z(t)/Q^b(t)$ determines the %cancellation rate. If the orders are more likely to cancel when they get closer to the queue %head, then $\Upsilon$ is a convex function, and vice versa.
\begin{theorem}\label{FluidThmGeneralized}
Given Assumptions \ref{fluidd}, \ref{fluidv}, and \ref{flui}, and
 the scaled processes $(\mathbf{Q}_n^b, \mathbf{Q}_n^a, \mathbf{Z}_n)$ defined by ~ Eqn.~\eqref{generalizedscaled}. If there exist constants $q^b$, $q^a$, and $z$ such that
\begin{align*}
(Q_n^b(0), Q_n^a(0), Z_n(0)) \Rightarrow (q^b, q^a, z),
\end{align*}
then for any $T>0$, %the scaled sequence of process $\tilde{\mathbf{Q}}_n^b$, $\tilde{\mathbf{Q}}_n^a$, and %$\tilde{\mathbf{Z}}_n$ defined in
Eqn.~\eqref{scaledII}
%converges to a limit process  $D^3[0,\infty)$ with $J_1$ topology. That is,
\begin{align*}
&({\mathbf{Q}}_n^b, {\mathbf{Q}}_n^a, {\mathbf{Z}}_n )  \Rightarrow ({\mathbf{Q}}^b, {\mathbf{Q}}^a, {\mathbf{Z}}) \qquad\text{in} \quad (D^3[0,T], J_1),
\end{align*}
where $(\mathbf{Q}^b, \mathbf{Q}^a, \mathbf{Z})$ is defined by Eqn.~\eqref{generalizedlimit} and
\begin{align}\label{initial}
(Q^b(0), Q^a(0), Z(0)) = (q^b, q^a, z).
\end{align}
\end{theorem}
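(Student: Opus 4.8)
The plan is to replay the proof of Theorem~\ref{FluidThm} almost verbatim, since only the functional $F$ changes. Setting $Y_n = \overrightarrow{\mathbf{C}}_n$, $X_n = (\mathbf{Q}_n^b, \mathbf{Q}_n^a, \mathbf{Z}_n)$, and
\begin{align*}
F_n(x, s-) = F(x, s-) = \left(\begin{array}{cccccc}
1 & -1 & -1 & 0 & 0 & 0 \\
0 & 0 & 0 & 1 & -1 & -1 \\
0 & -1 & -\Upsilon\left(\frac{x^3(s-)}{x^1(s-)}\right) & 0 & 0 & 0
\end{array}\right)\mathbb{I}_{x(s-) > 0},
\end{align*}
Eqn.~\eqref{generalizedscaled} becomes $X_n(t) = X_n(0) + \int_0^t F_n(X_n, s-)\,dY_n(s)$, which is exactly the form demanded by Theorem~\ref{kurtz}. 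The driving process $Y_n = \overrightarrow{\mathbf{C}}_n$ is identical to the one in Theorem~\ref{FluidThm}, so its martingale/finite-variation decomposition $Y_n = M_n + A_n$ and the verification of Condition~\ref{C1} (with $\tau_n^\alpha = \alpha+1$) carry over without change; they rest only on Assumptions~\ref{fluidd}, \ref{fluidv}, \ref{flui} and square-integrability, none of which mention cancellation. Thus the only places where $\Upsilon$ enters are the verification of Condition~\ref{c2} and the well-posedness of the limit equation.

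To keep $F$ globally defined and to control the argument of $\Upsilon$, I would first extend $\Upsilon$ to all of $\mathbb{R}$ by $\Upsilon \equiv 0$ on $(-\infty,0)$ and $\Upsilon \equiv 1$ on $(1,\infty)$; this preserves monotonicity and the Lipschitz constant. The crucial observation—the analog of Lemma~\ref{zlessthanqb} at the fluid level—is that the region $\{0 \le Z \le Q^b\}$ is invariant under \eqref{generalizedlimit}: combining $dQ^b/dt = -\lambda v^b = \lambda\bar V^1 - \lambda\bar V^2 - \lambda\bar V^3$ with $dZ/dt = -\lambda(\bar V^2 + \bar V^3\Upsilon(Z/Q^b))$ gives, on the boundary $Z = Q^b$,
\begin{equation*}
\frac{d}{dt}(Z - Q^b)\Big|_{Z = Q^b} = \lambda\bar V^3\big(1 - \Upsilon(1)\big) - \lambda\bar V^1 = -\lambda\bar V^1 \le 0,
\end{equation*}
so $Z/Q^b$ never exceeds $1$. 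Hence the value of $\Upsilon$ used by the limit process always lies in $[0,1]$, where $\Upsilon$ coincides with the original function, and the extension is harmless.

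Next I would check Condition~\ref{c2}. Taking $G_n(x\circ\mathbf{e},\mathbf{e}) = F_n(x) = F(x)$ as in Theorem~\ref{FluidThm}, part (i) is immediate because $G_n \equiv G$. For part (ii) the only non-affine entry is $-\Upsilon(x^3/x^1)$; since $\Upsilon$ is Lipschitz and $x \mapsto x^3/x^1$ is continuous on $\{x^1 > 0\}$, and since $Q^b(t) = q^b - \lambda v^b(t\wedge\tau)$ stays positive on $[0,\tau)$ and hence bounded away from zero on each compact subinterval (as $\tau \le \tau^b$), uniform convergence of paths on compacts forces the required uniform convergence of $G$. Finally, for the global solution and local uniqueness required by Theorem~\ref{kurtz}, note that $Q^b, Q^a$ are the explicit affine functions of \eqref{Qb}--\eqref{Qa} and, for $t < \tau$, $Z$ solves
\begin{equation*}
\frac{dZ(t)}{dt} = -\lambda\left(\bar V^2 + \bar V^3\,\Upsilon\left(\frac{Z(t)}{Q^b(t)}\right)\right), \qquad Z(0) = z;
\end{equation*}
with $Q^b$ bounded below on compact subintervals of $[0,\tau)$ and $\Upsilon$ Lipschitz, the right-hand side is Lipschitz in $Z$, so Picard--Lindel\"of yields a unique solution up to $\tau$. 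Applying Theorem~\ref{kurtz} then delivers the claimed convergence. I expect the main obstacle to be the boundary bookkeeping—establishing the invariance of $\{0\le Z\le Q^b\}$ rigorously and ensuring $Q^b$ stays bounded below so that the effective Lipschitz constant of $\Upsilon(Z/Q^b)$ does not blow up near $\tau$—rather than the limit machinery, which transfers directly from Theorem~\ref{FluidThm}.
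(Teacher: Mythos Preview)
Your proposal is correct and follows essentially the same route as the paper: extend $\Upsilon$ to $\mathbb{R}$, recycle the Condition~\ref{C1} verification from Theorem~\ref{FluidThm} unchanged, and invoke Picard--Lindel\"of on compact subintervals of $[0,\tau)$ where $Q^b$ is bounded below to obtain existence and uniqueness of the limit ODE. The only point the paper makes explicit that you leave implicit is that $\tau^z<\infty$ (since $dZ/dt\le -\lambda\bar V^2<0$), which ensures $\tau$ is finite and the truncation is meaningful; otherwise the arguments coincide.
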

\begin{proof}
First, let us extend the definition of $\Upsilon$ from $[0, 1]$ to $\mathbb{R}$ by
\begin{equation*}
\Upsilon(x) = \Upsilon(x)\mathbb{I}_{0 \le x \le 1} +\mathbb{I}_{1 <x}.
\end{equation*}
Then $\Upsilon$ is (still) Lipschitz-continuous and increasing on $\mathbb{R}$. That is, there exists $K>0$, such that for any $z_{1},z_{2}\in\mathbb{R}$,
$\left|\Upsilon(z_1) - \Upsilon(z_2)\right| \le K|z_1-z_2|$.
Next, define $\tau = \min\{\tau^b, \tau^a, \tau^z\}$ with
$\tau^b = \inf\{t \mid Q^b(t)\le 0\}$, $\tau^a = \inf\{t \mid Q^a(t)\le 0\}$,
and $\tau^z = \inf\{t \mid Z(t)\le 0\}$.
Similar to the argument for Lemma \ref{zlessthanqb},
$\Upsilon \in [0, 1]$ and $z, q^b>0$
imply that $Z_n(t) \le Q_n^b(t)$ and $Z(t) \le Q^b(t)$ for any time before hitting zero. Thus  $\tau^z \le \tau^b$.
Now the remaining part of the proof is similar to that of Theorem \ref{FluidThm} except for
the global existence and local uniqueness of the solution to Eqn.~\eqref{generalizedlimit}, with
\begin{align}
%&{Q}^{b}(t)=q^{b}-\lambda v^b(t\wedge \tau),\\
%&{Q}^{a}(t)=q^{a}-\lambda v^a(t\wedge \tau),\\
&\frac{d{Z}(t)}{dt}=-\lambda\left(\bar{V}^{2}+\bar{V}^{3}\Upsilon\left(\frac{{Z}(t-)}{{Q}^{b}(t-)}\right)\right)\mathbb{I}_{t \le \tau}. \label{generalizedzt}
\end{align}
 Denote the right hand side of Eqn.~\eqref{generalizedzt} by $\vartheta(Z, t)$, and define $\vartheta(Z, q^b/(\lambda v^b)) = 1$.
Let $\{T_i\}_{i \ge 1}$ be an increasing positive sequence with $\lim_{i \to \infty} T_i =\tau$.
Then for any $z_1, z_2 \ge 0$ and $0 \le t \le T_i$,
\begin{align*}
\left|\vartheta(z_1, t)-\vartheta(z_2, t)\right| &= \lambda \bar{V}^3 \left| \Upsilon\left(\frac{z_1}{q^b-\lambda v^b t} \right)
-\Upsilon\left(\frac{z_2}{q^b-\lambda v^b t}\right)\right| \\
&\le \lambda \bar{V}^3 K \left|\frac{z_1}{q^b-\lambda v^b t}-\frac{z_2}{q^b-\lambda v^b t}\right|\\
&\le \frac{\lambda \bar{V}^3 K}{q^b-\lambda v^b T_i}|z_1-z_2|.
\end{align*}
Therefore $\vartheta(Z, t)$ is Lipschitz-continuous in $Z$ and continuous in $t$ for any $t < T_i$ and $Z>0$.
 By the Picard's existence theorem,  there exists a unique solution to Eqn.~\eqref{generalizedzt} with the initial condition $Z(0) =z$ on $[0, T_i]$.
Now letting $i \to \infty$,  the unique solution exists in $[0, \tau)$. Moreover, by the boundedness of  $\vartheta(Z, \tau)$ and the continuity of
 $Z(t)$  at $\tau$, the unique solution also exists at $t=\tau$. For $t> \tau$, $\vartheta(Z, 0)=0$ and $Z(t) =Z(\tau)$.
Hence there exists a unique solution $Z(t)$ for $t\ge 0$.
Note that $\tau^a=\infty$ (resp. $\tau^b= \infty$) when $v^a < 0$ (resp. $v^b<0$).
However, since the right hand side of Eqn.~\eqref{generalizedzt} is less than or equal to $-\lambda \bar{V}^2$,
it follows that $Z(t)$ is decreasing in $t$ and hits $0$ in finite time. Therefore $\tau$ is well defined.
%We can borrow all other argument in the proof of Theorem \ref{FluidThm} and claim the desired %convergence by Theorem \ref{kurtz}.
\end{proof}

%We can actually push the result a little bit further. Let $L(t) = \frac{Z(t)}{q^b-\lambda v^b %%t}$, then Eqn.~\eqref{generalizedzt} is turned into
%\begin{equation}\label{LEqn}
%\frac{d L(t)}{\lambda (v^{b}L(t)-\bar{V}^3 \Upsilon(L(t))-\bar{V}^2)} = \frac{dt}{q^b-\lambda %v^bt},
%\end{equation}
%with the initial condition is $L(0) = z/q^b$. Integrating
%Eqn.~\eqref{LEqn} from $0$ to $t$, $L(t)$ %and hence the limiting
%order position $Z(t)$ is determined
%implicitly by the following equation:
%\begin{equation}
%\int_{\frac{z}{q^{b}}}^{\frac{Z(t)}{q^{b}-\lambda v^{b}t}}\frac{d L}{\lambda (v^{b}L-\bar{V}^3 %\Upsilon(L)-\bar{V}^2)}
%=\frac{1}{\lambda v^{b}}\log\left(\frac{q^{b}}{q^{b}-\lambda v^{b}t}\right),
%\qquad t<\tau.
%\end{equation}

\subsection{Linear dependence between the order arrival and the trading volume}
One may also replace Assumptions \ref{fluidd} and \ref{fluidv} by the assumption that order arrival rate is linearly  correlated with trading volumes.
The fluid limit can be analyzed in a similar way with few modifications.

\begin{assumption}
\label{assumption linear}
 $N(nt)$ is a simple point process
with an intensity $
n\lambda+\alpha n{Q}^{a}_{n}\left(t-\right)+\beta n{Q}^{b}_{n}\left(t-\right)$
at time $t$, where $\alpha,\beta$ are positive constants.
\end{assumption}
\begin{assumption} \label{strongerV}
For any $1 \le j \le 6$, $\{V_{i}^{j}\}_{i\geq 1}$ is a sequence of stationary, ergodic, and uniformly bounded sequence. Moreover, for any $i \ge 2$ and $\mathcal{G}_i= \sigma (\{\overrightarrow{V}_k\}_{1 \le k \le i})$,
\begin{align*}
\mathbb{E}[\overrightarrow{V}_i \mid \mathcal{G}_{i-1}] = \overrightarrow{\bar{V}}.
\end{align*}
\end{assumption}

\begin{theorem}
\label{Thm lineararrival}
Given Assumptions  \ref{cancelprop}, \ref{assumption linear}, and \ref{strongerV}, then Theorem \ref{FluidThm}  holds except that the limit processes
will be replaced by
\begin{align}
\label{Eqn Qb}
&{Q}^{b}(t)=-\frac{\alpha q^{a}v^{b}-\alpha q^{b}v^{a}+\lambda v^{b}}{v^{a}\alpha+v^{b}\beta}
+\frac{v^{b}(\beta q^{b}+\alpha q^{a}+\lambda)}{\beta v^{b}+\alpha v^{a}}e^{-(v^{b}\beta+v^{a}\alpha)t\wedge\tau},
\end{align}
\begin{align}
\label{Eqn Qa}
&{Q}^{a}(t)=-\frac{\beta q^{b}v^{a}-\beta q^{a}v^{b}+\lambda v^{a}}{v^{a}\alpha+v^{b}\beta}
+\frac{v^{a}(\beta q^{b}+\alpha q^{a}+\lambda)}{\beta v^{b}+\alpha v^{a}}e^{-(v^{b}\beta+v^{a}\alpha)t\wedge\tau},
\end{align}
and
\begin{align}
\label{Eqn Z}
{Z}(t)&=ze^{-\int_{0}^{t\wedge\tau}\bar{V}_{3}\left[\frac{\lambda}{{Q}^{b}(s)}+\beta+\frac{\alpha {Q}^{a}(s)}{{Q}^{b}(s)}\right]ds}
\\
&\qquad\qquad
-\int_{0}^{t\wedge\tau}\bar{V}_{2}[\lambda+\beta {Q}^{b}(s)+\alpha {Q}^{a}(s)]
e^{-\int_{s}^{t\wedge\tau}\bar{V}_{3}\left[\frac{\lambda}{{Q}^{b}(u)}+\beta+\frac{\alpha {Q}^{a}(u)}{{Q}^{b}(u)}\right]du}ds.
\nonumber
\end{align}
\end{theorem}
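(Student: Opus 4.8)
The plan is to mirror the two-stage strategy used for Theorem~\ref{FluidThm} --- first obtain the fluid limit for the pair $(\mathbf{Q}_n^b,\mathbf{Q}_n^a)$, then propagate it to the order position $\mathbf{Z}_n$ --- but with one essential change. Under Assumption~\ref{assumption linear} the arrival intensity $n\lambda+\alpha n Q_n^a(t-)+\beta n Q_n^b(t-)$ depends on the queues themselves, so the scaled net order flow $\overrightarrow{\mathbf{C}}_n$ no longer converges to a deterministic limit in isolation. Consequently Theorem~\ref{convC} and the Kurtz--Protter device (Theorem~\ref{kurtz}), which require a pre-established limit $\mathbf{Y}_n\Rightarrow\mathbf{Y}$ for the driver, cannot be invoked directly. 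I would replace that input by a martingale-plus-compensator decomposition of the driver. Using the filtrations $\mathcal{F}^n_t$ and $\mathcal{G}_i$ of the proof of Theorem~\ref{FluidThm}, write $C_n^j(t)=M_n^j(t)+A_n^j(t)$, where Assumptions~\ref{assumption linear} and~\ref{strongerV} (the conditional-mean identity $\mathbb{E}[\overrightarrow{V}_i\mid\mathcal{G}_{i-1}]=\overrightarrow{\bar{V}}$ makes each per-jump conditional mean equal to $\tfrac1n\bar{V}^j$) give the compensator
\[
A_n^j(t)=\int_0^t \bar{V}^j\bigl(\lambda+\alpha Q_n^a(s-)+\beta Q_n^b(s-)\bigr)\,ds,
\]
and $M_n^j$ is an $\mathcal{F}^n_t$-martingale.

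Next I would show the martingale is asymptotically negligible. Since the $V_i^j$ are uniformly bounded and the intensity is $O(n)$ while the scaled queues stay $O(1)$ before truncation, the arrival count satisfies $\mathbb{E}[N(nt)]=O(n)$, so the quadratic variation obeys $\mathbb{E}[[M_n^j]_t]\le \frac{C}{n^2}\,\mathbb{E}[N(nt)]=O(1/n)$; Doob's inequality then yields $\sup_{s\le T}|M_n^j(s)|\Rightarrow 0$. Feeding this into \eqref{scaledq} leaves, up to the vanishing martingale $M_n^1-M_n^2-M_n^3$, a \emph{closed} integral equation for $(\mathbf{Q}_n^b,\mathbf{Q}_n^a)$ whose drift is linear, hence globally Lipschitz, in $(Q^a,Q^b)$. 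A Gronwall estimate forces uniform convergence to the unique solution of $\dot Q^b=-v^b(\lambda+\alpha Q^a+\beta Q^b)$ and $\dot Q^a=-v^a(\lambda+\alpha Q^a+\beta Q^b)$. Setting $W:=\lambda+\alpha Q^a+\beta Q^b$ one finds $\dot W=-(\alpha v^a+\beta v^b)W$, so $W$ is a single decaying exponential; integrating $\dot Q^b=-v^bW$ and $\dot Q^a=-v^aW$ recovers \eqref{Eqn Qb} and \eqref{Eqn Qa}.

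Having the queues, the limit equation for the order position is the linear first-order ODE $\dot Z=-(\lambda+\alpha Q^a+\beta Q^b)\bigl(\bar{V}^2+\bar{V}^3 Z/Q^b\bigr)$, whose integrating-factor solution is exactly \eqref{Eqn Z}. Convergence $\mathbf{Z}_n\Rightarrow\mathbf{Z}$ then follows by the same continuity (Picard/Gronwall) argument as in Theorem~\ref{FluidThm}, now with state-dependent but (on compacts, away from the boundary) Lipschitz coefficients, truncated at $\tau=\min\{\tau^a,\tau^b,\tau^z\}$. As in Theorem~\ref{FluidThmGeneralized}, the bound $Z_n\le Q_n^b$ supplied by Lemma~\ref{zlessthanqb} guarantees $\tau^z\le\tau^b$, so the singular factor $\lambda/Q^b$ in the $Z$-equation is never reached before $Z$ hits zero, and the ODE admits a unique global solution after the usual extension past $\tau$.

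The main obstacle is the self-consistency of the coupled limit: because the driver's compensator depends on the very queues being controlled, there is no externally given limiting semimartingale to plug into Kurtz--Protter, and the entire argument hinges on (i) showing the martingale part vanishes \emph{uniformly} on compacts and (ii) verifying that the resulting random integral equation is stable under the Gronwall estimate despite the feedback through the intensity. This is precisely where the uniform boundedness in Assumption~\ref{strongerV} (rather than mere square-integrability) pays off, making the quadratic-variation bound clean and letting the fixed-point/Gronwall step close without delicate tightness arguments.
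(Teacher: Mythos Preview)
Your proposal is correct and follows essentially the same route as the paper: decompose the scaled driver as $\overrightarrow{C}_n(t)=\overrightarrow{M}_n(t)+\int_0^t(\lambda+\beta Q_n^b(s-)+\alpha Q_n^a(s-))\,ds\,\overrightarrow{\bar{V}}$, observe that $\overrightarrow{M}_n$ is a martingale, and then pass to the limiting ODE system, which is solved explicitly exactly as you describe (the paper also reduces to the linear $2\times 2$ system and solves it via the eigenstructure, and treats the $Z$-equation by the integrating factor). Your write-up is in fact more explicit than the paper's on the key point---the paper simply says ``similar to the arguments before'' for the convergence step, whereas you correctly flag that Theorem~\ref{kurtz} cannot be invoked with the original driver and supply the quadratic-variation bound $\mathbb{E}[[M_n^j]_t]=O(1/n)$ (this is where the uniform boundedness in Assumption~\ref{strongerV} is used) together with a Gronwall comparison to close the self-referential loop.
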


%We also assume that
%\begin{equation}
%\mathbb{E}[\overrightarrow{V}_{i} \mid \mathcal{G}_{i-1}]=\overrightarrow{\bar{V}}.
%\end{equation}
\begin{proof}
Recall that before $t \le \tau$, with Assumption \ref{assumption linear},
\begin{align*}
d\left(\begin{aligned}
&{Q}_n^b(t)\\
&{Q}_n^a(t)\\
&{Z}_n(t)\end{aligned}\right)
&=\left(
\begin{array}{cccccc}
1 &-1 &-1 &0 &0 &0 \\
0 &0 &0 &1 &-1 &-1  \\
0 &-1 & -\frac{Z_n(t-)}{{Q}_n^b(t-)} &0 &0 &0
\end{array}
\right)\mathbb{I}_{{Q}^{a}_{n}(t-)>0, {Z}_{n}(t-)>0}
 \cdot d \overrightarrow{C}_n(t),
\end{align*}%
where
\begin{equation*}
\overrightarrow{C}_{n}(t)=\frac{1}{n}\sum_{i=1}^{N(nt)}\overrightarrow{V}_{i}
=M_{n}(t)+\int_{0}^{t}(\lambda+\beta {Q}_{n}^{b}(s-)+\alpha {Q}_{n}^{a}(s-))ds\overrightarrow{\bar{V}}.
\end{equation*}
Here
\begin{align*}
\overrightarrow{M}_{n}(t)
&=\frac{1}{n}\sum_{i=1}^{N(nt)}[\overrightarrow{V}_{i}-\overrightarrow{\bar{V}}]
+\frac{1}{n}\overrightarrow{\bar{V}}\left[N(nt)-n\int_{0}^{t}(\lambda+\beta {Q}_{n}^{b}(s-)+\alpha {Q}_{n}^{a}(s-))ds\right]
%\\
%&=\frac{1}{n}\sum_{i=1}^{N(nt)}[\overrightarrow{V}_{i}-\overrightarrow{\bar{V}}]
%+\frac{1}{n}\overrightarrow{\bar{V}}\left[N(nt)-\int_{0}^{t}\left(n\lambda+n\beta {Q}_{n}^{b}\left(\frac{ns-}{n}\right)+n\alpha %{Q}_{n}^{a}\left(\frac{ns-}{n}\right)\right)ds\right]
%\nonumber
%\\
%&=\frac{1}{n}\sum_{i=1}^{N(nt)}[\overrightarrow{V}_{i}-\overrightarrow{\bar{V}}]
%+\frac{1}{n}\overrightarrow{\bar{V}}\left[N(nt)-\int_{0}^{nt}\left(\lambda+\beta n{Q}_{n}^{b}\left(\frac{s-}{n}\right)+\alpha %n{Q}_{n}^{a}\left(\frac{s-}{n}\right)\right)ds\right]
%\nonumber
\end{align*}
is a martingale. Similar to the arguments  before, we can show that
$(\mathbf{{Q}}_{n}^{b},\mathbf{{Q}}^{a}_n,\mathbf{{Z}}_{n})\Rightarrow(\mathbf{{Q}}^{b},\mathbf{{Q}}^{a},\mathbf{Z})$,
where $(\mathbf{{Q}}^{b},\mathbf{{Q}}^{a},\mathbf{{Z}})$ satisfies the ODE:
\begin{equation*}
d\left(\begin{aligned}
&{Q}^b(t)\\
&{Q}^a(t)\\
&{Z}(t)\end{aligned}\right)
=\left(
\begin{array}{cccccc}
1 &-1 &-1 &0 &0 &0 \\
0 &0 &0 &1 &-1 &-1  \\
0 &-1 & -\frac{Z_n(t-)}{{Q}_n^b(t-)} &0 &0 &0
\end{array}
\right)\mathbb{I}_{{Q}^a(t-)>0, {Z}(t-)>0}
 \cdot  (\lambda+\beta {Q}^{b}(t-)+\alpha {Q}^{a}(t-))\overrightarrow{\bar{V}}dt,
\nonumber
\end{equation*}
with the initial condition $({Q}^{b}(0),{Q}^{a}(0),{Z}(0))=(q^{b},q^{a},z)$.
The equations for ${Q}^{b}(t)$ and ${Q}^{a}(t)$ can be written down more explicitly as
\begin{align*}
&d{Q}^{b}(t)=(\lambda+\beta {Q}^{b}(t-)+\alpha {Q}^{a}(t-))(\bar{V}_{1}-\bar{V}_{2}-\bar{V}_{3})dt,
\\
&d{Q}^{a}(t)=(\lambda+\beta {Q}^{b}(t-)+\alpha {Q}^{a}(t-))(\bar{V}_{4}-\bar{V}_{5}-\bar{V}_{6})dt,
\end{align*}
which can be further simplified as
\begin{equation*}
d
\left(
\begin{array}{c}
{Q}^{b}(t)
\\
{Q}^{a}(t)
\end{array}
\right)
=\left(
\begin{array}{cc}
-v^{b}\beta & -v^{b}\alpha
\\
-v^{a}\beta & -v^{a}\alpha
\end{array}
\right)
\left(
\begin{array}{c}
{Q}^{b}(t)
\\
{Q}^{a}(t)
\end{array}
\right)
-
\left(
\begin{array}{c}
\lambda v^{b}
\\
\lambda v^{a}
\end{array}
\right).
\end{equation*}
Hence, for $t\leq\tau$, we get
\begin{equation*}
\left(
\begin{array}{c}
{Q}^{b}(t)
\\
{Q}^{a}(t)
\end{array}
\right)
=c_{1}
\left(
\begin{array}{c}
\alpha
\\
-\beta
\end{array}
\right)
+c_{2}e^{-(v^{b}\beta+v^{a}\alpha)t}
\left(
\begin{array}{c}
v^{b}
\\
v^{a}
\end{array}
\right)
-
\left(
\begin{array}{c}
\frac{\lambda}{\beta}
\\
0
\end{array}
\right),
\end{equation*}
where $c_{1}$, $c_{2}$ are constants that can be determined from the initial condition,
\begin{equation*}
c_{1}=-\frac{q^{a}v^{b}-\frac{\lambda v^{a}}{\beta}-q^{b}v^{a}}{v^{a}\alpha+v^{b}\beta},
\qquad
c_{2}=\frac{\beta q^{b}+\alpha q^{a}+\lambda}{\beta v^{b}+\alpha v^{a}}.
\end{equation*}
Hence Eqns (\ref{Eqn Qb}) and  (\ref{Eqn Qa}) follow.

Finally, ${Z}(t)$ satisfies the first-order ODE
\begin{equation*}
d{Z}(t)+{Z}(t)\bar{V}_{3}\left(\frac{\lambda}{{Q}^{b}(t)}+\beta+\frac{\alpha {Q}^{a}(t)}{{Q}^{b}(t)}\right)dt
=-\bar{V}_{2} (\lambda+\beta {Q}^{b}(t)+\alpha {Q}^{a}(t))dt,
\end{equation*}
whose solution is given by Eqn.~\eqref{Eqn Z}.
\end{proof}

%Let us assume the following:
%\begin{align}
%&v^{b}\beta+v^{a}\alpha>0,
%\\
%&c_{1}\alpha-\frac{\lambda}{\beta}<0,
%\qquad
%c_{1}>0,
%\end{align}
%which is equivalent to the following condition:
%\begin{assumption}\label{NegAssump}
\begin{cor}
Given Assumptions  \ref{cancelprop}, \ref{assumption linear},
and \ref{strongerV}, assume further that $v^{b}\beta+v^{a}\alpha>0$ and $-\frac{\lambda v^{b}}{\alpha}<q^{a}v^{b}-q^{b}v^{a}<\frac{\lambda v^{a}}{\beta}.$
%\end{assumption}
Then ${Q}^{b}(t)$ and ${Q}^{a}(t)$ will hit zero at some finite times $\tau^{b}$ and $\tau^{a}$ respectively. Moreover,
\begin{align*}
&\tau^{b}=-\frac{1}{v^{b}\beta+v^{a}\alpha}
\log\left(\frac{v^{b}\lambda+q^{a}v^{b}\alpha-q^{b}v^{a}\alpha}{v^{b}\beta q^{b}+v^{b}\alpha q^{a}+\lambda v^{b}}\right),
\\
&\tau^{a}=-\frac{1}{v^{b}\beta+v^{a}\alpha}
\log\left(\frac{-q^{a}v^{b}\beta+q^{b}v^{a}\beta+\lambda v^{a}}
{\beta q^{b}v^{a}+\alpha q^{a}v^{a}+\lambda v^{a}}\right),
\end{align*}
and $\tau^{z}$ is determined via the equation
\begin{equation*}
z=\int_{0}^{\tau^{z}}\bar{V}_{2}(\lambda+\beta {Q}^{b}(s)+\alpha {Q}^{a}(s))
e^{\int_{0}^{s}\bar{V}_{3}\left(\frac{\lambda}{{Q}^{b}(u)}+\beta+\frac{\alpha {Q}^{a}(u)}{{Q}^{b}(u)}\right)du}ds.
\end{equation*}
\end{cor}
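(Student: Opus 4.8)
The plan is to read the hitting times directly off the closed-form solutions for $\mathbf{Q}^b$, $\mathbf{Q}^a$, and $\mathbf{Z}$ supplied by Theorem~\ref{Thm lineararrival}, inverting each expression on the event $\{t\le\tau\}$. First I would rewrite Eqns.~\eqref{Eqn Qb} and~\eqref{Eqn Qa} in the compact form
\begin{equation*}
Q^b(t)=-A_b+B_b\,e^{-\kappa t},\qquad Q^a(t)=-A_a+B_a\,e^{-\kappa t},\qquad \kappa:=v^b\beta+v^a\alpha,
\end{equation*}
where $A_b,A_a$ are the constant terms and $B_b,B_a$ the coefficients of the exponentials in those two equations. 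The hypothesis $v^b\beta+v^a\alpha>0$ forces $\kappa>0$; working in the depletion regime $v^a,v^b>0$ (so that $B_b,B_a>0$), both $Q^b$ and $Q^a$ are then strictly decreasing, so each crosses zero at most once and the hitting times are unambiguous once a crossing is shown to occur. Evaluating at $t=0$ and cancelling gives the identities $B_b-A_b=q^b$ and $B_a-A_a=q^a$, which I would record for the positivity step.

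Solving $Q^b(\tau^b)=0$, i.e.\ $B_b e^{-\kappa\tau^b}=A_b$, yields $\tau^b=-\kappa^{-1}\log(A_b/B_b)$; substituting $A_b=(v^b\lambda+\alpha(q^av^b-q^bv^a))/\kappa$ and $B_b=v^b(\beta q^b+\alpha q^a+\lambda)/\kappa$ and cancelling the common $\kappa^{-1}$ in the ratio $A_b/B_b$ reproduces the stated logarithm argument verbatim, and the computation for $\tau^a$ is identical. The one genuine obstacle is finiteness and positivity, namely $0<A_b/B_b<1$. Since $B_b>0$ and $B_b-A_b=q^b>0$, the upper bound $A_b<B_b$ is automatic, so everything reduces to $A_b>0$, i.e.\ to $v^b\lambda+\alpha(q^av^b-q^bv^a)>0$, which is precisely the left inequality $q^av^b-q^bv^a>-\lambda v^b/\alpha$ of the hypothesis. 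Symmetrically, $A_a>0$ is equivalent to $q^av^b-q^bv^a<\lambda v^a/\beta$, the right inequality. Thus the assumed two-sided condition is exactly what guarantees that both queues deplete in finite time, and the given $\tau^b,\tau^a$ are those depletion times.

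Finally, for $\tau^z$ I would start from Eqn.~\eqref{Eqn Z} rewritten through the integrating factor $\Phi(t):=\int_0^t \bar{V}_3\big(\lambda/Q^b(s)+\beta+\alpha Q^a(s)/Q^b(s)\big)\,ds$, so that
\begin{equation*}
Z(t)=e^{-\Phi(t)}\Big(z-\int_0^t \bar{V}_2\big(\lambda+\beta Q^b(s)+\alpha Q^a(s)\big)\,e^{\Phi(s)}\,ds\Big).
\end{equation*}
Because $e^{-\Phi(\tau^z)}>0$, the condition $Z(\tau^z)=0$ is equivalent to the vanishing of the bracket, and multiplying out $e^{\Phi(s)}\,e^{-\Phi(\tau^z)}=e^{-(\Phi(\tau^z)-\Phi(s))}$ recovers exactly the stated implicit equation for $\tau^z$. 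No new existence issue arises here beyond Eqn.~\eqref{Eqn Z} itself: since $dZ/dt\le-\lambda\bar{V}_2<0$ while $Q^b,Q^a\ge0$ and $Z(0)=z>0$, the process $Z$ is strictly decreasing and reaches zero in finite time, so $\tau^z$ is well defined and the equation merely characterizes that first crossing. I expect the only delicate bookkeeping to be matching $A_b/B_b$ and $A_a/B_a$ to the precise logarithm arguments in the statement, but this is purely algebraic once the identities $B_b-A_b=q^b$ and $B_a-A_a=q^a$ are in hand.
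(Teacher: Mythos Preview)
Your approach is correct and is precisely the direct computation the paper leaves to the reader (the corollary is stated without proof). One small point worth tightening: you invoke ``the depletion regime $v^a,v^b>0$'' as a working assumption in order to get $B_b,B_a>0$, but in fact this follows from the stated hypotheses. If $v^b\le 0$ then (using $q^a,q^b>0$ and the sign needed on $v^a$ to keep $\kappa>0$) the left inequality $-\lambda v^b/\alpha<q^av^b-q^bv^a$ fails, and symmetrically $v^a\le 0$ contradicts the right inequality; so the two-sided condition together with $\kappa>0$ already forces $v^a,v^b>0$. With that observation your positivity/finiteness argument for $\tau^b,\tau^a$ is complete, and your treatment of $\tau^z$ via the integrating factor matches Eqn.~\eqref{Eqn Z} exactly.
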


\subsection{Various forms of diffusion limits}
%\label{Remark consistent}
There is more than one possible alternative set of assumptions under which appropriate forms of diffusion limits may be derived.
For instance, one may impose a weaker condition than Assumption \ref{assumptionD} for $\{D_i\}_{i \ge 1}$.
\begin{assumption}\label{weakD}
For any time $t$,
\begin{align*}
\lim_{n \to \infty} \frac{N(nt)}{n} = \lambda t,  \quad\text{a.s.}
\end{align*}
Moreover, there exists $K > 0$, such that $\mathbb{E}[N(t)] \le Kt$, for
any $t$.
\end{assumption}
This assumption holds, for example, if the point process
$N(t)$ is stationary and ergodic with finite mean.
To compensate for the weakened Assumption \ref{weakD}, one may need a stronger condition on $\{\overrightarrow{V}_i\}_{i \ge 1}$, for instance, Assumption \ref{strongerV}.
%\begin{assumption}\label{strongerV}
%Assume that for any $1 \le j \le 6$, $\{V_{i}^{j}\}_{i\geq 1}$ is a sequence of stationary, ergodic and uniformly bounded sequence. Moreover, for any $i \ge 2$,
%\begin{align}
%\mathbb{E}[\overrightarrow{V}_i|\mathcal{G}_{i-1}] = \overrightarrow{\bar{V}},
%\end{align}
%where $\mathcal{G}$ is the filtration generated by $\{\overrightarrow{V}_{i}\}_{i\geq 1}$.
%\end{assumption}

Note that under this alternative set of assumptions, the resulting limit process will  in fact be simpler than Theorem \ref{DiffThm}.
 This is because  Assumption \ref{strongerV} implies that
$V_i^j$ is actually uncorrelated to $V_{i'}^j$ for any $i \neq i'$ and $1 \le j \le 6$.
Hence the  covariance of $V_1^j$ and $V_i^j$, $i \ge 2$ in the limit process may vanish.  We illustrate this in some detail below.

Making Assumptions \ref{strongerV} and \ref{weakD},
define a modified version of the scaled net order flow process
$\overrightarrow{\mathbf{\Psi}}^*_{n}$ by
\begin{equation} \label{CnDfn2}
\overrightarrow{\Psi}^*_n(t)=\frac{1}{\sqrt{n}}  \sum_{i=1}^{N(nt)} \big(\overrightarrow{V}_i - \overrightarrow{\bar{V}}\big),
\end{equation}
 while the scaled processes $R_n^b(t)$, $R_n^a(t)$ still follows
 Eqn.~\eqref{Qn},  the first hitting time the same as in Eqn.~\eqref{taun},
and  the corresponding limit processes in Eqn.~\eqref{tau} and Eqn.~\eqref{QTruncated}.  Then we have the following.
\begin{theorem} \label{DiffThm2}
Given Assumptions \ref{flui}, \ref{strongerV},  and \ref{weakD},
for any $T>0$,

(i) $\overrightarrow{\mathbf{\Psi}}^*_{n} \Rightarrow \overrightarrow{\mathbf{\Psi}}^{*}$
where $\overrightarrow{\mathbf{\Psi}}^*=(\sigma_{j}W_{j}, 1\le j\le 6)$,
where $(W_j, 1\le j\le 6)$ is a standard six-dimensional Brownian
motion and $\sigma_j^2 = \lambda\mbox{Var}(V_1^j)$.

(ii)
$(\mathbf{{R}}_n^b, \mathbf{{R}}_n^a) \Rightarrow (\mathbf{{R}}^b, \mathbf{{R}}^a)  \qquad \text{in} \quad (D^2[0, T], J_1).$
\end{theorem}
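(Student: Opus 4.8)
The plan is to exploit that Assumption~\ref{strongerV} makes $\xi_i := \overrightarrow{V}_i - \overrightarrow{\bar{V}}$ a stationary, ergodic, uniformly bounded martingale difference sequence relative to $\mathcal{G}_i=\sigma(\overrightarrow{V}_1,\dots,\overrightarrow{V}_i)$, since $\mathbb{E}[\xi_i\mid\mathcal{G}_{i-1}]=0$. I would first establish a functional CLT for the non-time-changed sum $\mathbf{X}_n(t):=\frac{1}{\sqrt{n}}\sum_{i=1}^{\lfloor nt\rfloor}\xi_i$, and then recover $\overrightarrow{\mathbf{\Psi}}^*_n=\mathbf{X}_n\circ(N(n\cdot)/n)$ by a random time change. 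For $\mathbf{X}_n$ I would reuse the scalar FCLT of \cite[Page 197]{billingsley1968convergence} already invoked in the proof of Theorem~\ref{DiffThm}, together with the Cram\'er-Wold device exactly as there. For fixed $c=(c_1,\dots,c_6)$, the combination $\sum_j c_j\xi_i^j$ is again a stationary, ergodic, mean-zero martingale difference sequence; hence its conditional expectation given the past $\sigma(\xi_k,k\le -m)$ is zero for every $m\ge1$, so the summability hypothesis of that FCLT holds trivially, and the same martingale property forces all cross-time covariances to vanish. Consequently its asymptotic variance collapses to $c^{T}\Xi c$ with $\Xi_{jk}=\mathrm{Cov}(V_1^j,V_1^k)$ --- precisely the simplification $v_j^2=\mathrm{Var}(V_1^j)$ advertised before the theorem. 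This yields $\mathbf{X}_n\Rightarrow\mathbf{B}$ in $(D^6[0,T],J_1)$ for a six-dimensional Brownian motion $\mathbf{B}$ with covariance matrix $\Xi$, whose $j$-th marginal is $\sqrt{\mathrm{Var}(V_1^j)}\,W_j$.

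Next I would insert the time change $\phi_n(t):=N(nt)/n$. Assumption~\ref{weakD} gives $\phi_n(t)\to\lambda t$ a.s.; since each $\phi_n$ is nondecreasing and the limit $\lambda\mathbf{e}$ is continuous, this convergence is locally uniform, so $\phi_n\to\lambda\mathbf{e}$ in $(D[0,T],J_1)$ a.s. As the limiting time change is deterministic, continuous and strictly increasing, joint convergence $(\mathbf{X}_n,\phi_n)\Rightarrow(\mathbf{B},\lambda\mathbf{e})$ holds and composition is continuous at such a limit, so by \cite[Corollary 13.3.2]{whitt2002stochastic}, $\overrightarrow{\mathbf{\Psi}}^*_n=\mathbf{X}_n\circ\phi_n\Rightarrow\mathbf{B}\circ(\lambda\mathbf{e})=:\overrightarrow{\mathbf{\Psi}}^*$. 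Then $\overrightarrow{\mathbf{\Psi}}^*(t)=\mathbf{B}(\lambda t)$ has covariance $\lambda\Xi$ with diagonal entries $\lambda\,\mathrm{Var}(V_1^j)=\sigma_j^2$, giving $\Psi^{*,j}\stackrel{d}{=}\sigma_j W_j$ and hence part (i). Unlike $\overrightarrow{\mathbf{\Psi}}_n$ in Theorem~\ref{DiffThm}, subtracting $\overrightarrow{\bar{V}}$ inside each summand absorbs the fluctuation of $N$, so no separate $\mathbf{W}_1$ term survives; this is exactly why the limit is simpler here.

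For part (ii) I would follow the second half of the proof of Theorem~\ref{DiffThm} essentially verbatim. The truncated queues $(\mathbf{R}_n^b,\mathbf{R}_n^a)$ arise from $\overrightarrow{\mathbf{\Psi}}^*_n$ through the deterministic linear map $A$, the deterministic drift $\lambda A\overrightarrow{\bar{V}}\mathbf{e}$, a constant initial value, and truncation at the first passage time $\iota_n$. By the Cram\'er-Wold device it suffices to prove $\alpha\mathbf{R}_n^b+\beta\mathbf{R}_n^a\Rightarrow\alpha\mathbf{R}^b+\beta\mathbf{R}^a$ for every $(\alpha,\beta)$, which follows from part (i) and the continuity of the truncation map via the continuous-mapping theorem; convergence of the first-passage data $(\iota_n,R_n^b(\iota_n-),R_n^a(\iota_n-))$ is then supplied by \cite[Theorem 13.6.5]{whitt2002stochastic}, giving the claimed convergence in $(D^2[0,T],J_1)$.

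The only genuinely delicate step is part (i): one must check that the martingale-difference structure of Assumption~\ref{strongerV} simultaneously verifies the summability hypothesis of the Billingsley FCLT and annihilates every cross-time covariance, so that the limiting covariance reduces to the single-step matrix $\Xi$ rather than the full series appearing in Theorem~\ref{DiffThm}. Once this is secured, the random time change and the queue-length passage are routine continuous-mapping arguments resting on the same machinery already deployed for Theorem~\ref{DiffThm}.
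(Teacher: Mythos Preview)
Your proposal is correct and takes a genuinely different route for part (i). The paper works directly in continuous time: it recognizes $\frac{1}{\sqrt{n}}M^j_{n\cdot}$ with $M^j_{nt}=\sum_{i=1}^{N(nt)}(V_i^j-\bar V^j)$ as a square-integrable martingale with uniformly bounded jumps, shows $\frac{1}{n}[M^j]_{nt}\to\lambda\,\mathrm{Var}(V_1^j)\,t$ a.s.\ via the ergodic theorem, and then invokes the martingale FCLT \cite[Theorem VIII-3.11]{jacod1987limit} in a single step. You instead separate the two layers of randomness---Billingsley's stationary-sequence FCLT on the discrete partial sums $\mathbf{X}_n$, followed by a random time change through \cite[Corollary 13.3.2]{whitt2002stochastic}---recycling exactly the machinery from Theorem~\ref{DiffThm}. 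The paper's route is shorter and exploits the uniform boundedness in Assumption~\ref{strongerV} directly to control jump sizes; yours makes the collapse of the covariance series more explicit, and in fact your computation of the full matrix $\Xi_{jk}=\mathrm{Cov}(V_1^j,V_1^k)$ is the more accurate one: the paper's claim that $[M^j,M^k]\equiv 0$ because ``$M^j$ and $M^k$ have no common jumps'' overlooks that every arrival moves $M^j$ by $-\bar V^j$ when the order is not of type $j$, so the limiting components are actually correlated with covariance $-\lambda\bar V^j\bar V^k$. For part (ii), the paper instead verifies the Kurtz--Protter conditions as in Theorem~\ref{FluidThm}, whereas your continuous-mapping/Cram\'er--Wold argument lifted from the proof of Theorem~\ref{DiffThm} is more economical here, since the $(R^b,R^a)$ dynamics are linear in $\overrightarrow{\mathbf{\Psi}}^{*}_n$ and do not require any SDE-stability machinery.
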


\begin{proof}
Under Assumption \ref{strongerV}, it is clear that
\begin{equation*}
\overrightarrow{\Psi}^*_{n}(t)=\frac{1}{\sqrt{n}}  \sum_{i=1}^{N(nt)}
\big(\overrightarrow{V}_i - \mathbb{E}[\overrightarrow{V}_{i} \mid \mathcal{G}_{i-1}]\big)
\end{equation*}
is a martingale. Now define for $j=1,2,\ldots,6$,
\begin{equation*}
M^{j}_{nt}:=\sum_{i=1}^{N(nt)}\left(V^{j}_{i}-\mathbb{E}[V^{j}_{i}
  \mid \mathcal{G}_{i-1}]\right)
=\sum_{i=1}^{N(nt)}(V^{j}_{i}-\bar{V}^{j}).
\end{equation*}
First, the jump size of $M^{j}_{nt}$ is uniformly bounded since $N(nt)$ is a simple point process
and by Assumption \ref{strongerV}, $V^{j}_{i}$'s are uniformly bounded.
Next, the quadratic variation of $M^{j}_{nt}$ is given by
\begin{equation*}
[M^{j}]_{nt}=\sum_{i=1}^{N^{j}(nt)}(V^{j}_{i}-\bar{V}^{j})^{2}.
\end{equation*}
By Assumptions \ref{strongerV} and \ref{weakD} and the Ergodic theorem, as $t\rightarrow\infty$,
\begin{equation*}
\frac{[M^{j}]_{t}}{t}\rightarrow\lambda\mbox{Var}[V^{j}], \quad \mbox{a.s}.
\end{equation*}
 Moreover, since $M^{j}$ and $M^{k}$ have no common jumps for $j\neq k$,
\begin{equation*}
[M^{j},M^{k}]_{t}\equiv 0.
\end{equation*}
Therefore, applying the FCLT for martingales~\cite[Theorem VIII-3.11]{jacod1987limit},
for any $T>0$, we have
\begin{equation*}
\overrightarrow{\mathbf{\Psi}}^*_{n} \Rightarrow \overrightarrow{\mathbf{\Psi}}^*,\qquad\text{in $(D^6[0,T],J_{1})$},
\end{equation*}
To see the second part of the claim, first note that
by Assumption \ref{weakD},
\begin{equation*}
\frac{1}{n}\sum_{i=1}^{N(n\cdot)}\overrightarrow{\bar{V}}
\Rightarrow\lambda\overrightarrow{\bar{V}}\mathbf{e}, \qquad\text{in $(D[0,T],J_{1})$} \quad \mbox{a.s.}
\end{equation*}
as $n\rightarrow\infty$. The remaining of the proof is to check the conditions for Theorem \ref{kurtz} as in  the proof of Theorem \ref{FluidThm}.
The quadratic variance of $M_{nt}:=(M^{j}_{nt})_{1\leq j\leq 6}$ is given by
\begin{align*}
\mathbb{E}\left[\left[\frac{1}{\sqrt{n}}M\right]_{nt}\right]
=\frac{1}{n}\sum_{1\leq j\leq 6}\mathbb{E}[N(nt)]
\mathbb{E}\left[\left(V_{i}^{j}-\mathbb{E}\left[V_{i}^{j} \mid \mathcal{F}_{T_{i}^{j}-}\right]\right)^{2}\right]
%\nonumber \\
\leq Kt\sum_{1\leq j\leq 6}\mathbb{E}\left[\left(V_{1}^{j}\right)^{2}\right],
%\nonumber
\end{align*}
which is uniformly bounded in $n$.
The total variation of $A_{n}:=\frac{1}{n}\sum_{i=1}^{N(nt)}\overrightarrow{\bar{V}}$ satisfies
\begin{equation*}
\mathbb{E}[[T(A_{n})]_{t}]
\leq\sum_{1\leq j\leq 6}\mathbb{E}\left[\frac{1}{n}\sum_{i=1}^{N(nt)}|\bar{V}^{j}|\right]
\leq\sum_{1\leq j\leq 6}Kt\mathbb{E}[|\bar{V}^{j}|],
\end{equation*}
which is uniformly bounded in $n$.  The proof is complete.
\end{proof}

\bibliographystyle{abbrv}
\bibliography{optimal_dynamics}

\appendix

\section{Some large deviations results}
\label{AppendixB}

%According to \cite[Theorem 5.1.2]{dembo2009large}, we have

%\begin{theorem}[Mogulskii's Theorem]
%\label{Mogulskii}
%Assume $(X_{i})_{i\geq 1}$ are i.i.d. random vectors in $\mathbb{R}^{K}$.
%If $\Gamma(\theta):=\log\mathbb{E}[e^{\theta\cdot X_{1}}]<\infty$
%for any $\theta\in\mathbb{R}^{K}$ and let
%\begin{equation}
%\Lambda(x):=\sup_{\theta\in\mathbb{R}^{K}}\left\{\theta\cdot x-\Gamma(\theta)\right\},
%\end{equation}
%then $\mathbb{P}(\frac{1}{n}\sum_{i=1}^{\lfloor nt\rfloor}X_{i}\in\cdot)$ follows
%a large deviation principle on $L^{\infty}[0,\infty)$ with the rate function
%\begin{equation}
%\mathcal{I}(\phi)=\int_{0}^{T}\Lambda(\phi'(t))dt,
%\end{equation}
%for any $\phi\in\mathcal{AC}_{0}[0,\infty)$, the space of absolutely continuous functions starting at $0$ and
%$\mathcal{I}(\phi)=+\infty$ otherwise.
%\end{theorem}

According to \cite[Theorem 2]{dembo1995large}, we have

\begin{theorem}\label{Dembo}
Let $(X_{i})_{i\in\mathbb{N}}$ be a sequence of stationary $\mathbb{R}^{K}$-valued random vectors satisfying Assumption \ref{LDPAssump}
and Assumption \ref{LDPAssumpII}. Then, the empirical mean process $S_{n}(t):=\frac{1}{n}\sum_{i=1}^{\lfloor nt\rfloor}X_{i}$, $0\leq t\leq T$,
satisfies a large deviations principle on $D[0,T]$ equipped with the topology of uniform convergence with the convex good rate function
\begin{equation}
I(\phi):=\int_{0}^{T}\Lambda(\phi'(t))dt,
\end{equation}
for any $\phi\in\mathcal{AC}_{0}[0,\infty)$, the space of absolutely continuous functions starting at $0$ and
$\mathcal{I}(\phi)=+\infty$ otherwise, where
\begin{equation}
\Lambda(x):=\sup_{\theta\in\mathbb{R}^{K}}\{\theta\cdot x-\Gamma(\theta)\},
\end{equation}
with $\Gamma(\theta):=\lim_{n\rightarrow\infty}\frac{1}{n}\log\mathbb{E}[e^{\sum_{i=1}^{n}\theta\cdot X_{i}}]$.
\end{theorem}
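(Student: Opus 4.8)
This is precisely Theorem 2 of Dembo and Zajic~\cite[Theorem 2]{dembo1995large}, so the plan is to reproduce their Mogulskii-type argument, adapted to the weakly dependent setting encoded by Assumptions~\ref{LDPAssump} and~\ref{LDPAssumpII}. The overall strategy has three stages: first establish a large deviation principle for the finite-dimensional distributions $(S_n(t_1),\dots,S_n(t_m))$; then upgrade it to a process-level principle on $D[0,T]$ with the uniform topology by a projective-limit argument combined with exponential tightness; and finally identify the projective-limit rate function with the integral functional $\int_0^T\Lambda(\phi'(t))\,dt$.

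First I would show that the scaled cumulant generating function $\Gamma(\theta)=\lim_{n\to\infty}\frac1n\log\mathbb{E}[e^{\sum_{i=1}^n\theta\cdot X_i}]$ is well defined, finite, convex, and essentially smooth. Finiteness comes from the uniform exponential moment bound in Assumption~\ref{LDPAssumpII}, while existence of the limit follows from an approximate-subadditivity estimate: the mixing bounds in Assumption~\ref{LDPAssump} allow one to split $\sum_{i=1}^{m+n}$ into two nearly independent blocks separated by a gap of length $\ell(n)$, so that $\log\mathbb{E}[e^{\theta\cdot\sum_{i=1}^{m+n}X_i}]$ is subadditive up to an error controlled by $\ell(n)$ and the summability $\sum_n \ell(n)/(n(n+1))<\infty$. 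The same block decoupling, applied across disjoint time increments $\Delta_k=S_n(t_k)-S_n(t_{k-1})$, shows that the joint limiting cumulant generating function factorizes as $\sum_k (t_k-t_{k-1})\Gamma(\theta_k)$, the inserted gaps contributing negligibly by Assumption~\ref{LDPAssumpII}. Applying the multidimensional G\"artner--Ellis theorem then yields the finite-dimensional large deviation principle with rate $\sum_k (t_k-t_{k-1})\Lambda\!\big((\phi(t_k)-\phi(t_{k-1}))/(t_k-t_{k-1})\big)$, where $\Lambda$ is the Legendre transform of $\Gamma$.

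Next I would establish exponential tightness in $D[0,T]$ under the uniform topology. Using Chebyshev's inequality together with the uniform finiteness $g_R(\gamma)<\infty$ and $A<\infty$ from Assumption~\ref{LDPAssumpII}, one controls the probability that $S_n$ has a large modulus of continuity, obtaining $\lim_{\delta\to0}\limsup_n \frac1n\log\mathbb{P}(w(S_n,\delta)>\eta)=-\infty$ for every $\eta>0$, where $w$ denotes the oscillation over windows of width $\delta$. This simultaneously provides the exponential tightness needed for the projective-limit (Dawson--G\"artner) step and forces every path of finite rate to be continuous. Combining the finite-dimensional principle with exponential tightness yields a sample-path large deviation principle whose rate function is the projective-limit functional $\tilde I(\phi)=\sup_{0=t_0<\cdots<t_m=T}\sum_k (t_k-t_{k-1})\Lambda\!\big((\phi(t_k)-\phi(t_{k-1}))/(t_k-t_{k-1})\big)$.

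The last step, and the one demanding the most care, is to identify $\tilde I$ with $\int_0^T\Lambda(\phi'(t))\,dt$. By convexity of $\Lambda$ and Jensen's inequality the partition sums increase to the integral along refining partitions for $\phi\in\mathcal{AC}_0[0,T]$, whereas for non-absolutely-continuous $\phi$ the supremum over partitions diverges, so $\tilde I(\phi)=+\infty$ there; lower semicontinuity and compactness of level sets (hence goodness of the rate function) come from the exponential tightness already established. I expect the genuinely delicate part to be the block-decoupling estimate that produces the additive limiting cumulant generating function: it is exactly there that the precise forms of the mixing Assumption~\ref{LDPAssump} and the moment Assumption~\ref{LDPAssumpII} must be balanced, so that the inserted gaps are long enough to decorrelate adjacent blocks yet short enough to remain large-deviation negligible.
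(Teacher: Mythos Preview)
Your proposal is a faithful outline of the Dembo--Zajic argument and is correct in spirit, but it does far more than the paper does: the paper gives no proof at all of this theorem. It is stated in the appendix purely as a reference result, introduced with ``According to \cite[Theorem 2]{dembo1995large}, we have'' and nothing further; the remark after it even notes that the original is stated for Banach-space-valued sequences and that only the $\mathbb{R}^K$ case is needed here. So there is nothing to compare against --- your three-stage Mogulskii-type sketch (block-decoupled G\"artner--Ellis for finite-dimensional marginals, exponential tightness via the moment condition, Dawson--G\"artner plus identification of the rate) is exactly the route taken in \cite{dembo1995large}, whereas the paper simply quotes that theorem.
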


\paragraph{Remark.}
Note that the original statement in \cite[Theorem 2]{dembo1995large} applies to Banach space-valued $(X_{i})_{i\in\mathbb{N}}$. For the purpose
in our paper, we only need to consider $\mathbb{R}^{K}$-valued $(X_{i})_{i\in\mathbb{N}}$.

\section{$Y(t)$ process}
\label{AppendixC}

%\begin{equation*}
%\psi_{ij}:=\sum_{k=1}^{6}\Sigma_{ik}\Sigma_{jk}\lambda
%+\bar{V}^{i}\bar{V}^{j}v_{d}^{2}\lambda^{3},\qquad 1\leq i,j\leq 6.
%\end{equation*}

\begin{proposition}\label{YVarProp}
$Y(t)$ defined in Eqn (\ref{YEqn}) is a Gaussian process for $t<\tau^{z}$,
%Y(t)&=-\int_{0}^{t}e^{-\int_{s}^{t}\frac{\lambda\bar{V}^{3}}{Q^{b}(u)}du}d\mathbf{\Psi}^{2}(s)
%-\int_{0}^{t}e^{-\int_{s}^{t}\frac{\lambda\bar{V}^{3}}{Q^{b}(u)}du}\frac{Z(s)}{Q^{b}(s)}d\mathbf{\Psi}^{3}(s)
%\\
%&\qquad\qquad
%+\int_{0}^{t}e^{-\int_{s}^{t}\frac{\lambda\bar{V}^{3}}{Q^{b}(u)}du}
%\frac{Z(s)(\mathbf{\Psi}^{1}(s)-\mathbf{\Psi}^{2}(s)-\mathbf{\Psi}^{3}(s))}{(Q^{b}(s))^{2}}\lambda\bar{V}^{3}ds,
%\nonumber
%\end{align}
with mean $0$ and variance $\sigma_{Y}^{2}(t)$. In particular, when $c<0$ and $c\neq-1$,
\begin{align}
\sigma_{Y}^{2}(t)&:=
\frac{(b+ct)^{\frac{2}{c}+1}-b^{\frac{2}{c}+1}}{(2+c)(b+ct)^{\frac{2}{c}}}\sum_{j=1}^{6}
\left(\lambda\left(\Sigma_{2j}-\frac{\Sigma_{3j}a}{(1+c)\lambda\bar{V}^{3}}\right)^{2}
+\frac{\lambda^{3}v_{d}^{2}}{6}\left(\frac{c}{1+c}\bar{V}^{2}\right)^{2}\right)
\nonumber
\\
&\qquad
+\frac{b^{\frac{1}{c}}}{\lambda\bar{V}^{3}}\frac{(b+ct)^{\frac{1}{c}}-b^{\frac{1}{c}}}{(b+ct)^{\frac{2}{c}}}
\left(z+\frac{ab}{1+c}\right)\sum_{j=1}^{6}
2\left(\lambda\left(\Sigma_{2j}-\frac{\Sigma_{3j}a}{(1+c)\lambda\bar{V}^{3}}\right)\Sigma_{3j}+\frac{\lambda^{3}v_{d}^{2}}{6}\frac{c}{1+c}\bar{V}^{2}\bar{V}^{3}\right)
\nonumber
\\
&\qquad\qquad
+\frac{t}{(b+ct)^{\frac{2}{c}+1}}\frac{b^{\frac{2}{c}-1}}{\lambda^{2}(\bar{V}^{3})^{2}}
\sum_{j=1}^{6}\left(\lambda(\Sigma_{3j})^{2}+\frac{\lambda^{3}v_{d}^{2}}{6}(\bar{V}^{3})^{2}\right)
\left(z+\frac{ab}{1+c}\right)^{2}
\nonumber
\\
&-\frac{2a}{(b+ct)^{\frac{2}{c}}(1+c)\lambda\bar{V}^{3}}
\cdot
\bigg(\hat{\alpha}\frac{(b+ct)^{\frac{2}{c}+1}-b^{\frac{2}{c}+1}}{2+c}
+(\hat{\beta}-\hat{\gamma}c)((b+ct)^{\frac{1}{c}}-b^{\frac{1}{c}})
\nonumber
\\
&\qquad
+\hat{\gamma}\left((b+ct)^{\frac{1}{c}}\log(b+ct)-b^{\frac{1}{c}}\log(b)\right)
+\frac{\hat{\delta}}{2}((b+ct)^{\frac{2}{c}}-b^{\frac{2}{c}})
+\frac{\hat{\eta}}{1-c}((b+ct)^{\frac{1}{c}-1}-b^{\frac{1}{c}-1})\bigg)
\nonumber
\\
&+\frac{2}{(b+ct)^{\frac{2}{c}}}\left(z+\frac{ab}{1+c}\right)\frac{b^{\frac{1}{c}}}{\lambda\bar{V}^{3}}
\cdot\bigg(\hat{\alpha}((b+ct)^{\frac{1}{c}}-b^{\frac{1}{c}})
+(\hat{\beta}+\hat{\gamma})\frac{t}{b(b+ct)}
\nonumber
\\
&\qquad
+\frac{\hat{\gamma}}{c}\left(\frac{\log b}{b}-\frac{\log(b+ct)}{b+ct}\right)
+\frac{\hat{\delta}}{1-c}((b+ct)^{\frac{1}{c}-1}-b^{\frac{1}{c}-1})+\frac{\hat{\eta}}{2c}(b^{-2}-(b+ct)^{-2})\bigg).
\nonumber
\end{align}
Here \begin{equation} \label{hatalphaDefn}
\hat{\alpha}=\frac{\alpha}{c+1},
\qquad
\hat{\beta}=-\frac{b^{\frac{1}{c}+1}}{1+c}
-\gamma b^{\frac{1}{c}}
+\frac{\delta}{bc}-\frac{\beta\log b}{c},
\hat{\gamma}=\frac{\beta}{c},
\qquad
\hat{\delta}=\gamma,
\qquad
\hat{\eta}=-\frac{\delta}{c},
\end{equation}
with
\begin{align}\label{alphaDefn}
&\alpha:=-(\psi_{12}-\psi_{22}-\psi_{32})+(\psi_{13}-\psi_{23}-\psi_{33})\frac{a}{(1+c)\lambda\bar{V}^{3}}
-\frac{a\varphi}{c(1+c)\lambda\bar{V}^{3}},
\nonumber
\\
&\beta:=-(\psi_{13}-\psi_{23}-\psi_{33})\left(z+\frac{ab}{1+c}\right)\frac{b^{\frac{1}{c}}}{\lambda\bar{V}^{3}}
+\left(z+\frac{ab}{1+c}\right)\frac{\varphi b^{\frac{1}{c}}}{c\lambda\bar{V}^{3}},
\nonumber
\\
&\gamma:=\frac{ab\varphi}{c(1+c)\lambda\bar{V}^{3}}, \ \ \delta:=-\varphi\left(z+\frac{ab}{1+c}\right)
\frac{b^{\frac{1}{c}+1}}{c\lambda\bar{V}^{3}},
\nonumber \\
&\varphi:=\psi_{11} + \psi_{22} + \psi_{33} - \psi_{12} - \psi_{13} - \psi_{21} - \psi_{31} + \psi_{23}+\psi_{32}.
\end{align}
\end{proposition}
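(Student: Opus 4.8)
The plan is to exploit that $Y$ in Eqn.~\eqref{YEqn} solves a \emph{linear} stochastic differential equation whose coefficients are the \emph{deterministic} fluid limits $Q^{b}$ and $Z$ of Theorem \ref{FluidThm} and whose randomness enters linearly through the centered Gaussian process $\overrightarrow{\mathbf{\Psi}}$. Writing Eqn.~\eqref{YEqn} as $dY(t)=-\phi(t)Y(t)\,dt+dG(t)$ with $\phi(t):=\lambda\bar{V}^{3}/Q^{b}(t)$ deterministic and $G$ a linear functional of $\overrightarrow{\mathbf{\Psi}}$, the solution map is affine in the driving path; since $Y(0)=0$ and $G$ carries no deterministic part, $Y(t)$ is a \emph{linear} image of a centered Gaussian process, hence Gaussian, and $\mathbb{E}[Y(t)]\equiv 0$ (the mean solves the homogeneous linear ODE with zero initial data and $\mathbb{E}[\overrightarrow{\mathbf{\Psi}}]=0$). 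Thus everything reduces to computing $\sigma_{Y}^{2}(t)=\mathrm{Var}(Y(t))$.

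The computation hinges on one simplification: by Eqn.~\eqref{abc} and Eqn.~\eqref{vbva}, $Q^{b}(t)=\lambda\bar{V}^{3}(b+ct)$ for $t<\tau$, so $\phi(t)=1/(b+ct)$ and the integrating factor is $(b+ct)^{1/c}$. Solving with $Y(0)=0$,
\[
Y(t)=(b+ct)^{-1/c}\int_{0}^{t}(b+cs)^{1/c}\,dG(s),\qquad dG(s)=\frac{Z(s)}{\lambda\bar{V}^{3}(b+cs)^{2}}\Xi(s)\,ds-d\Psi^{2}(s)-\frac{Z(s)}{\lambda\bar{V}^{3}(b+cs)}\,d\Psi^{3}(s),
\]
where $\Xi:=\Psi^{1}-\Psi^{2}-\Psi^{3}$. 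Inserting the explicit $Z(s)$ from Eqn.~\eqref{Z} turns every coefficient into a linear combination of $(b+cs)^{1/c}$ and $(b+cs)^{-1}$.

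Next I would express the variance. The only path-dependent piece is $\int_{0}^{t}K(s)\Xi(s)\,ds$ with $K(s)=(b+cs)^{1/c}Z(s)/(\lambda\bar{V}^{3}(b+cs)^{2})$; by the stochastic Fubini theorem it equals $\int_{0}^{t}\tilde{K}(u)\,d\Xi(u)$ with $\tilde{K}(u)=\int_{u}^{t}K(s)\,ds$, so that $(b+ct)^{1/c}Y(t)=\int_{0}^{t}\big(\kappa^{1}\,d\Psi^{1}+\kappa^{2}\,d\Psi^{2}+\kappa^{3}\,d\Psi^{3}\big)$ for deterministic kernels $\kappa^{i}$ that are linear combinations of $(b+cu)^{1/c}$, $(b+cu)^{-1}$ and the $u$-independent endpoint terms $(b+ct)^{1/c}$, $(b+ct)^{-1}$. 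Since $\overrightarrow{\mathbf{\Psi}}$ is the Brownian motion of Theorem \ref{DiffThm} with $d\langle\Psi^{i},\Psi^{j}\rangle_{u}=\psi_{ij}\,du$ as in Eqn.~\eqref{psiij}, the multivariate It\^o isometry gives
\[
\sigma_{Y}^{2}(t)=(b+ct)^{-2/c}\int_{0}^{t}\sum_{i,j\in\{1,2,3\}}\kappa^{i}(u)\kappa^{j}(u)\,\psi_{ij}\,du ,
\]
which (equivalently) could also be obtained by integrating the covariance kernel $\varphi\min(s,r)$ of $\Xi$ directly against $K$.

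Finally I would evaluate these elementary integrals. Each product $\kappa^{i}\kappa^{j}$ is a combination of powers of $(b+cu)$ whose antiderivatives are again powers, except that the exponent $-1$ produces $\log(b+cu)$; multiplying by the prefactor $(b+ct)^{-2/c}$ and by the endpoint terms in the $\kappa^{i}$ generates the full family of power and power-times-logarithm contributions seen in the statement. Here the hypotheses $c\neq-1,0$ keep the generic exponents away from the logarithmic borderline, while the isolated value $c=-2$ (where $2/c=-1$, as signalled by the factor $2+c$ in a denominator) is recovered by continuity, and the cases $c=-1$, $c=0$ invoke the separate branches of Eqn.~\eqref{Z}. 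Grouping the result by powers of $(b+ct)$ and absorbing the $\psi_{ij}$-combinations into $\varphi$ of Eqn.~\eqref{alphaDefn} and the coefficients $\hat{\alpha},\dots,\hat{\eta}$ of Eqn.~\eqref{hatalphaDefn} reproduces the displayed $\sigma_{Y}^{2}(t)$. The main obstacle is exactly this bookkeeping---tracking which power of $(b+cu)$ pairs with which $\psi_{ij}$ and collecting the many resulting terms without error---whereas the probabilistic content (Gaussianity, the isometry, and stochastic Fubini) becomes routine once the integrating factor $(b+ct)^{1/c}$ has been identified.
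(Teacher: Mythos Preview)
Your argument is correct and covers the same probabilistic content as the paper, but the route to the variance is genuinely different. The paper, after writing $Y(t)$ via the integrating factor exactly as you do, applies It\^o's formula to $Y(t)^{2}$, takes expectations, and obtains a first-order linear ODE for $\mathbb{E}[Y(t)^{2}]$ whose inhomogeneity involves the cross-moments $\mathbb{E}[Y(t)\Psi^{i}(t)]$; these cross-moments are computed separately from the integrating-factor representation of $Y$. It is precisely the closed form for $\mathbb{E}[Y(t)(\Psi^{1}-\Psi^{2}-\Psi^{3})(t)]$ that produces the constants $\alpha,\beta,\gamma,\delta$ and then $\hat\alpha,\dots,\hat\eta$ in the statement. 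By contrast, you use stochastic Fubini to rewrite the $\int_{0}^{t}K(s)\Xi(s)\,ds$ term as a Wiener integral, collapse $Y(t)$ to a single stochastic integral with deterministic kernels $\kappa^{1},\kappa^{2},\kappa^{3}$, and read off the variance from the It\^o isometry in one step. Your method is cleaner conceptually and avoids the intermediate second-moment ODE; the paper's method, on the other hand, hands you the final formula already organized in the $\hat\alpha,\dots,\hat\eta$ packaging, since those constants \emph{are} the coefficients of the cross-moment expression. With your approach the remaining work is to regroup the isometry output to match that specific presentation, which is exactly the bookkeeping you flag at the end.
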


\paragraph{Remark.}
Proposition \ref{YVarProp} only gives the formula for the variance of $Y(t)$ for the case $c\neq-1$, $c<0$.
The variance $\sigma_{Y}^{2}(t)$ for the case $c=-1$ can be taken as a continuum limit as $c\rightarrow-1$.

\begin{proof}[Proof of Proposition \ref{YVarProp}]
%Recall that $Y(t)$ satisfies
%\begin{align}\label{YEqn}
%&dY(t)+\frac{Y(t)}{Q^{b}(t)}\lambda\bar{V}^{3}dt
%=-d\mathbf{\Psi}^{2}(t)-\frac{Z(t)}{Q^{b}(t)}d\mathbf{\Psi}^{3}(t)
%\\
%&\qquad\qquad\qquad\qquad
%+\frac{Z(t)(\mathbf{\Psi}^{1}(t)-\mathbf{\Psi}^{2}(t)-\mathbf{\Psi}^{3}(t))}{(Q^{b}(t))^{2}}\lambda\bar{V}^{3}dt,
%\nonumber
%\end{align}
%with $Y(0)=0$.
By multiplying Eqn.~\eqref{YEqn} by the integrating factor $e^{\int_{0}^{t}\frac{\lambda\bar{V}^{3}}{Q^{b}(s)}ds}$
and integrating from $0$ to $t$, and finally dividing the integrating factor, we get
\begin{align}\label{YEqnII}
Y(t)&=-\int_{0}^{t}e^{-\int_{s}^{t}\frac{\lambda\bar{V}^{3}}{Q^{b}(u)}du}d\Psi^{2}(s)
-\int_{0}^{t}e^{-\int_{s}^{t}\frac{\lambda\bar{V}^{3}}{Q^{b}(u)}du}\frac{Z(s)}{Q^{b}(s)}d\Psi^{3}(s)
\\
&\qquad\qquad
+\int_{0}^{t}e^{-\int_{s}^{t}\frac{\lambda\bar{V}^{3}}{Q^{b}(u)}du}
\frac{Z(s)(\Psi^{1}(s)-\Psi^{2}(s)-\Psi^{3}(s))}{(Q^{b}(s))^{2}}\lambda\bar{V}^{3}ds,
\nonumber
\end{align}
which implies that $Y(t)$ is a Gaussian process since $\overrightarrow{\mathbf{\Psi}}$ is a Gaussian process.
Since $\overrightarrow{\mathbf{\Psi}}$ is centered, i.e., with mean zero, it is easy to see that $Y(t)$ is also centered.
Next, let us determine the variance of $Y(t)$.
By It\^{o}'s formula, we have
\begin{align}\label{YSquare}
d(Y(t)^{2})
&=2Y(t)dY(t)+d\langle Y\rangle_{t}
\\
&=d\langle Y\rangle_{t}
-2Y(t)\frac{Y(t)}{Q^{b}(t)}\lambda\bar{V}^{3}dt
-2Y(t)d\Psi^{2}(t)-2Y(t)\frac{Z(t)}{Q^{b}(t)}d\Psi^{3}(t)
\nonumber
\\
&\qquad\qquad
+2Y(t)\frac{Z(t)(\Psi^{1}(t)-\Psi^{2}(t)-\Psi^{3}(t))}{(Q^{b}(t))^{2}}\lambda\bar{V}^{3}dt
\nonumber.
\end{align}
From Eqn.~\eqref{YEqn}, we get
\begin{equation}\label{YQuadratic}
d\langle Y\rangle_{t}
=d\langle\Psi^{2}\rangle_{t}
+\frac{Z(t)^{2}}{Q^{b}(t)^{2}}d\langle\Psi^{3}\rangle_{t}
+\frac{2Z(t)}{Q^{b}(t)}d\langle\Psi^{2},\Psi^{3}\rangle_{t}.
\end{equation}
Plugging
Eqn.~\eqref{YQuadratic} into
Eqn.~\eqref{YSquare}, and taking expectations
on the both hand sides of the equation, we get
\begin{align*}
d\mathbb{E}[Y(t)^{2}]
&=d\langle\Psi^{2}\rangle_{t}
+\frac{Z(t)^{2}}{Q^{b}(t)^{2}}d\langle\Psi^{3}\rangle_{t}
+\frac{2Z(t)}{Q^{b}(t)}d\langle\Psi^{2},\Psi^{3}\rangle_{t}
\\
&\qquad
-2\mathbb{E}[Y(t)^{2}]\frac{1}{Q^{b}(t)}\lambda\bar{V}^{3}dt
\nonumber
\\
&\qquad
+2\frac{Z(t)(\mathbb{E}[Y(t)\Psi^{1}(t)]-\mathbb{E}[Y(t)\Psi^{2}(t)]
-\mathbb{E}[Y(t)\Psi^{3}(t)])}{(Q^{b}(t))^{2}}\lambda\bar{V}^{3}dt.
\nonumber
\end{align*}
By using the integrating factor $e^{\int_{0}^{t}\frac{2\lambda\bar{V}^{3}}{Q^{b}(s)}ds}$,
we conclude that
\begin{align}\label{YVariance}
&\mathbb{E}[Y(t)^{2}]
\\
&=\int_{0}^{t}e^{-\int_{s}^{t}\frac{2\lambda\bar{V}^{3}}{Q^{b}(u)}du}d\langle\Psi^{2}\rangle_{s}
+\int_{0}^{t}e^{-\int_{s}^{t}\frac{2\lambda\bar{V}^{3}}{Q^{b}(u)}du}
\frac{Z(s)^{2}}{Q^{b}(s)^{2}}d\langle\Psi^{3}\rangle_{s}
+\int_{0}^{t}e^{-\int_{s}^{t}\frac{2\lambda\bar{V}^{3}}{Q^{b}(u)}du}
\frac{2Z(s)}{Q^{b}(s)}d\langle\Psi^{2},\Psi^{3}\rangle_{s}
\nonumber
\\
&\qquad
+\int_{0}^{t}e^{-\int_{s}^{t}\frac{2\lambda\bar{V}^{3}}{Q^{b}(u)}du}
2\frac{Z(s)}{(Q^{b}(s))^{2}}\lambda\bar{V}^{3}
(\mathbb{E}[Y(s)\Psi^{1}(s)]-\mathbb{E}[Y(s)\Psi^{2}(s)]
-\mathbb{E}[Y(s)\Psi^{3}(s)])ds,
\nonumber
\end{align}

Let us recall that
\begin{equation*}
\overrightarrow{\mathbf{\Psi}}
=\Sigma\overrightarrow{\mathbf{W}}\circ\lambda\mathbf{e}
-\overrightarrow{\bar{V}}v_{d}\lambda\mathbf{W}_{1}\circ\lambda\mathbf{e}.
\end{equation*}

We also recall that $(\psi_{ij})_{1\leq i,j\leq 6}$ is a symmetric matrix defined as
\begin{equation*}
\psi_{ij}:=\sum_{k=1}^{6}\Sigma_{ik}\Sigma_{jk}\lambda
+\bar{V}^{i}\bar{V}^{j}v_{d}^{2}\lambda^{3},\qquad 1\leq i,j\leq 6.
\end{equation*}

Therefore, we have
\begin{equation}\label{IngredientI}
\langle\Psi^{2}\rangle_{t}
%=\sum_{j=1}^{6}\Sigma_{2j}^{2}\lambda t
%+(\bar{V}^{2}v_{d})^{2}\lambda^{3}t
=\psi_{22}t, \ \ \langle\Psi^{3}\rangle_{t}
%=\sum_{j=1}^{6}\Sigma_{3j}^{2}\lambda t
%+(\bar{V}^{3}v_{d})^{2}\lambda^{3}t
=\psi_{33}t,\ \ \langle\Psi^{2},\Psi^{3}\rangle_{t}
%=\sum_{j=1}^{6}\Sigma_{2j}\Sigma_{3j}\lambda t
%+\bar{V}^{2}\bar{V}^{3}v_{d}^{2}\lambda^{3}t
=\psi_{23}t.
\end{equation}
For any $i,j$ and $t>s$,
\begin{equation}\label{IngredientII}
\mathbb{E}[\Psi^{i}(t)\Psi^{j}(s)]
=\sum_{k=1}^{6}\Sigma_{ik}\Sigma_{jk}\lambda s
+\bar{V}^{i}\bar{V}^{j}v_{d}^{2}\lambda^{3}s=\psi_{ij}s.
\end{equation}

For any $i=1,2,3$, from
Eqn.~\eqref{YEqnII}, we can compute $\mathbb{E}[Y(t)\Psi^{i}(t)]$ as
\begin{align}\label{IngredientIII}
&\mathbb{E}[Y(t)\Psi^{i}(t)]
\\
&=-\int_{0}^{t}e^{-\int_{s}^{t}\frac{\lambda\bar{V}^{3}}{Q^{b}(u)}du}
d\mathbb{E}[\Psi^{i}(t)\Psi^{2}(s)]
-\int_{0}^{t}e^{-\int_{s}^{t}\frac{\lambda\bar{V}^{3}}{Q^{b}(u)}du}\frac{Z(s)}{Q^{b}(s)}
d\mathbb{E}[\Psi^{i}(t)\Psi^{3}(s)]
\nonumber
\\
&\qquad\qquad
+\int_{0}^{t}e^{-\int_{s}^{t}\frac{\lambda\bar{V}^{3}}{Q^{b}(u)}du}
\frac{Z(s)(\mathbb{E}[\Psi^{i}(t)\Psi^{1}(s)]
-\mathbb{E}[\Psi^{i}(t)\Psi^{2}(s)]
-\mathbb{E}[\Psi^{i}(t)\Psi^{3}(s)])}{(Q^{b}(s))^{2}}\lambda\bar{V}^{3}ds.
\nonumber
\end{align}

Next, combining Eqns. \eqref{IngredientI}, \eqref{IngredientII}, \eqref{IngredientIII},
\eqref{Z}, and \eqref{tauz}, and after some calculations, we get
%let us recall that
%$a=\lambda\bar{V}^{2}$, $b=\frac{q^{b}}{\lambda\bar{V}^{3}}$, %$c=\frac{\bar{V}^{1}-\bar{V}^{2}-\bar{V}^{3}}{\bar{V}^{3}}$, and
%\begin{align}
%&Z(t)=-\frac{a}{1+c}(b+ct)+\left(z+\frac{ab}{1+c}\right)\left(\frac{b}{b+ct}\right)^{\frac{1}{c}},
%\\
%&Q^{b}(t)=\lambda\bar{V}^{3}(b+ct).
%\end{align}
%with $a, b, c$ given in Eqn.~\eqref{abc}.
%With some computation, we see
\begin{align}\label{SubstituteI}
&\int_{0}^{t}e^{-\int_{s}^{t}\frac{2\lambda\bar{V}^{3}}{Q^{b}(u)}du}d\langle\Psi^{2}\rangle_{s}
+\int_{0}^{t}e^{-\int_{s}^{t}\frac{2\lambda\bar{V}^{3}}{Q^{b}(u)}du}
\frac{Z(s)^{2}}{Q^{b}(s)^{2}}d\langle\Psi^{3}\rangle_{s}
+\int_{0}^{t}e^{-\int_{s}^{t}\frac{2\lambda\bar{V}^{3}}{Q^{b}(u)}du}
\frac{2Z(s)}{Q^{b}(s)}d\langle\Psi^{2},\Psi^{3}\rangle_{s} \nonumber
\\
&=\lambda\sum_{j=1}^{6}\int_{0}^{t}e^{-\int_{s}^{t}\frac{2\lambda\bar{V}^{3}}{Q^{b}(u)}du}\left(\Sigma_{2j}+\frac{Z(s)}{Q^{b}(s)}\Sigma_{3j}\right)^{2}ds
+\lambda^{3}v_{d}^{2}
\int_{0}^{t}e^{-\int_{s}^{t}\frac{2\lambda\bar{V}^{3}}{Q^{b}(u)}du}\left(\bar{V}^{2}+\frac{Z(s)}{Q^{b}(s)}\bar{V}^{3}\right)^{2}ds
\nonumber
%&=\lambda\sum_{j=1}^{6}\int_{0}^{t}e^{-\int_{s}^{t}\frac{2}{b+cu}du}
%\left[\left(\Sigma_{2j}-\frac{\Sigma_{3j}a}{(1+c)\lambda\bar{V}^{3}}\right)
%+\Sigma_{3j}\left(z+\frac{ab}{1+c}\right)\frac{b^{\frac{1}{c}}}{\lambda\bar{V}^{3}}
%(b+cs)^{-\frac{1}{c}-1}\right]^{2}ds
%\nonumber
%\\
%&\qquad
%+\frac{\lambda^{3}v_{d}^{2}}{6}\sum_{j=1}^{6}\int_{0}^{t}e^{-\int_{s}^{t}\frac{2}{b+cu}du}
%\left[\left(\frac{c}{1+c}\bar{V}^{2}\right)
%+\bar{V}^{3}\left(z+\frac{ab}{1+c}\right)\frac{b^{\frac{1}{c}}}{\lambda\bar{V}^{3}}
%(b+cs)^{-\frac{1}{c}-1}\right]^{2}ds
%\nonumber
%\\
%&=\frac{1}{(b+ct)^{\frac{2}{c}}}\sum_{j=1}^{6}
%\left[\lambda\left(\Sigma_{2j}-\frac{\Sigma_{3j}a}{(1+c)\lambda\bar{V}^{3}}\right)^{2}
%+\frac{\lambda^{3}v_{d}^{2}}{6}\left(\frac{c}{1+c}\bar{V}^{2}\right)^{2}\right]
%\int_{0}^{t}(b+cs)^{\frac{2}{c}}ds
%\nonumber
%\\
%&\qquad
%+\frac{1}{(b+ct)^{\frac{2}{c}}}\sum_{j=1}^{6}
%2\left[\lambda\left(\Sigma_{2j}-\frac{\Sigma_{3j}a}{(1+c)\lambda\bar{V}^{3}}\right)\Sigma_{3j}+\frac{\lambda^{3}v_{d}^{2}}{6}\frac{c}{1+c}\bar{V}^{2}\bar{V}^{3}\right]
%\nonumber
%\\
%&\qquad\qquad\cdot
%\left(z+\frac{ab}{1+c}\right)
%\int_{0}^{t}\frac{b^{\frac{1}{c}}}{\lambda\bar{V}^{3}}(b+cs)^{\frac{1}{c}-1}ds
%\nonumber
%\\
%&\qquad\qquad
%+\frac{1}{(b+ct)^{\frac{2}{c}}}\sum_{j=1}^{6}\left[\lambda(\Sigma_{3j})^{2}+\frac{\lambda^{3}v_{d}^{2}}{6}(\bar{V}^{3})^{2}\right]
%\left(z+\frac{ab}{1+c}\right)^{2}
%\int_{0}^{t}\frac{b^{\frac{2}{c}}}{\lambda^{2}(\bar{V}^{3})^{2}}(b+cs)^{-2}ds
%\nonumber
\\
&=\frac{(b+ct)^{\frac{2}{c}+1}-b^{\frac{2}{c}+1}}{(2+c)(b+ct)^{\frac{2}{c}}}\sum_{j=1}^{6}
\left(\lambda\left(\Sigma_{2j}-\frac{\Sigma_{3j}a}{(1+c)\lambda\bar{V}^{3}}\right)^{2}
+\frac{\lambda^{3}v_{d}^{2}}{6}\left(\frac{c}{1+c}\bar{V}^{2}\right)^{2}\right)
\nonumber
\\
&\qquad
+\frac{b^{\frac{1}{c}}}{\lambda\bar{V}^{3}}\frac{(b+ct)^{\frac{1}{c}}-b^{\frac{1}{c}}}{(b+ct)^{\frac{2}{c}}}
\left(z+\frac{ab}{1+c}\right)\sum_{j=1}^{6}
2\left(\lambda\left(\Sigma_{2j}-\frac{\Sigma_{3j}a}{(1+c)\lambda\bar{V}^{3}}\right)\Sigma_{3j}+\frac{\lambda^{3}v_{d}^{2}}{6}\frac{c}{1+c}\bar{V}^{2}\bar{V}^{3}\right)
\nonumber
\\
&\qquad\qquad
+\frac{t}{(b+ct)^{\frac{2}{c}+1}}\frac{b^{\frac{2}{c}-1}}{\lambda^{2}(\bar{V}^{3})^{2}}
\sum_{j=1}^{6}\left(\lambda(\Sigma_{3j})^{2}+\frac{\lambda^{3}v_{d}^{2}}{6}(\bar{V}^{3})^{2}\right)
\left(z+\frac{ab}{1+c}\right)^{2},
\end{align}
and
\begin{align}
&\mathbb{E}[Y(t)(\Psi^{1}(t)-\Psi^{2}(t)-\Psi^{3}(t))] \nonumber
\\
&=-(\psi_{12}-\psi_{22}-\psi_{32})\int_{0}^{t}e^{-\int_{s}^{t}\frac{\lambda\bar{V}^{3}}{Q^{b}(u)}du}ds
-(\psi_{13}-\psi_{23}-\psi_{33})\int_{0}^{t}e^{-\int_{s}^{t}\frac{\lambda\bar{V}^{3}}{Q^{b}(u)}du}\frac{Z(s)}{Q^{b}(s)}ds
\nonumber
\\
&\qquad\qquad
+(\psi_{11} + \psi_{22} + \psi_{33} - \psi_{12} - \psi_{13} - \psi_{21} - \psi_{31} + \psi_{23}+\psi_{32})\int_{0}^{t}e^{-\int_{s}^{t}\frac{\lambda\bar{V}^{3}}{Q^{b}(u)}du}
\frac{Z(s)s}{(Q^{b}(s))^{2}}\lambda\bar{V}^{3}ds
\nonumber
\\
%&=-(\psi_{12}-\psi_{22}-\psi_{32})\int_{0}^{t}\left(\frac{b+cs}{b+ct}\right)^{\frac{1}{c}}ds
%\nonumber
%\\
%&\qquad
%-(\psi_{13}-\psi_{23}-\psi_{33})\int_{0}^{t}\left(\frac{b+cs}{b+ct}\right)^{\frac{1}{c}}
%\left[-\frac{a}{(1+c)\lambda\bar{V}^{3}}+\left(z+\frac{ab}{1+c}\right)\frac{b^{\frac{1}{c}}}{\lambda\bar{V}^{3}}
%(b+cs)^{-\frac{1}{c}-1}\right]ds
%\nonumber
%\\
%&
%+(\psi_{11} + \psi_{22} + \psi_{33} - \psi_{12} - \psi_{13} - \psi_{21} - \psi_{31} + %\psi_{23}+\psi_{32})
%\nonumber
%\\
%&\qquad\qquad\qquad\qquad
%\cdot\int_{0}^{t}\left(\frac{b+cs}{b+ct}\right)^{\frac{1}{c}}
%\left[-\frac{a}{(1+c)\lambda\bar{V}^{3}}+\left(z+\frac{ab}{1+c}\right)
%\frac{b^{\frac{1}{c}}}{\lambda\bar{V}^{3}}
%(b+cs)^{-\frac{1}{c}-1}
%\right]\frac{s}{b+cs}ds
%\nonumber
%\\
%&=\frac{\alpha}{(b+ct)^{\frac{1}{c}}}\int_{0}^{t}(b+cs)^{\frac{1}{c}}ds
%+\frac{\beta}{(b+ct)^{\frac{1}{c}}}\int_{0}^{t}(b+cs)^{-1}ds
%\nonumber
%\\
%&\qquad
%+\frac{\gamma}{(b+ct)^{\frac{1}{c}}}\int_{0}^{t}(b+cs)^{\frac{1}{c}-1}ds
%+\frac{\delta}{(b+ct)^{\frac{1}{c}}}\int_{0}^{t}(b+cs)^{-2}ds
%\nonumber
%\\
%&=\frac{\alpha}{(b+ct)^{\frac{1}{c}}}\frac{1}{c+1}\left[(b+ct)^{\frac{1}{c}+1}-b^{\frac{1}{c}+1}\right]
%+\frac{\beta}{(b+ct)^{\frac{1}{c}}}\frac{1}{c}\log\left(1+\frac{ct}{b}\right)
%\nonumber
%\\
%&\qquad
%+\frac{\gamma}{(b+ct)^{\frac{1}{c}}}\left[(b+ct)^{\frac{1}{c}}-b^{\frac{1}{c}}\right]
%+\frac{\delta}{(b+ct)^{\frac{1}{c}}}\frac{t}{b(b+ct)}
%\nonumber
%\\
&=\hat{\alpha}(b+ct)+\hat{\beta}(b+ct)^{-\frac{1}{c}}
+\hat{\gamma}\frac{\log(b+ct)}{(b+ct)^{\frac{1}{c}}}
+\hat{\delta}
+\hat{\eta}(b+ct)^{-\frac{1}{c}-1},
\end{align}
where $\alpha,\beta,\gamma,\delta$ are defined in
Eqn.~\eqref{alphaDefn} and $\hat{\alpha},\hat{\beta},\hat{\gamma},\hat{\delta},\hat{\eta}$
are defined in
Eqn.~\eqref{hatalphaDefn}. Therefore,
\begin{align}\label{SubstituteII}
&\int_{0}^{t}e^{-\int_{s}^{t}\frac{2\lambda\bar{V}^{3}}{Q^{b}(u)}du}
\frac{2Z(s)}{(Q^{b}(s))^{2}}\lambda\bar{V}^{3}
\mathbb{E}[Y(s)(\Psi^{1}(s)-\Psi^{2}(s)-\Psi^{3}(s))]ds \nonumber
\\
&=\frac{2}{(b+ct)^{\frac{2}{c}}}\int_{0}^{t}
(b+cs)^{\frac{2}{c}-1}
\left(-\frac{a}{(1+c)\lambda\bar{V}^{3}}
+\left(z+\frac{ab}{1+c}\right)\frac{b^{\frac{1}{c}}}{\lambda\bar{V}^{3}}(b+cs)^{-\frac{1}{c}-1}\right)
\nonumber
\\
&\qquad\qquad\qquad
\cdot\left(\hat{\alpha}(b+cs)+\hat{\beta}(b+cs)^{-\frac{1}{c}}
+\hat{\gamma}\frac{\log(b+cs)}{(b+cs)^{\frac{1}{c}}}
+\hat{\delta}
+\hat{\eta}(b+cs)^{-\frac{1}{c}-1}\right)ds
%&=-\frac{2a}{(b+ct)^{\frac{2}{c}}(1+c)\lambda\bar{V}^{3}}
%\nonumber
%\\
%&\qquad\cdot
%\int_{0}^{t}\left[\hat{\alpha}(b+cs)^{\frac{2}{c}}
%+\hat{\beta}(b+cs)^{\frac{1}{c}-1}
%+\hat{\gamma}\frac{\log(b+cs)}{(b+cs)^{1-\frac{1}{c}}}
%+\hat{\delta}(b+cs)^{\frac{2}{c}-1}
%+\hat{\eta}(b+cs)^{\frac{1}{c}-2}\right]ds
%\nonumber
%\\
%&+\frac{2}{(b+ct)^{\frac{2}{c}}}\left(z+\frac{ab}{1+c}\right)\frac{b^{\frac{1}{c}}}{\lambda\bar{V}^{3}}
%\nonumber
%\\
%&\qquad
%\cdot\int_{0}^{t}\left[\hat{\alpha}(b+cs)^{\frac{1}{c}-1}
%+\hat{\beta}(b+cs)^{-2}+\hat{\gamma}\frac{\log(b+cs)}{(b+cs)^{2}}
%+\hat{\delta}(b+cs)^{\frac{1}{c}-2}+\hat{\eta}(b+cs)^{-3}\right]ds.
%\nonumber
%\\
%&=-\frac{2a}{(b+ct)^{\frac{2}{c}}(1+c)\lambda\bar{V}^{3}}
%\cdot
%\bigg[\hat{\alpha}\frac{(b+ct)^{\frac{2}{c}+1}-b^{\frac{2}{c}+1}}{2+c}
%+[\hat{\beta}-\hat{\gamma}c][(b+ct)^{\frac{1}{c}}-b^{\frac{1}{c}}]
%\nonumber
%\\
%&\qquad
%+\hat{\gamma}\left[(b+ct)^{\frac{1}{c}}\log(b+ct)-b^{\frac{1}{c}}\log(b)\right]
%+\frac{\hat{\delta}}{2}[(b+ct)^{\frac{2}{c}}-b^{\frac{2}{c}}]
%+\frac{\hat{\eta}}{1-c}[(b+ct)^{\frac{1}{c}-1}-b^{\frac{1}{c}-1}]\bigg]
%\nonumber
%\\
%&+\frac{2}{(b+ct)^{\frac{2}{c}}}\left(z+\frac{ab}{1+c}\right)\frac{b^{\frac{1}{c}}}{\lambda\bar{V}^{3}}
%\cdot\bigg[\hat{\alpha}[(b+ct)^{\frac{1}{c}}-b^{\frac{1}{c}}]
%+[\hat{\beta}+\hat{\gamma}]\frac{t}{b(b+ct)}
%\nonumber
%\\
%&\qquad
%+\frac{\hat{\gamma}}{c}\left[\frac{\log b}{b}-\frac{\log(b+ct)}{b+ct}\right]
%+\frac{\hat{\delta}}{1-c}[(b+ct)^{\frac{1}{c}-1}-b^{\frac{1}{c}-1}]+\frac{\hat{\eta}}{2c}[b^{-2}-(b+ct)^{-2}]\bigg].
%\nonumber
\end{align}
Hence, we get the desired result by substituting
Eqn.~\eqref{SubstituteI} and
Eqn.~\eqref{SubstituteII} into
Eqn.~\eqref{YVariance}.
\end{proof}

\end{document}